\documentclass[12pt,a4paper]{article}
\usepackage[utf8]{inputenc}
\usepackage{lmodern}
\usepackage[T1]{fontenc}
\usepackage[english]{babel}
\usepackage{ifpdf}
\usepackage[usenames,dvipsnames]{xcolor}
\usepackage{graphicx}
\usepackage[shortlabels]{enumitem}
\usepackage[numbers,sort]{natbib}
\usepackage[font=footnotesize,width=\textwidth]{caption}
\usepackage{sectsty}
\usepackage{hyperref}
\usepackage{uri}
\usepackage[top=1.8cm, bottom=2.8cm]{geometry}
\usepackage[affil-it]{authblk}
\usepackage[parfill]{parskip}
\usepackage[setbuildercolon]{matmatix}


\allowdisplaybreaks

\ifpdf
  \hypersetup{
    pdftitle={The B2 index of galled trees}
  }
\fi

\hypersetup{colorlinks, allcolors = {MidnightBlue}}

\allsectionsfont{\boldmath}

\newenvironment{mathlist}
{\begin{enumerate}[label={\upshape(\roman*)}, align=left, widest=iii, leftmargin=*]}
{\end{enumerate}\ignorespacesafterend}


\makeatletter
\renewcommand*{\defas}{%
\mathrel{\rlap{\raisebox{0.3ex}{$\m@th\cdot$}}\raisebox{-0.3ex}{$\m@th\cdot$}}=}
\newcommand*{\asdef}{%
\mathrel{{=\llap{\raisebox{0.3ex}{$\m@th\cdot$}}\llap{\raisebox{-0.3ex}{$\m@th\cdot$}}}}}
\makeatother

\makeatletter
\newtheorem*{rep@theorem}{\rep@title}
\newcommand{\newreptheorem}[2]{%
\newenvironment{rep#1}[1]{%
 \def\rep@title{#2 \ref*{##1}}%
 \begin{rep@theorem}}%
 {\end{rep@theorem}}}
\makeatother

\newtheorem{theorem}{Theorem}[section]
\newreptheorem{theorem}{Theorem}
\newtheorem{proposition}[theorem]{Proposition}
\newreptheorem{proposition}{Proposition}
\newtheorem{lemma}[theorem]{Lemma}
\newreptheorem{lemma}{Lemma}
\newtheorem{corollary}[theorem]{Corollary}
\theoremstyle{definition}

\newenvironment{definition}
  {\pushQED{\qed}\definitionX}
  {\popQED\enddefinitionX}

\newenvironment{remark}
  {\pushQED{\qed}\remarkX}
  {\popQED\endremarkX}

\newcommand{\ancestor}{\preccurlyeq}
\newcommand{\finer}{\preccurlyeq}

\newcommand{\codir}{\mathrel{\:\,\mathclap{\downdownarrows} \mathclap{\raisebox{0.2ex}{$\equiv$}}\:\,}}
\newcommand{\stub}{\sqsubset}
\newcommand{\phylnet}{\mathds{G}}
\newcommand{\Head}[1]{\Gamma_{\!#1}}

\DeclareMathOperator{\limup}{lim\!\uparrow\,}



\title{\texorpdfstring{\vspace{-1ex}}{}The $B_2$ index of galled trees}

\begin{document}

\author[1,2]{François Bienvenu}
\author[1]{Jean-Jil Duchamps}
\author[3]{\break Michael Fuchs}
\author[4]{Tsan-Cheng Yu}

\renewcommand\Affilfont{\itshape\small\renewcommand{\baselinestretch}{1}}

\affil[1]{Univ.\ Marie et Louis Pasteur, CNRS, LmB (UMR 6623), F-25000 Besançon, France.\vspace{1ex}}
\affil[2]{Institute for Theoretical Studies, ETH Zürich,
8092 Zürich, Switzerland.\vspace{1ex}}
\affil[3]{Department of Mathematical Sciences, National Chengchi University,\linebreak
Taipei 116, Taiwan.}
\affil[4]{Department of Mathematics, Fu Jen Catholic University,\linebreak
New Taipei City 242, Taiwan.}

\maketitle

\begin{abstract}
In recent years, there has been an effort to extend the classical notion of
phylogenetic balance, originally defined in the context of trees, to networks.
One of the most natural ways to do this is with the so-called $B_2$
index. In this paper, we study the $B_2$~index for a prominent class of
phylogenetic networks: galled trees.  We show that the $B_2$ index of a
uniform leaf-labeled galled tree converges in distribution as the network
becomes large. We characterize the corresponding limiting distribution, and provide
a way to compute its moments. This is the first time that a balance
index has been studied to this level of detail for a random
phylogenetic network.

One specificity of this work is that we use two different and independent
approaches, each with its advantages: analytic combinatorics, and local
limits. The analytic combinatorics approach is more direct, as it relies on
standard tools; but it involves slightly more complex calculations. Because
it has not previously been used to study such questions, the local limit
approach requires developing an extensive framework beforehand; however, this
framework is interesting in itself and can be used to tackle other similar
problems.
\end{abstract}

{\small
\textbf{Keywords and phrases:} phylogenetic networks, Galton--Watson trees, local limit, analytic combinatorics.\\
\textbf{MSC 2020 classification:} 05C81, 92B10 (Primary) 05A15, 94A17, 92D15 (Secondary).
}

\setcounter{tocdepth}{2}
\tableofcontents

\section{Introduction} \label{secIntro}

\subsection{Biological context and main result} \label{secBiology}

Indices of phylogenetic balance -- \emph{balance indices} for short --
are summary statistics that quantify the idea that some trees have more
symmetries than others. They play a central role in theoretical evolutionary
biology, and are also widely used in practical applications.  There is a great
diversity of balance indices, and their mathematical properties are for the
most part well-understood; see \cite{fischer2023tree} for a comprehensive
survey.

However, with the growing recognition of the importance of reticulate evolution,
phylogenetic networks are increasingly used to describe phylogenies.
Extending the definition and study of balance indices to networks is thus
becoming an important topic in mathematical phylogenetics, and very
recently some authors have started studying extensions of existing balance
indices in networks \cite{zhang2022sackin} and designing new balance indices
specifically with networks in mind~\cite{knuver2023weighted}.

In that context, one balance index known as the $B_2$ index stands out.
Introduced in the context of trees in \cite{shao1990tree}, this balance index
had gone largely unnoticed. However, as recently pointed out
in~\cite{bienvenu2021revisiting}, where it was studied extensively, its
formal definition and its interpretation as a balance index are unchanged
in the context of networks; we come back to this in Section~\ref{secSetting}
below.

Although it has been studied for various models of random trees,
the distribution of the $B_2$ index has not been studied for random
phylogenetic networks. In this paper, we do so for a natural and well-studied
model: uniform leaf-labeled galled trees (note that there is some variation
regarding the term \emph{galled tree} -- see \cite{cardona2010comparison} for a
detailed discussion -- and that our use of the term follows the prevalent
usage~\cite{huson2010phylogenetic} and refers to what are also known as
\emph{level-1 networks}).

\enlargethispage{3ex}

We show that the $B_2$ index of a galled tree $G_n$ sampled uniformly
at random among galled trees with $n$ labeled leaves
converges in distribution as $n$ goes to infinity, and we give a
characterization of the limiting distribution. We also show the
convergence of all $p$-th moments and that, in particular,
$\ExpecBrackets{B_2(G_n)}$ converges to a constant $c = 2.707911858984\ldots$

This is the first result of this type for the balance index
of a random phylogenetic network: the only comparable result in the
current literature is that the expected value of an extension of the Sackin index of a
uniform simplex network with $n$ labeled leaves is asymptotically
$\Theta(n^{7/4})$, as proved in~\cite{zhang2022sackin}. However,
beyond the fact that this result is less precise than ours, it is not clear
that the extension of the Sackin index considered in that paper is a measure of
phylogenetic balance.
More results concerning the distribution of balance indices in phylogenetic
networks will be published in a forthcoming paper by
one of the authors \cite{fuchs2024sackin}. In this paper,
the limiting distribution (after scaling) of various extensions of the Sackin
index is studied for several models of phylogenetic networks -- including the
galled trees considered here -- and is shown to be the Airy distribution.

A notable feature of this paper is that we use two very different -- and
completely independent -- approaches to study the asymptotic
behavior of $B_2(G_n)$:
\begin{itemize}
  \item a combinatorial approach, based on analytic combinatorics;
  \item a probabilistic approach, based on modern tools from the study of
    branching processes: local limits of conditioned Galton--Watson trees.
\end{itemize}

The point of presenting these two approaches in the same paper, even though it
means proving the main results twice, is that each approach has its advantages and
disadvantages. The combinatorial approach is more
direct, because it relies on a well-established framework (namely, the method of {\it singularity analysis}; see \cite{flajolet2009analytic}). However, it involves
calculations that are very specific to the problem at hand; and while it is possible to characterize the limiting distribution
of $B_2(G_n)$ through its moments, this requires increasingly
strenuous calculations as the moments get higher.
The probabilistic approach, on the other hand, requires establishing a
substantial number of technical preliminaries. This is because, although local
limits have become a standard tool in probability theory, there are
difficulties regarding the definition of $B_2$ for infinite phylogenetic
networks and its continuity for the local topology. This preliminary work is
thus needed to justify the rigorousness of the approach; but the numerical
calculations themselves are then fairly simple, and immediately yield a simple
recursive characterization of the limiting distribution of $B_2(G_n)$. Moreover,
now that the technicalities have been worked out, it is straightforward to
apply this approach to any other random phylogenetic network with a suitable
local limit.

In the rest of this section, we formally define the $B_2$ index
and the class of galled trees. We then split the paper into two independent
sections: Section~\ref{secAnalyticCombinatorics} details the combinatorial
approach, and Section~\ref{secLocalLimit} the probabilistic one.
In order to emphasize the practicality of the latter as a tool to
perform concrete computations, only the main ideas of the framework and its
application to galled trees are presented in the main text; its technical
justification is detailed in the Appendix.

\subsection{Setting and notation} \label{secSetting}

Let us start by specifying what we mean by \emph{phylogenetic network}.
In order to prepare the ground for the local limit approach, we include
infinite graphs in our definition.

\break

Recall that a DAG is a directed graph with no directed cycles.
It is:
\begin{itemize}
  \item \emph{countable} if its vertex set is finite or countably infinite;
  \item \emph{locally finite} if every vertex has finite in-degree and
    out-degree;
  \item \emph{rooted} if there is a unique vertex with in-degree 0, called the
    \emph{root}, and every vertex $v$ is reachable from the root (that is,
    there exists a directed path going from the root to $v$).
\end{itemize}

\begin{definition} \label{DefPhyNet}
A \emph{phylogenetic network} is a countable, locally finite rooted DAG.
Its vertices with out-degree~0, if any, are called the \emph{leaves}.
\end{definition}

Next, we recall the definition of the $B_2$ index of a \emph{finite}
phylogenetic network, as given in~\cite{bienvenu2021revisiting}.
Extending this definition to infinite phylogenetic networks will be one of the
main challenges for the local limit approach.

Let $G$ be a finite phylogenetic network, and let $X = (X_t)_{t\geq 0}$ be the
simple random walk on $G$, started from the root and constrained to follow the
direction of the edges: at each step, one of the outgoing edges of the current
vertex is chosen uniformly at random, until a leaf is reached. In the rest of
this document, we will refer to $X$ simply as \emph{the directed random walk
on~$G$}.  For each vertex $v \in G$, let $p_v$ denote the probability that $X$
visits~$v$. In a finite phylogenetic network, $(p_\ell)_{\ell \in \mathcal{L}}$
is a probability distribution on~$\mathcal{L}$, the leaf set of~$G$.

\begin{definition} \label{defB2Finite}
Let $G$ be a finite phylogenetic network. The \emph{$B_2$ index of $G$} 
is the Shannon entropy of the probability distribution
$(p_\ell)_{\ell \in \mathcal{L}}$ induced on the leaves of $G$ by the directed
random walk -- that is, 
\[
  B_2(G) \;=\;
  - \sum_{\ell \in \mathcal{L}} p_\ell \log_2 p_\ell \, .\qedhere
\]
\end{definition}

We only recall, without proof, an elementary but fundamental property of
the $B_2$ index that will be of constant use in this paper. We refer the reader
to \cite{bienvenu2021revisiting} for a more detailed presentation.

\begin{proposition}[grafting property] \label{propGraftingFinite}
Let $G_1$ and $G_2$ be two finite phylogenetic networks, and let $G$ be the
phylogenetic network obtained by identifying a leaf $\ell \in G_1$ with the
root of $G_2$. Then,
\[
  B_2(G) \;=\; B_2(G_1) \;+\; p_\ell\, B_2(G_2) \,, 
\]
where $p_\ell$ denotes the probability that the directed random walk on $G_1$
ends in~$\ell$.
\end{proposition}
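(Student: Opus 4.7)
The plan is to verify the identity by directly unpacking the definition of $B_2$ in $G$ and exploiting the way that visit probabilities decompose after grafting. Let $p_v^{(1)}$ denote the visit probability of $v$ by the directed walk on $G_1$, and let $q_w$ denote the visit probability of $w$ by the directed walk on $G_2$ started from its root. I would first establish the basic identity for visit probabilities: for every vertex $v$ of $G_1$ other than $\ell$, its visit probability in $G$ equals $p_v^{(1)}$; at the identified vertex the visit probability is $p_\ell$; and for every vertex $w$ of $G_2$, the visit probability in $G$ equals $p_\ell\, q_w$. This is a short induction along the vertices of $G_2$ in topological order, using the recursion $p_v = \sum_{u \to v} p_u / \mathrm{outdeg}(u)$, together with the fact that the only edges connecting $G_1$ and the strict interior of $G_2$ in $G$ are precisely the outgoing edges of the root of $G_2$.

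Once the visit probabilities are in hand, the leaf set of $G$ splits cleanly as $(\mathcal{L}(G_1) \setminus \{\ell\}) \sqcup \mathcal{L}(G_2)$, so the definition of $B_2(G)$ gives
\[
B_2(G) \;=\; -\!\!\!\sum_{\ell' \in \mathcal{L}(G_1)\setminus\{\ell\}} p_{\ell'}^{(1)} \log_2 p_{\ell'}^{(1)} \;-\; \sum_{w \in \mathcal{L}(G_2)} p_\ell\, q_w \log_2(p_\ell\, q_w).
\]
The logarithm in the second sum splits as $\log_2 p_\ell + \log_2 q_w$; using $\sum_{w \in \mathcal{L}(G_2)} q_w = 1$, that sum simplifies to $-p_\ell \log_2 p_\ell + p_\ell B_2(G_2)$. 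Reassembling the two pieces yields $B_2(G_1) + p_\ell B_2(G_2)$, since the missing term $-p_\ell \log_2 p_\ell$ in the first sum is exactly what is needed to complete $B_2(G_1)$.

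I do not expect any step to be a genuine obstacle here; the main subtlety is really just the first step, namely being careful about the fact that visit probabilities in a DAG aggregate contributions from multiple directed paths, so that the claim $p_w^{(G)} = p_\ell\, q_w$ for $w \in G_2$ requires confirming that no new directed paths from the root of $G_1$ to an interior vertex of $G_2$ are created by the identification (they all necessarily pass through $\ell$). A small edge case worth noting is when $G_2$ consists of the root alone: then the root of $G_2$ remains a leaf in $G$, the two networks $G$ and $G_1$ coincide, and $B_2(G_2) = 0$, so the identity holds trivially. With these checks the computation above is essentially a one-line algebraic rearrangement.
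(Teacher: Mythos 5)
Your proof is correct, and it is the standard direct argument: the decomposition of the visit probabilities ($p_w^{(G)} = p_\ell\, q_w$ for $w \in G_2$, justified by the fact that $\ell$ is a leaf of $G_1$ and hence every path into the interior of $G_2$ must pass through it), followed by splitting $\log_2(p_\ell q_w)$ and using $\sum_w q_w = 1$. The paper itself states this proposition without proof (deferring to \cite{bienvenu2021revisiting}); the closest thing to a proof in the paper is the entropy fragmentation property (Proposition~\ref{propGraftingEntropy}), from which the infinite-network version Proposition~\ref{propGraftingInfinite} is derived, and that lemma encapsulates exactly the same log-splitting computation you carry out concretely, so your route is essentially the intended one.
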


Let us close this section by formally defining the class of galled trees
that we consider. Since in this paper we work in the context of DAGs, we use
the term \emph{connectivity} to refer to the notion of \emph{weak
connectivity} (that is, connectivity of the underlying undirected graph):
there is no risk of confusion with the notion of strong connectivity,
as this notion makes little sense for DAGs (whose strongly connected components
are reduced to single vertices).

Recall that $v$ is a \emph{cut-vertex} of $G$ if removing $v$
increases the number of connected components of $G$. A graph is said to be
\emph{biconnected} if it has no cut-vertices, and a \emph{biconnected
component} is a maximal biconnected subgraph. A biconnected component
consisting of a single vertex is said to be trivial.

Finally, a phylogenetic network $G$ is \emph{binary} if it is either reduced to a single
vertex, or if the root has out-degree~2; the leaves have in-degree~1; and every
other vertex has either in-degree~1 and out-degree~2 (\emph{tree-vertices}) or
in-degree 2 and out-degree~1 (\emph{reticulations}).

\begin{definition} \label{defGalledTrees}
A \emph{galled tree}, also known as a \emph{level-1 network}, is a
binary phylogenetic network with no multi-edge where each biconnected component
has at most one reticulation.
\end{definition}

The non-trivial biconnected components of a galled tree are known as its
\emph{galls}. They correspond to undirected cycles made of two directed paths
connecting a vertex, known as the \emph{root of the gall}, to a reticulation.
Note that the galls of a galled tree are, by definition, vertex-disjoint.

\begin{figure}[t]
  \centering
  \captionsetup{width=.8\linewidth}
  \includegraphics[width=0.4\linewidth]{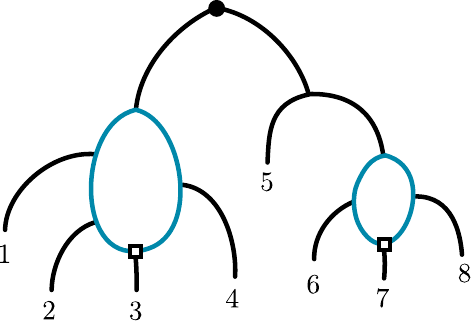}
  \caption{An example of a galled network with $n = 8$ labeled leaves and
  two galls.  The root is the black circle on top, and each point where three
  lines meet is an internal vertex. The direction of the edges is not
  indicated: they are always pointing downwards. The two galls are highlighted
  in blue, and the corresponding reticulations are denoted by white squares.
  The probability that the directed random walk ends in leaf~1 is $p_1 = 1/8$;
  the probability that it ends in leaf~$3$ is $p_3 = 1/16 + 1/8$.}
  \label{figIntroduction}
\end{figure}

Our definitions so far make no mention of labelings: they apply to unlabeled,
vertex-labeled and leaf-labeled networks alike. However, the labeling plays an
important role in the definition of our model of random phylogenetic network.

Let $\mathscr{G}_n$ denote the combinatorial class of galled trees with $n$
leaves labeled with the integers from $1$ to $n$. Note that
$\mathscr{G}_n$ is finite: the sequence $(\Card{\mathscr{G}_n})_{n\geq 1}$ is
registered as \href{https://oeis.org/A328122}{A328122} in
the On-Line Encyclopedia of Integer Sequences~\cite{OEIS},  and has
been studied extensively in \cite{bouvel2020counting},
where an explicit formula was obtained using analytic combinatorics.
Recently, Stufler~\cite{stufler2022branching} studied $\mathscr{G}_n$
(and, more generally, level-$k$ networks) using branching
processes; his method is the starting point of our probabilistic approach,
and we will come back to it later.

Since $\mathscr{G}_n$ is finite, we can endow it with the uniform distribution: 
we write $G_n \sim \mathrm{Unif}(\mathscr{G}_n)$ to
denote a phylogenetic network $G_n$ sampled uniformly at random
from~$\mathscr{G}_n$, and we refer to $G_n$ as a \emph{uniform galled tree with
$n$ labeled leaves}. 
Our main object of study in the rest of this document is the asymptotic
behavior of the random variable $B_2(G_n)$.

\section{Analytic combinatorics approach} \label{secAnalyticCombinatorics} 

In this section, we derive the asymptotics of moments of $B_2(G_n)$. To do so, we use a generating function approach that is based on the recursive decomposition of galled trees according to whether the root is in a gall or not; see Figure~\ref{gt-decomp}. This decomposition yields explicit expressions for the generating functions of moments, which can then be analysed with the method of singularity analysis from Analytic Combinatorics; see Chapter VI in~\cite{flajolet2009analytic}. The latter is based on so-called transfer theorems which say, in a nutshell, that if a generating function $f(z)$ is analytic in a suitable large domain $\Delta$ with a singularity $\xi\in{\mathbb R}^{+}$ at its boundary and $f(z)\sim c(1-z/\xi)^{-\alpha}$, as $z\rightarrow\xi$ in $\Delta$, then $[z^n]f(z)\sim cn^{\alpha-1}\xi^{-n}/\Gamma(\alpha)$, as $n\rightarrow\infty$, where $[z^n]f(z)$ denotes the $n$-th Taylor coefficient of $f(z)$ at $z=0$. More precisely, we use domains $\Delta$ of the form:
\[
\Delta \defas \Delta(r,\phi_0)=\{z\ :\ \vert z\vert\leq r,\ z\ne\xi,\ \vert\arg(z-\xi)\vert\geq \phi_0\}
\]
with $r>\vert\xi\vert$ and $0<\phi_0<\pi/2$. A function which is analytic in $\Delta$ (for some $r$ and $\phi_0$) is subsequently called {\it $\Delta$-analytic at $\xi$}.

\begin{figure}
  \centering
  \captionsetup{width=.9\linewidth}
  \includegraphics[width=0.8\linewidth]{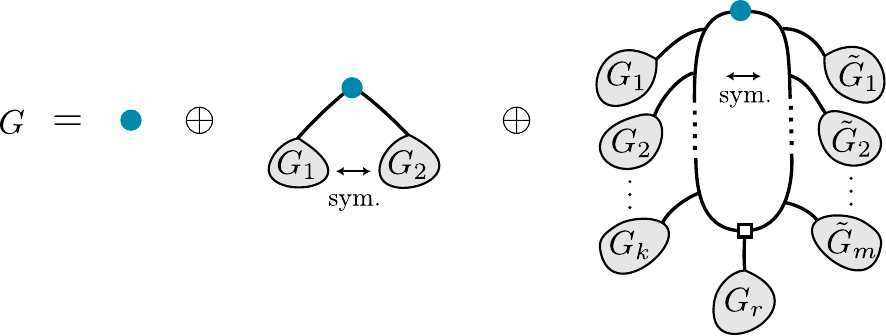}
  \caption{The recursive decomposition of a galled tree $G$. $G$ is either a
  leaf (first case after the equality sign) or a root which is not contained in
  a gall, with two galled trees $G_1$ and $G_2$ attached to it (second case);
  or the root itself is in a gall (third case; both sides of the gall are not
  allowed to be empty at the same time, i.e.\ $k+m\geq 1$).  Note that in the
  second and third cases, there is a symmetry: it does not
  matter which side of the gall is drawn to the left \emph{vs}  to the right.
  }\label{gt-decomp}
\end{figure}

We first need to recall some results from \cite{bouvel2020counting}.
Set $\mathscr{G}\defas\bigcup_{n\geq 0}\mathscr{G}_n$ which is the set of all galled trees. Moreover, set
\[
  {\rm GT}(z)\defas \sum_{G\in\mathscr{G}}\frac{z^{\vert G\vert}}{\vert G\vert!},
\]
which is the (exponential) generating function of the number of galled trees $G$ with $\vert G\vert \asdef n$ leaves. Then, from \cite{bouvel2020counting},
\begin{equation}\label{GTz}
{\rm GT}(z)=-\frac{\sqrt{2}\sqrt{-4\left(\sqrt{1-8z}-2\right)z+9\sqrt{1-8z}-1}}{4(1-8z)^{1/4}}-\frac{1}{4}\sqrt{1-8z}+\frac{5}{4}.
\end{equation}
Thus, for a $\Delta$-domain at $1/8$, ${\rm GT}(z)$ is $\Delta$-analytic with an expansion at $z=1/8$ of the form:
\begin{equation}\label{exp-GTz}
{\rm GT}(z)=\rho-\tau\sqrt{1-8z}+{\mathcal O}(1-8z),\qquad (z\rightarrow 1/8).
\end{equation}
The above limit is inside the $\Delta$-domain and
\[
\rho=\frac{5}{4}-\frac{\sqrt{17}}{4}\approx 0.219\qquad\text{and}\qquad\tau=\frac{1}{4}-\frac{\sqrt{17}}{68}\approx 0.189.
\]
From this, by the transfer theorems,
\[
\vert\mathscr{G}_n\vert=n![z^n]{\rm GT}(z)\sim -\tau n![z^n]\sqrt{1-8z}\sim\frac{\sqrt{2}\tau}{2}\left(\frac{8}{e}\right)^nn^{n-1};
\]
compare with Proposition 5.3 in \cite{bouvel2020counting}.

Formula (\ref{GTz}) was also derived with the decomposition mentioned above. (Actually, the authors in \cite{bouvel2020counting} divided the third case in Figure~\ref{gt-decomp} into two subcases, as this allowed an easier translation into generating function.) We now extend the analysis to the mean. 

Set:
\[
A(z)\defas \sum_{G\in\mathscr{G}}B_2(G)\frac{z^{\vert G\vert}}{\vert G\vert!},
\]
where $B_2(G)$ is the $B_2$ index of the galled tree $G$ whose definition we recall:
\[
B_2(G)=-\sum_{\ell\in \mathcal{L}(G)}p_{\ell}\log_2 p_{\ell},
\]
where $\mathcal{L}(G)$ denotes the leaf set of $G$ and $p_\ell$ is the
probability of reaching $\ell$.

First, consider the case where the root of $G$ is not in a gall.
Then, it has two galled subtrees $G_1,G_2$, and we have
\begin{equation}\label{case-1}
B_2(G)=\frac{1}{2}\left(B_2(G_1)+B_2(G_2)\right)+1
\end{equation}
as directly follows from the grafting property.

Next, consider the case where the root of $G$ is inside a gall. Assume that $G_1,\ldots,G_{k}$ and $\tilde{G}_1,\ldots,\tilde{G}_{m}$ are the galled subtrees attached to the left and right of the cycle (in any order) and $G_r$ the galled subtree below the reticulation vertex at the bottom of the gall. Then, for $k,m\geq 0$ with $k+m\geq 1$, we have
\begin{align}
B_2(G)&=-\sum_{s=1}^{k}\sum_{\ell\in\mathcal{L}(G_s)}\frac{p_{\ell}}{2^{s+1}}\log_2\left(\frac{p_{\ell}}{2^{s+1}}\right)-\sum_{t=1}^{m}\sum_{\ell\in \mathcal{L}(\tilde{G}_t)}\frac{p_{\ell}}{2^{t+1}}\log_2\left(\frac{p_{\ell}}{2^{t+1}}\right)\nonumber\\
&\qquad-\sum_{\ell\in\mathcal{L}(G_r)}\left(\frac{1}{2^{k+1}}+\frac{1}{2^{m+1}}\right)p_{\ell}\log_2\left(\left(\frac{1}{2^{k+1}}+\frac{1}{2^{m+1}}\right)p_{\ell}\right)\nonumber\\
&=\sum_{s=1}^{k}\frac{B_2(G_s)}{2^{s+1}}+\sum_{t=1}^{m}\frac{B_2(\tilde{G}_t)}{2^{t+1}}+\left(\frac{1}{2^{k+1}}+\frac{1}{2^{m+1}}\right)B_2(G_r)\nonumber\\
&\qquad+\sum_{s=1}^{k}\frac{s+1}{2^{s+1}}+\sum_{t=1}^{m}\frac{t+1}{2^{t+1}}-\left(\frac{1}{2^{k+1}}+\frac{1}{2^{m+1}}\right)\log_2\left(\frac{1}{2^{k+1}}+\frac{1}{2^{m+1}}\right)\nonumber\\
&=\sum_{s=1}^{k}\frac{B_2(G_s)}{2^{s+1}}+\sum_{t=1}^{m}\frac{B_2(\tilde{G}_t)}{2^{t+1}}+\left(\frac{1}{2^{k+1}}+\frac{1}{2^{m+1}}\right)B_2(G_r)\nonumber\\
&\qquad+3-\frac{k+3}{2^{k+1}}-\frac{m+3}{2^{m+1}}-\left(\frac{1}{2^{k+1}}+\frac{1}{2^{m+1}}\right)\log_2\left(\frac{1}{2^{k+1}}+\frac{1}{2^{m+1}}\right).\label{case-3}
\end{align}

We now use the recursive decomposition, which gives:
\begin{align}
A(z)&=\frac{1}{2}\sum_{G_1,G_2}B_2(G)\frac{z^{\vert G_1\vert+\vert G_2\vert}}{\vert G_1\vert!\vert G_2\vert!}\nonumber\\
&\qquad+\frac{1}{2}\sum_{\substack{k,m\geq 0\\ k+m\geq 1}}\sum_{\substack{G_1,\ldots,G_k \\ \tilde{G}_1,\ldots,\tilde{G}_m \\ G_r}} B_2(G)\frac{z^{\vert G_1\vert+\cdots+\vert G_k\vert+\vert\tilde{G}_1\vert+\cdots+\vert\tilde{G}_m\vert+\vert G_r\vert}}{\vert G_1\vert!\cdots\vert G_k\vert!\vert\tilde{G}_1\vert!\cdots\vert\tilde{G}_m\vert!\vert G_r\vert!},\label{equ-Az}
\end{align}
where inside the first and second sum, we have to replace $B_2(G)$ by (\ref{case-1}) and (\ref{case-3}), respectively. This gives, for the first sum,
\[
\sum_{G_1,G_2}B_2(G)\frac{z^{\vert G_1\vert+\vert G_2\vert}}{\vert G_1\vert!\vert G_2\vert!}=A(z){\rm GT}(z)+{\rm GT}(z)^2.
\]
For the second sum, we have
\begin{align*}
&\sum_{\substack{k,m\geq 0\\ k+m\geq 1}}\sum_{\substack{G_1,\ldots,G_k \\ \tilde{G}_1,\ldots,\tilde{G}_m \\ G_r}} B_2(G)\frac{z^{\vert G_1\vert+\cdots+\vert G_k\vert+\vert\tilde{G}_1\vert+\cdots+\vert\tilde{G}_m\vert+\vert G_r\vert}}{\vert G_1\vert!\cdots\vert G_k\vert!\vert\tilde{G}_1\vert!\cdots\vert\tilde{G}_m\vert!\vert G_r\vert!}\\
&\qquad=A(z)\sum_{\substack{k,m\geq 0\\ k+m\geq 1}}\left(\sum_{s=1}^{k}\frac{1}{2^{s+1}}+\sum_{t=1}^{m}\frac{1}{2^{t+1}}+\left(\frac{1}{2^{k+1}}+\frac{1}{2^{m+1}}\right)\right){\rm GT}(z)^{k+m}\\
&\qquad\quad+\sum_{\substack{k,m\geq 0\\ k+m\geq 1}}\left(3-\frac{k+3}{2^{k+1}}-\frac{m+3}{2^{m+1}}\right){\rm GT}(z)^{k+m+1}\\
&\qquad\quad-\sum_{\substack{k,m\geq 0\\ k+m\geq 1}}\left(\frac{1}{2^{k+1}}+\frac{1}{2^{m+1}}\right)\log_2\left(\frac{1}{2^{k+1}}+\frac{1}{2^{m+1}}\right){\rm GT}(z)^{k+m+1}\\
&\qquad=A(z)\sum_{\substack{k,m\geq 0\\ k+m\geq 1}}{\rm GT}(z)^{k+m}+\sum_{k,m\geq 0}\left(3-\frac{k+3}{2^{k+1}}-\frac{m+3}{2^{m+1}}\right){\rm GT}(z)^{k+m+1}\\
&\qquad\quad-\sum_{k,m\geq 0}\left(\frac{1}{2^{k+1}}+\frac{1}{2^{m+1}}\right)\log_2\left(\frac{1}{2^{k+1}}+\frac{1}{2^{m+1}}\right){\rm GT}(z)^{k+m+1}\\
&\qquad=A(z)\left(\frac{1}{({\rm GT}(z)-1)^2}-1\right)-\frac{{\rm GT}(z)^2({\rm GT}(z)-4)}{({\rm GT}(z)-1)^2({\rm GT}(z)-2)^2}-h({\rm GT}(z)),
\end{align*}
where 
\[
h(\omega)=\sum_{k,m\geq 0}\left(\frac{1}{2^{k+1}}+\frac{1}{2^{m+1}}\right)\log_2\left(\frac{1}{2^{k+1}}+\frac{1}{2^{m+1}}\right)\omega^{k+m+1}.
\]

Plugging everything into (\ref{equ-Az}) and solving for $A(z)$ gives
\[
A(z)=\frac{f({\rm GT}(z))}{g({\rm GT}(z))},
\]
where
\[
g(\omega)=1-\frac{\omega}{2}+\frac{\omega(\omega-2)}{2(\omega-1)^2}
\]
and
\[
f(\omega)=\frac{\omega^2}{2}-\frac{\omega^2(\omega-4)}{2(\omega-1)^2(\omega-2)^2}-\frac{h(\omega)}{2}.
\]

We next show that $A(z)$ is $\Delta$-analytic (with a $\Delta$-domain at $1/8$) and derive its expansion as $z\rightarrow 1/8$. First note that $f(\omega)/g(\omega)$ is analytic in $\vert\omega\vert<1$. Expanding at $\rho\approx 0.219$ gives
\[
\frac{f(\omega)}{g(\omega)}=\frac{f(\rho)}{g(\rho)}+\frac{f'(\rho)g(\rho)-f(\rho)g'(\rho)}{g(\rho)^2}(\omega-\rho)+{\mathcal O}((\omega-\rho)^2).
\]
By plugging into this (\ref{exp-GTz}),
\[
A(z)=\frac{f(\rho)}{g(\rho)}+\frac{f(\rho)g'(\rho)-f'(\rho)g(\rho)}{g(\rho)^2}\tau\sqrt{1-8z}+{\mathcal O}((1-8z)),\qquad (z\rightarrow 1/8).
\]
Thus, by the transfer theorems,
\[
[z^n]A(z)\sim\frac{f(\rho)g'(\rho)-f'(\rho)g(\rho)}{g(\rho)^2}\tau[z^n]\sqrt{1-8z}
\]
which implies for the mean of the $B_2$ index of a random galled tree
$G_n \sim \mathrm{Unif}(\mathscr{G}_n)$:
\[
\ExpecBrackets{B_2(G_n)}=\frac{[z^n]A(z)}{[z^n]{\rm GT}(z)}\sim\frac{f'(\rho)g(\rho)-f(\rho)g'(\rho)}{g(\rho)^2}=2.707911858984\ldots,
\]
where the numerical evaluation was done with Maple. We summarize this in our first result of this section.
\begin{theorem} \label{thmLimExpecB2AnalyticCombinatorics}
The mean value of the $B_2$ index of a uniform galled tree with $n$ labeled
leaves converges to $2.707911858984\ldots$.
\end{theorem}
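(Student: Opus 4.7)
The plan is to express the mean as a ratio of generating-function coefficients, $\mathbb{E}(B_2(G_n)) = [z^n]A(z)/[z^n]{\rm GT}(z)$, where $A(z) \defas \sum_{G \in \mathscr{G}} B_2(G)\, z^{|G|}/|G|!$, and then to apply singularity analysis to $A(z)$ at $z = 1/8$. Since the denominator asymptotics are already given by Bouvel et al.\ via $(\ref{exp-GTz})$, the whole problem reduces to locating and expanding the dominant singularity of $A(z)$.

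The first step is to obtain a functional equation for $A(z)$ from the recursive decomposition of Figure~\ref{gt-decomp} combined with the grafting property (Proposition~\ref{propGraftingFinite}). In the ``root not in a gall'' case, the grafting property immediately gives $B_2(G) = \tfrac12(B_2(G_1)+B_2(G_2)) + 1$. In the ``root in a gall'' case with $k$ subtrees on one side of the cycle, $m$ on the other, and $G_r$ below the reticulation, an explicit computation of the Shannon entropy of the induced leaf distribution expresses $B_2(G)$ as a weighted sum of $B_2(G_s)$, $B_2(\tilde G_t)$, $B_2(G_r)$ plus a structural constant depending only on $k$ and $m$. Multiplying by $z^{|G|}/|G|!$ and summing turns both cases into a linear equation in $A(z)$ with coefficients that are explicit power series in ${\rm GT}(z)$. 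After identifying the elementary geometric series in $k,m$, the equation can be solved to give a closed form $A(z) = f({\rm GT}(z))/g({\rm GT}(z))$, where $f$ and $g$ are rational except for one transcendental contribution
\[
h(\omega) \;=\; \sum_{k,m \geq 0} \Bigl(\tfrac{1}{2^{k+1}} + \tfrac{1}{2^{m+1}}\Bigr) \log_2\!\Bigl(\tfrac{1}{2^{k+1}} + \tfrac{1}{2^{m+1}}\Bigr)\, \omega^{k+m+1}
\]
appearing in $f$, which collects the entropy terms produced by the reticulation.

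The second step is to verify that $A(z)$ is $\Delta$-analytic at $1/8$ and to expand it there. The key observation is that the coefficients of $h(\omega)$ grow only polynomially in $k+m$, so $h$ is analytic on the open unit disk; together with a quick check that $g(\rho) \neq 0$, this shows that $f/g$ is analytic in a neighbourhood of $\rho = 5/4 - \sqrt{17}/4 \approx 0.219$. Composing with the $\Delta$-analyticity of ${\rm GT}$ at $1/8$ (and ${\rm GT}(1/8) = \rho$) then transfers the property to $A$. Taylor-expanding $f/g$ to first order at $\rho$ and inserting $(\ref{exp-GTz})$ yields
\[
A(z) \;=\; \frac{f(\rho)}{g(\rho)} \;+\; \frac{f(\rho)g'(\rho) - f'(\rho)g(\rho)}{g(\rho)^2}\, \tau\, \sqrt{1-8z} \;+\; \mathcal{O}(1-8z),\qquad(z\to 1/8).
\]

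The transfer theorem then gives $[z^n]A(z) \sim c_1\, [z^n]\sqrt{1-8z}$ with an explicit $c_1$, and dividing by $[z^n]{\rm GT}(z) \sim -\tau\, [z^n]\sqrt{1-8z}$ produces the limit $(f'(\rho)g(\rho) - f(\rho)g'(\rho))/g(\rho)^2$; the numerical value $2.707911858984\ldots$ is then a computer-algebra evaluation requiring in particular $h(\rho)$ and $h'(\rho)$. I expect the main technical obstacle to be the bookkeeping of the double sums in the gall case, i.e.\ carefully separating the pieces that are linear in the subtree $B_2$ indices from the purely structural entropy constants, and recognizing that every resulting double series except the one defining $h$ collapses to a rational function of ${\rm GT}(z)$. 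Everything else is a routine application of singularity analysis.
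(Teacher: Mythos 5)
Your proposal is correct and follows essentially the same route as the paper's own derivation: the same recursive decomposition combined with the grafting property, the same functional equation yielding $A(z) = f(\mathrm{GT}(z))/g(\mathrm{GT}(z))$ with the transcendental series $h(\omega)$ collecting the reticulation entropy terms, and the same singularity analysis at $z = 1/8$ producing the limit $(f'(\rho)g(\rho) - f(\rho)g'(\rho))/g(\rho)^2$. Nothing further is needed.
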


The above computations can be generalized to higher moments. More precisely, define
\[
  A^{[\ell]}(z) \defas \sum_{G\in\mathscr{G}}B_2(G)^{\ell}\frac{z^{\vert G\vert}}{\vert G\vert!}
\]
so that $A^{[0]}(z)={\rm GT}(z)$ and $A^{[1]}(z)=A(z)$. Then, by raising (\ref{case-1}) and (\ref{case-3}) to the $\ell$-th power, plugging them into (\ref{equ-Az}), and expanding, we obtain that
\[
A^{[\ell]}(z)=\frac{f_{\ell}(z)}{g_{\ell}({\rm GT}(z))},
\]
where
\begin{align*}
  g_{\ell}(\omega) \defas 1-\frac{\omega}{2^{\ell}}&-\sum_{k\geq 1}\left(\sum_{s=1}^{k}\frac{1}{2^{(s+1)\ell}}+\left(\frac{1}{2}+\frac{1}{2^{k+1}}\right)^{\ell}\right)\omega^k\\
&-\frac{1}{2}\sum_{k,m\geq 1}\left(\sum_{s=1}^{k}\frac{1}{2^{(s+1)\ell}}+\sum_{t=1}^{m}\frac{1}{2^{(t+1)\ell}}+\left(\frac{1}{2^{k+1}}+\frac{1}{2^{m+1}}\right)^{\ell}\right)\omega^{k+m}
\end{align*}
and

\break

\begin{align}
  f_{\ell}(z) \defas\; &\frac{1}{2}\sum_{\substack{\ell_1+\ell_2+\ell_3=\ell\\\ell_1,\ell_2<\ell}}\binom{\ell}{\ell_1,\ell_2,\ell_3}\frac{1}{2^{\ell_1+\ell_2}}A^{[\ell_1]}(z)A^{[\ell_2]}(z)\nonumber\\
&+\frac{1}{2}\sum_{\substack{k,m\geq 0\\k+m\geq 1}}\sum_{\substack{\ell_1+\cdots+\ell_{k+m+2}=\ell\\\ell_j<\ell,1\leq j\leq k+m+1}}\binom{\ell}{\ell_1,\cdots,\ell_{k+m+2}}\prod_{s=1}^{k}\frac{1}{2^{(s+1)\ell_s}}\prod_{t=1}^{m}\frac{1}{2^{(t+1)\ell_{k+t}}}\nonumber\\
&\qquad\qquad\times\left(\frac{1}{2^{k+1}}+\frac{1}{2^{m+1}}\right)^{\ell_{k+m+1}}\mu_{k,m}^{\ell_{k+m+2}}\prod_{s=1}^{k}A^{[\ell_s]}(z)\prod_{t=1}^{m+1}A^{[\ell_{k+t}]}(z),\label{flz}
\end{align}
where
\begin{equation}\label{mukm}
  \mu_{k,m} \defas 3-\frac{k+3}{2^{k+1}}-\frac{m+3}{2^{m+1}}-\left(\frac{1}{2^{k+1}}+\frac{1}{2^{m+1}}\right)\log_2\left(\frac{1}{2^{k+1}}+\frac{1}{2^{m+1}}\right).
\end{equation}

Now, define two sequences $c_{\ell}$ and $d_{\ell}$ by
\[
A^{[\ell]}(z)=c_{\ell}-d_{\ell}\tau\sqrt{1-8z}+\cdots
\]
Then, we have $c_{0}=\rho, d_{0}=1$, and for $\ell\geq 1$,
\[
c_{\ell}=\frac{f_{\ell}(A^{[j]}(z)\leftrightarrow c_j)}{g_{\ell}(\rho)}
\]
and
\begin{equation}\label{rec-d-ell}
d_{\ell}=\frac{f'_{\ell}((A^{[j]})'(z)\leftrightarrow d_j,A^{[j]}(z)\leftrightarrow c_j)}{g_{\ell}(\rho)}-c_{\ell}\frac{g_{\ell}'(\rho)}{g_{\ell}(\rho)},
\end{equation}
where $f_{\ell}(A^{[j]}(z)\leftrightarrow c_j)$ means that in $f_{\ell}(z)$ we replace $A^{[j]}(z)$ with $j<\ell$ by $c_j$ and similar for the numerator of the first fraction on the right-hand side of (\ref{rec-d-ell}).

Finally, by applying singularity analysis as above, we obtain that
\begin{equation}\label{asym-lth-moment}
\ExpecBrackets{B_2(G_n)^{\ell}}\rightarrow d_{\ell}.
\end{equation}
The above recurrence for $d_{\ell}$ can be, e.g., used to compute $d_2=7.965561677362\ldots$ (again, e.g., with Maple). Thus, we have the following result for the variance.
\begin{theorem}
The variance of the $B_2$ index of a uniform galled tree with $n$ labeled
leaves converges to $0.632774946963\ldots$.
\end{theorem}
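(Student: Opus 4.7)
The plan is to leverage the machinery already established in this section. Writing $\mathrm{Var}(B_2(G_n)) = \mathbb{E}(B_2(G_n)^2) - \mathbb{E}(B_2(G_n))^2$, Theorem~\ref{thmLimExpecB2AnalyticCombinatorics} handles the second term, whose limit is $c^2$ with $c = 2.707911858984\ldots$. It therefore suffices to apply~(\ref{asym-lth-moment}) at $\ell = 2$ in order to conclude $\mathbb{E}(B_2(G_n)^2) \to d_2$, and then to evaluate $d_2$ explicitly via the recursion~(\ref{rec-d-ell}); the variance limit is then $d_2 - c^2$.

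Concretely, I would instantiate the scheme at $\ell = 2$. The series $g_2(\omega)$ is obtained directly from its defining formula, and $f_2(z)$ reduces to a closed-form expression in $\mathrm{GT}(z)$ and $A(z) = A^{[1]}(z)$, with numerical coefficients coming from the compositions $\ell_1 + \ell_2 + \ell_3 = 2$ in~(\ref{flz}) (only $A^{[0]}$ and $A^{[1]}$ appear on the right-hand side when $\ell = 2$). Using the already-known expansions $\mathrm{GT}(z) = \rho - \tau\sqrt{1-8z} + \mathcal{O}(1-8z)$ and $A(z) = c_1 - d_1 \tau \sqrt{1-8z} + \mathcal{O}(1-8z)$, with $c_1 = f(\rho)/g(\rho)$ and $d_1 = c$, I would Taylor-expand $f_2(\omega)/g_2(\omega)$ around $\omega = \rho$ and read off $c_2$ and $d_2$ via~(\ref{rec-d-ell}); singularity analysis then delivers $\mathbb{E}(B_2(G_n)^2) \to d_2$.

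The main technical obstacle lies in evaluating the non-geometric series inside $g_2$ and $f_2$ at $\omega = \rho$. The sums defining $g_2(\omega)$, together with most of those appearing in $f_2(\omega)$, reduce by routine geometric-series manipulations and their derivatives to rational functions of $\omega$. The genuinely delicate piece is the contribution coming from $\mu_{k,m}^2$, where $\mu_{k,m}$ defined in~(\ref{mukm}) carries the entropy-type term $(1/2^{k+1}+1/2^{m+1})\log_2(1/2^{k+1}+1/2^{m+1})$; squaring this produces a double series with double-logarithmic weights that has no elementary closed form. I would split the square into its pure, log, and log-squared parts and handle the log and log-squared double sums either by recognizing them as dilogarithm-type series or, as was done for $d_1$, by numerical evaluation at $\omega = \rho$ to high precision in Maple. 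Feeding the resulting values into the recursion yields $d_2 = 7.965561677362\ldots$, whence $\mathrm{Var}(B_2(G_n)) \to d_2 - c^2 = 0.632774946963\ldots$.
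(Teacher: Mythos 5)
Your proposal is correct and follows essentially the same route as the paper: the paper also obtains the variance as $d_2 - d_1^2$, where $d_2 = 7.965561677362\ldots$ is computed from the recursion~(\ref{rec-d-ell}) at $\ell = 2$ (with the series evaluated numerically in Maple) and $\mathbb{E}(B_2(G_n)^2)\to d_2$ follows from~(\ref{asym-lth-moment}). Your identification of the $\mu_{k,m}^2$ double sum as the only genuinely non-elementary piece, to be handled numerically, matches what the paper does.
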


Also, from (\ref{asym-lth-moment}) and the method of moments, we can identify the limit law.
\begin{theorem}\label{ll-B2}
There exists a random variable $B$ with $\ExpecBrackets*{\normalsize}{B^{\ell}}=d_{\ell}$
whose distribution is the limit distribution of the $B_2$ index of a
uniform galled tree with $n$ labeled leaves, i.e.,
\[
B_2(G_n)\stackrel{d}{\longrightarrow} B,
\]
where the convergence also holds for all moments.
\end{theorem}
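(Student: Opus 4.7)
The plan is to apply the classical method of moments. The moment convergence ${\mathbb E}(B_2(G_n)^\ell)\to d_\ell$ for every integer $\ell\geq 0$ is already in hand from (\ref{asym-lth-moment}), so the proof reduces to three standard steps. First, the sequence of laws of $B_2(G_n)$ is tight, which follows immediately from Markov's inequality and the fact that $\sup_n {\mathbb E}(B_2(G_n))<\infty$. Second, any subsequential weak limit $B$ has moments $(d_\ell)$: this is uniform integrability of $B_2(G_n)^\ell$ along the subsequence, which follows from the boundedness of ${\mathbb E}(B_2(G_n)^{\ell+1})$. Third, provided the moment sequence $(d_\ell)$ is determinate, all subsequential limits coincide; the full sequence therefore converges in distribution to the unique $B$ with ${\mathbb E}(B^\ell)=d_\ell$, and convergence of all moments is inherited from (\ref{asym-lth-moment}).

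The only non-routine step is verifying the determinacy of $(d_\ell)$. I would do this via Carleman's criterion $\sum_{\ell\geq 1} d_{2\ell}^{-1/(2\ell)} = \infty$, which follows from an estimate of the form $d_\ell \leq C^\ell$ for some constant $C>0$. Such a bound I would establish by induction on $\ell$ using (\ref{rec-d-ell}) together with the explicit form of $f_\ell$ in (\ref{flz}) and the formula for $g_\ell$. The reasons this induction should go through are: every occurrence of $A^{[j]}$ with $j<\ell$ inside $f_\ell$ is multiplied by geometric damping factors of the form $2^{-(s+1)\ell_s}$, $2^{-(t+1)\ell_{k+t}}$ or $(2^{-(k+1)}+2^{-(m+1)})^{\ell_{k+m+1}}$ whose sums over $k,m\geq 0$ converge absolutely; the factor $\mu_{k,m}^{\ell_{k+m+2}}$ from (\ref{mukm}) grows only polylogarithmically in $k+m$ and is therefore dominated; and the denominator $g_\ell(\rho)$ is bounded away from $0$ uniformly in $\ell$, since $\rho\approx 0.219 < 1/2$ makes all subtracted terms in the defining series of $g_\ell$ geometrically small for large $\ell$, so in fact $g_\ell(\rho)\to 1$.

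The main obstacle is thus the growth bound on $d_\ell$. An appealing alternative route, which would avoid the combinatorial estimate entirely, is to show that the limit $B$ is almost surely bounded by some constant $M$: such an essential bound immediately implies determinacy. However, $B_2(G_n)\leq \log_2 n$ does not give a uniform-in-$n$ deterministic bound, so an essential bound on $B$ would have to come from a more refined analysis of the grafting recursion (\ref{case-1})--(\ref{case-3}); within the analytic-combinatorics framework this appears no easier than the Carleman route sketched above, so I would proceed with the direct estimate on $d_\ell$.
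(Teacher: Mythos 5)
Your overall strategy -- method of moments, with determinacy of $(d_\ell)$ established by a growth bound proved inductively from the recurrence -- is exactly the paper's, but the specific bound you propose, $d_\ell\leq C^\ell$, is false, and this is where the entire difficulty of the proof lives. Since $({\mathbb E}[B^\ell])^{1/\ell}\to \lVert B\rVert_\infty$ for a nonnegative random variable, the bound $d_\ell\leq C^\ell$ is \emph{equivalent} to the essential bound $B\leq C$ a.s.\ that you correctly dismiss as unavailable in your last paragraph: the two ``routes'' you distinguish are the same route, and both fail. Indeed, the support of the limit is unbounded: for every $K$, the event that $[G_n]_K$ is a complete binary tree of cut-vertices forces $B_2(G_n)\geq K$, and by local convergence this event has probability bounded below uniformly in $n$; hence $d_\ell^{1/\ell}\to\infty$. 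Concretely, your induction cannot close because the multinomial coefficients in $f_\ell$ are not tamed by the geometric damping factors you list. Already for the first piece of the recurrence, plugging $c_j,d_j\leq C^j$ into
\[
d_{\ell,1}\;=\;\sum_{\substack{\ell_1+\ell_2+\ell_3=\ell\\ \ell_1,\ell_2<\ell}}\binom{\ell}{\ell_1,\ell_2,\ell_3}\frac{c_{\ell_1}d_{\ell_2}}{2^{\ell_1+\ell_2}}
\]
produces a quantity of order $\bigl(\frac{C}{2}+\frac{C}{2}+1\bigr)^{\ell}=(C+1)^\ell>C^\ell$: the constants $1$ and $\mu_{k,m}\leq 3$ raised to the powers $\ell_3$ and $\ell_{k+m+2}$ are exactly what the damping factors $2^{-(s+1)\ell_s}$, etc., never touch.

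The correct estimate, proved in Appendix~\ref{ll-mom} (Lemmas~\ref{bound-cl} and~\ref{bound-dl}), is $d_\ell\leq K^\ell\ell!$, as in~\eqref{upper-bound-dl}. The factorial is what absorbs the multinomials, via $\binom{\ell}{\ell_1,\ell_2,\ell_3}\,\ell_1!\,\ell_2!=\ell!/\ell_3!$, after which the geometric factors and the summability over $k,m$ do the rest; your observations that $\mu_{k,m}$ is uniformly bounded and that $g_\ell(\rho)$ is increasing in $\ell$ and bounded away from $0$ are correct and are indeed used there. This weaker bound still yields determinacy: it gives $d_{2\ell}^{-1/(2\ell)}\geq c/\ell$, so Carleman's criterion holds, or equivalently -- and this is the form the paper invokes from Billingsley, \S 30 -- the series $\sum_\ell d_\ell z^\ell/\ell!$ has positive radius of convergence. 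With $d_\ell\leq K^\ell\ell!$ in place of $d_\ell\leq C^\ell$, the remaining steps of your argument (tightness, uniform integrability of $B_2(G_n)^\ell$ from boundedness of the $(\ell+1)$-th moments, identification of subsequential limits) are routine and consistent with the paper.
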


\begin{remark}
An explicit construction of the random variable $B$ from Theorem~\ref{ll-B2}
is given in Section~\ref{secLocalLimit}, where it is obtained 
as the $B_2$ index of the infinite phylogenetic network $G_*$ corresponding
to the local limit of uniform leaf-labeled galled trees.
\end{remark}

\begin{proof}
According to classical results from probability theory (see \cite[Section~30]{billingsley1995probability}), we only have to show
that the exponential generating function of $\sum_{\ell\geq 0}d_{\ell}z^{\ell}/\ell!$
has a non-zero radius of convergence. This follows from an estimate of the form
\begin{equation}\label{upper-bound-dl}
d_{\ell}\leq K^{\ell}\ell!,
\end{equation}
where $K>0$ is a suitable constant. Such an estimate is derived from the recurrence for $d_{\ell}$ and induction; see Appendix~\ref{ll-mom}. 
\end{proof}

\section{Local limit approach} \label{secLocalLimit}

\subsection{General principle}

Local limits make it possible to capture certain aspects of the structure of
large graphs. Typically, one considers a sequence $(G_n)$ of finite graphs of
increasing size.  The idea is then to define a (potentially infinite) rooted
graph $G_\infty = \lim_n G_n$ whose structure around the root tends to match
the local structure of $G_n$ ``as seen from a focal vertex'' when $n$ goes to
infinity. This focal vertex can be fixed or random, depending on the
application.

Let us be more specific and briefly recall the definition of local convergence.
First, we need to introduce a notion of \emph{restriction} of a phylogenetic
network.

\begin{definition} \label{defRestriction}
Let $G$ be a phylogenetic network and let $v$ be a vertex of $G$.
The \emph{height} of $v$, which we denote by $h(v)$, is the number of edges
of a shortest path going from the root of $G$ to $v$.
The \emph{restriction of height $k$} of $G$, which we denote by
$[G]_k$, is the subgraph of $G$ induced by its vertices of height at most $k$.
\end{definition}

Note that, because phylogenetic networks are locally finite, $[G]_k$ is
finite for every phylogenetic network $G$ and every $k\geq 0$. Therefore,
it is natural to endow the set of finite phylogenetic networks with the
discrete topology and thus say that $[G_n]_k \to [G]_k$ when there exists $N$ such
that for all $n \geq N$, $[G_n]_k = [G]_k$.

\begin{definition} \label{defLocalCV}
A sequence $(G_n)$ of phylogenetic networks \emph{converges locally} to the
phylogenetic network $G$ if, for all fixed $k \geq 0$, as $n$ goes to infinity,
$[G_n]_k$ converges to $[G]_k$ in the discrete topology.
\end{definition}

\begin{remark}
Note that local convergence is metrizable: the metric $d$
defined by $d(G,G') = \exp(-\min\{k \geq 0 : [G]_k \neq [G']_k\})$, with
the convention $\min\emptyset = +\infty$, generates the
so-called \emph{local topology}, and $G_n$ converges locally to $G$ if
and only if $d(G_n, G)\to 0$.
Furthermore, the space $\phylnet$ of phylogenetic network  equipped with the
metric $d$ is separable and complete.
\end{remark}

Local convergence has now become a standard tool in probability
theory, and we refer the reader to~\cite[Chapter~2]{vanderhofstad2024random}
for a detailed introduction. Its usefulness comes in great part from the fact
that the local limit $G_\infty$ of a sequence $(G_n)$ of graphs is often more
tractable than the finite graphs $G_n$.  In particular, regions of $G_n$ that
are only ``asymptotically independent'' can become truly independent
in~$G_\infty$. As a result, given a functional $f$ of interest, it is sometimes
much easier to compute $f(G_\infty)$ than $f(G_n)$ and this can yield a simple
way to compute $\lim_n f(G_n)$. However, for this one must:
\begin{itemize}
  \item Identify the limit $G_\infty$ as a tractable graph.
  \item Ensure that $f$ is well-defined for infinite graphs, and that it
    is continuous for the local topology (or at least along the sequence of
    interest).
\end{itemize}

In the case of galled trees, the local limit has been identified
in~\cite{stufler2022branching} and is indeed highly tractable -- see
Section~\ref{secBlowUps}. However, the extension of the $B_2$ index
to infinite phylogenetic networks and its continuity are challenging.

\subsection{The \texorpdfstring{$B_2$}{B2} index of infinite phylogenetic networks} \label{secB2InfiniteNetwork}

The Definition~\ref{defB2Finite} of the $B_2$ index of a finite phylogenetic
network as the Shannon entropy of the probability distribution induced on its
leaves by the directed random walk $X$ does not immediately
extend to infinite networks. This is because the random walk can escape to
infinity, without ever ending in a leaf (in fact, note that an infinite
phylogenetic network may not even have leaves).

In \cite[Definition~1.4]{bienvenu2021revisiting}, it was claimed that a simple
way to circumvent this problem is to define the $B_2$ index of an infinite
phylogenetic network $G$ as $\lim_k B_2([G]_k)$. However, while this works for
trees (Proposition~\ref{propStubSeq} in Appendix~\ref{appPropB2}), 
this is not the case for general phylogenetic networks, because the sequence
$B_2([G]_k)$ may not converge, as the simple example given in
Figure~\ref{figCounterExDefB2} shows.

\begin{figure}[h]
  \centering
  \captionsetup{width=.75\linewidth}
  \includegraphics[width=0.4\linewidth]{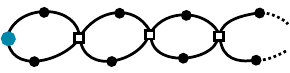}
  \caption{A phylogenetic network $G$ illustrating why the
   Definition~1.4 from~\cite{bienvenu2021revisiting} may fail. The root is
   the vertex highlighted in blue on the right, and the white squares denote
   reticulations. Here, $B_2([G]_k) = \Indic{k \text{ is odd}}$ does not
   converge.}
  \label{figCounterExDefB2}
\end{figure}

To get a general definition, we define a suitable notion of ``boundary'' of
DAGs on which the random walk $X$ induces a probability
distribution. In the finite case, this boundary is simply the leaf set; but in
the infinite case it consists of the leaf set and of all ``directions'' in
which $X$ can escape.  While there already exist several notions
of boundaries to quantify the asymptotic behavior of random walks, such as the
Martin boundary \cite[Chapter~10]{kemeny1976denumerable}, we are unaware of a
notion of boundary that matches ours.  Moreover, here we give an
explicit construction that is useful to understand the properties of the $B_2$
index.

We now give the formal definition of the $B_2$ index of an arbitrary
phylogenetic network, and list some of its most important properties.
First, we recall the notion of \emph{ends} of an infinite graph.
Note that here we are working with directed graphs, and therefore that the
notion of end that we use differs from the more common notion of end of an
undirected graph.

A \emph{ray} is a one-way infinite directed path $v_0 \to v_1  \to \cdots$.
Two rays $r$ and $r'$ are said to be \emph{co-directional}, which we denote by
$r \codir r'$, if there exists a ray $r''$, which could be one of $r$ or $r'$,
that intersects $r$ and $r'$ an infinite number of times.
It is readily checked that $\codir$ is an equivalence relation.

\begin{definition} \label{defEnds}
Let $G$ be a DAG with at least one ray.  The equivalence classes of the
co-directional relation $\codir$ are called the \emph{ends} of $G$.
\end{definition}

An infinite DAG that is not locally finite may or may not have ends.
However, an infinite phylogenetic network, being locally finite, must have at
least one end (since it has at least one ray).
Moreover, for any $\kappa \in \N \cup \{\aleph_0, 2^{\aleph_0}\}$, there
exist phylogenetic networks with $\kappa$ ends,
illustrated in Figure~\ref{figExamplesEnds}.

\begin{figure}[h]
  \centering
  \captionsetup{width=.95\linewidth}
  \includegraphics[width=0.95\linewidth]{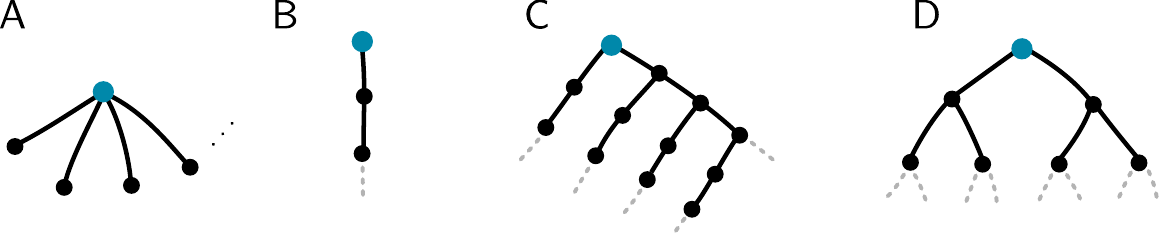}
  \caption{Examples of infinite DAGs with various number of ends. The root is
  highlighted in blue, and the direction of the edges is not indicated, as they
  are always oriented away from the root.  \textsf{A}: the infinite star graph,
  an infinite DAG with no ends.  This DAG is not locally finite, and therefore
  is not a phylogenetic network. \textsf{B}: the semi-infinite path, a DAG with
  one end. \textsf{C}: the DAG obtained by grafting infinite paths to the
  leaves of the infinite caterpillar. This DAG has $\aleph_0 = \Card(\N)$ ends.
  \textsf{D}: the infinite complete binary tree, a DAG with $2^{\aleph_0} =
  \Card(\R)$ ends.}
  \label{figExamplesEnds}
\end{figure}

\begin{definition} \label{defBoundary}
Let $G$ be phylogenetic network with leaf set $\mathcal{L}$ and end
set $\mathcal{E}$. The \emph{boundary} of $G$ is the set
$\partial G = \mathcal{L} \cup \mathcal{E}$.
\end{definition}

In Appendix~\ref{appBoundary}, we prove that the limit of the random walk
$(X_t)_{t\geq 0}$ is a well-defined random variable $X_\infty$ taking values in
$\partial G$.  We denote its distribution by $\mu$. Using the usual definition
of the Shannon entropy $H$ of an arbitrary probability distribution (see
Appendix~\ref{appPropEntropy}), this makes it possible to extend
Definition~\ref{defB2Finite} to infinite phylogenetic networks.

\enlargethispage{1ex}
\begin{definition} \label{defB2Infinite}
Let $G$ be a phylogenetic network. The \emph{$B_2$ index of $G$} 
is the Shannon entropy of the probability distribution $\mu$ induced on
the boundary of $G$ by the directed random walk: $B_2(G) = H(\mu)$.
\end{definition}

The properties of the $B_2$ index of infinite phylogenetic networks are
essentially the same as in the finite case, except that $B_2$ can now equal
$+\infty$. In particular, the grafting property still holds for infinite
phylogenetic networks.  This is proved in Appendix~\ref{appPropB2}, where other
basic properties of the $B_2$ index of general phylogenetic networks are
provided.

There is, however, one major difficulty with our approach: the $B_2$ index is
not continuous for the local topology. In fact, $B_2$ is nowhere continuous on
the set of infinite phylogenetic networks -- in the sense that for every
infinite phylogenetic network $G$ there exists a sequence $(G_n)$ such that
$G_n \to G$ for the local topology but $B_2(G_n) \not\to B_2(G)$, see
Figure~\ref{figNotContinuous}.  This is because the $B_2$ index reflects the
structure of the boundary of a network, and parts of the boundary can be
located infinitely far away from the root; whereas functions that are
continuous for the local topology must, by definition, depend on the local
structure of the network around the root.

\begin{figure}[ht]
  \centering
  \captionsetup{width=.8\linewidth}
  \includegraphics[width=0.25\linewidth]{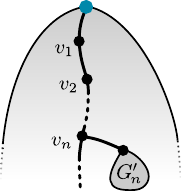}
  \caption{Let $G$ be an infinite phylogenetic network, and let $v_1 \to v_2
  \to \cdots$ be a ray of $G$ (being infinite and locally finite,
  $G$ always has at least one ray). Pick a sequence $(G'_n)$ of
  phylogenetic networks, and let $G_n$ be the phylogenetic network obtained by
  grafting $G'_n$ on $v_n$. Thus, $G_n \to G$. However, by the grafting property,
  $B_2(G_n) = B_2(G) + p_{v_n} B_2(G'_n)$ can be arbitrarily large.}
  \label{figNotContinuous}
\end{figure}

Nevertheless, it is still possible to use local limits to study the $B_2$ index:
indeed, $B_2$ can be continuous along sequences $(G_n)$ of interest -- in fact,
we expect that this should be the case for many biologically relevant models of
phylogenetic networks.

In Appendix~\ref{appPropB2}, we provide general
results to study the continuity of $B_2$. For instance,
Proposition~\ref{propInequalityLimInfB2} states that if the local limit $G$ of
a sequence $(G_n)$ is such that the directed random walk $X$ gets trapped in a
leaf with probability~1, then
\[
   \liminf B_2(G_n) \;\geq\; B_2(G)  \,.
\]
Similarly, in Proposition~\ref{propInequalityLimSupB2} we provide an
easy-to-check sufficient condition ensuring that $\lim_n B_2(G_n) = B_2(G)$.

In the next section, we state our main continuity result (on which our study
of the $B_2$ index of uniform leaf-labeled galled trees relies): that
the $B_2$ index is essentially continuous for a generic class of models of
random phylogenetic networks known as blowups of Galton--Watson trees.

\subsection{Blowups of Galton--Watson trees} \label{secBlowUps}

Informally, a blowup of a random rooted tree $T$ is a random phylogenetic
network that is obtained by: first, sampling $T$; then replacing each
internal vertex~$v$ by an independent realization $\Head{v}$ of a random
phylogenetic network (identifying~$v$ with the root of $\Head{v}$, and each
of the children of $v$ with a leaf of $\Head{v}$ -- see
Figure~\ref{figExampleBlowup}).  In this construction, we require that
$\Head{v}$ depend on $d^+(v)$ only, where $d^+(v)$ denotes the number of
children of $v$ in $T$, and that the matching between the leaves of $\Head{v}$
and the children of $v$ be chosen uniformly at random and independently of
everything else.

Formally, let $T$ be a random ordered tree -- that is, $T$ is rooted and the
children of each vertex $v \in T$ are ordered from $1$ to $d^+(v)$ -- which we
view as a subset of the Ulam--Harris tree $\mathcal{U}$ (see e.g.\
\cite[Section~6]{janson2012simply}).  Let $\nu = (\nu_{k})_{k \geq 1}$, where
each $\nu_k$ is a leaf-exchangeable probability distribution on the set of
finite phylogenetic networks with $k$ leaves labeled $1, \ldots, k$ (by
\emph{leaf-exchangeable}, we mean invariant under permutation of the labels of
the leaves). Finally, let $\Gamma = (\Head{v}^{k} : v \in \mathcal{U}, k \geq 1)$
be a family of independent phylogenetic networks such that
$\Head{v}^{k} \sim \nu_{k}$ and $\Gamma$ is independent of $T$.

The \emph{blowup of $\,T\!$ with respect to $\nu$} is the random phylogenetic
network obtained from $(T, \Gamma)$ by gluing the networks
$\Head{v} \defas \Head{v}^{d^+\!(v)}$, where $v$ ranges over the internal
vertices of $T$, as follows: for each internal vertex $v \in T$, let $u_k$
denote the \mbox{$k$-th} child of $v$ and, for each non-leaf $u_k$, identify the
root of $\Head{u_k}$ with the leaf of $\Head{v}$ labeled~$k$.
Note that the leaves of the resulting network $G$ are not properly labeled, but
that they are in bijection with the leaves of $T$ (which are canonically
labeled by the Ulam--Harris labeling). To get a leaf-labeled network, one
can label the leaves of~$T$ (e.g, uniformly at random) and carry over the
labels to $G$; this is irrelevant when considering label-invariant functions
of~$G$, such as the $B_2$ index.

\begin{figure}[h!]
  \centering
  \captionsetup{width=.95\linewidth}
  \includegraphics[width=0.95\linewidth]{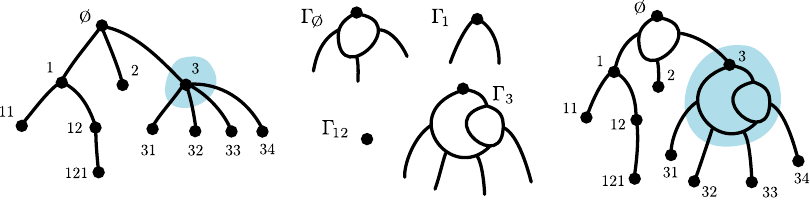}
  \caption{Illustration of the blowup: left, 
   an ordered tree $T$ with its Ulam--Harris labeling; center, 
   the random networks associated to the internal vertices of $T$;
   right, the network $G$ resulting from the blowup (with the Ulam--Harris
   labeling of the vertices of $T$ indicated on the corresponding vertices of
   $G$, to make the link between $T$ and $G$ more apparent). Each vertex
   $v \in T$ has been ``replaced'' by the network~$\Head{v}$; for instance,
   vertex~$3$ corresponds to the subgraph of $G$ highlighted in blue,
   which is isomorphic to $\Head{3}$.}
  \label{figExampleBlowup}
\end{figure}

Blowups of Galton--Watson trees are a noteworthy class of random phylogenetic
networks. Indeed,
\begin{itemize}
  \item Prominent ``combinatorial'' models of random phylogenetic networks --
    i.e.\ models that correspond to the uniform distribution on some relevant
    class of networks -- can be obtained as blowups of (size-conditioned)
    Galton--Watson trees.  For instance, in \cite{stufler2022branching} Stufler
    proved that this is the case for uniform leaf-labeled level-$k$ networks
    (and, therefore, for the galled trees considered here, as detailed in
    Section~\ref{secAppLocLimGalledTrees} below).  Blowups of Galton--Watson
    trees can also result from biologically relevant evolutionary processes,
    see~\cite{bienvenu2024branching}.
  \item The constrained structure of blowups of Galton--Watson trees makes them
    highly tractable. In \cite{stufler2022branching}, this was used
    to obtain asymptotic counting results for level-$k$ networks,
    as well as an explicit description of their large-scale geometry.
    Another example illustrating these constraints is given
    in~\cite{bienvenu2024zero}, where it is proved that if a combinatorial
    model of phylogenetic networks can be obtained as a blowup of a
    Galton--Watson tree, then for any fixed subgraph $S$ the fraction of the
    networks that contain $S$ in the corresponding combinatorial class is
    either~0 or~1.
\end{itemize}

The tractability of blowups of Galton--Watson trees comes, with varying degrees
of directness, from that of the underlying trees. In particular, the remarkable
asymptotic behavior of large size-conditioned critical Galton--Watson trees
often plays a crucial role. Here, we only focus on their local limit -- which,
as we will see, has a universal structure known as \emph{Kesten's tree}.

\enlargethispage{1ex}

\begin{definition} \label{defKesten}
Let $\eta = (\eta_i)_{i \geq 0}$ be a probability distribution on the
integers such that $\eta_0 > 0$ and $\sum_{i \geq 0} i \eta_i = 1$.
The \emph{Kesten tree associated to $\eta$} is the
two-type (\emph{spine}/\emph{regular}) Galton--Watson tree $T_*$ such that:
  \begin{itemize}
    \item \emph{Regular} vertices have offspring distribution $\eta$, and all
      of their children are regular vertices.
    \item \emph{Spine} vertices have offspring distribution $\hat{\eta}$ given
      by $\hat{\eta}_i = i \eta_i$, and exactly one of their children (whose
      order is chosen uniformly) is a spine vertex.
    \item The root of $T_*$ is a spine vertex.
  \end{itemize}
A Kesten tree $T_*$ is always infinite, and has exactly one ray 
$v_0\to v_1\to \dots$ starting from the root. This ray is known as the \emph{spine}.
\end{definition}

Let $T$ be a Galton--Watson tree with critical offspring distribution $\eta$
such that $\eta_0 > 0$.  For all $n$ such that $\Prob{\Abs{T}=n}>0$, where
$\Abs{T}$ denotes the number of leaves of~$T$, let $T_n$ denote a random tree
distributed as $T$ conditioned to have $n$ leaves. It is well-known that $T_n
\to T_*$ in distribution for the local topology, where $T_*$ is the Kesten
tree associated to $\eta$; see e.g.\ \cite[Theorem~7.1]{janson2012simply}.

Let now $G$ be a blowup of $T$ with respect to some family~$\nu$ of
leaf-exchangeable finite random graphs, and let $G_n$ be
distributed as $G$ conditioned to have $n$ leaves, noting that
$G_n$ can equivalently be described as a blowup of $T_n$ with respect to~$\nu$.
Because the blowup procedure is a local operation, the convergence of
$T_n$ to $T_*$ implies that $G_n \to G_*$ in distribution for the local
topology, where $G_*$ is a blowup of $T_*$ with respect to~$\nu$.

The following theorem, which we prove in Appendix~\ref{appProofsBlowUps},
is a continuity criterion for the $B_2$ index of blowups of Galton--Watson
trees.

\begin{theorem} \label{thmContinuityB2BlowUps}
With the notation above, assuming that the offspring distribution~$\eta$ is
critical, satisfies $\eta_0>0$ and has a finite third moment, we have:
\begin{mathlist}
  \item $B_2(G_n) \to B_2(G_*)$ in distribution.
  \item For all $p\geq 1$,
    $\ExpecBrackets{B_2(G_n)^p} \to \ExpecBrackets{B_2(G_*)^p}$,
    and all these moments are finite.
\end{mathlist}
\end{theorem}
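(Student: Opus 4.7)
The plan is to combine a deterministic telescoping identity coming from iterating the grafting property (Proposition~\ref{propGraftingFinite}) with the continuity criteria for $B_2$ provided by Propositions~\ref{propInequalityLimInfB2} and~\ref{propInequalityLimSupB2}, reinforced by uniform $p$-th moment bounds enabled by the finite third moment of $\eta$. Iterated grafting on every internal vertex of the underlying ordered tree $T$ yields, for any finite blowup $G$,
\[
  B_2(G) \;=\; \sum_{v \in T,\ v\text{ internal}} q_v\, B_2(\Head{v}),
\]
where $q_v$ is the probability that the directed random walk on $G$ reaches (the image in $G$ of) $v$. This identity extends to $G_*$ as a monotone limit of the sums associated to the truncations $[G_*]_k$, once the right-hand side is shown to be finite.

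For part (i), I would begin by proving that the directed random walk on $G_*$ is absorbed in a leaf almost surely: because $\eta_0>0$ and the spine vertices of Kesten's tree have size-biased offspring law $\hat\eta$, infinitely many spine heads have at least two children, so the walk stays on the spine forever with probability~$0$, and once it leaves the spine it lands in a finite unconditioned Galton--Watson subtree where it is absorbed in finite time. This hypothesis activates Proposition~\ref{propInequalityLimInfB2}, giving $\liminf_n B_2(G_n) \geq B_2(G_*)$ along Skorokhod representatives of the local convergence $G_n \to G_*$. For the matching upper bound, I would truncate the telescoping sum at tree-depth $k$: by local convergence of $T_n$ to the Kesten tree $T_*$ and of the heads, the truncated sums converge almost surely, and the tails are uniformly small thanks to the moment bounds described below; together these verify the hypotheses of Proposition~\ref{propInequalityLimSupB2}.

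For part (ii), the goal is $\sup_n \mathbb{E}[B_2(G_n)^p] < \infty$, which delivers uniform integrability of $B_2(G_n)^p$ and hence moment convergence. Expanding the $p$-th power of the telescoping identity and using the elementary bound $B_2(\Head{v}) \leq \log_2 d^+(v)$ reduces matters to controlling multi-index sums of the form $\mathbb{E}[q_{v_1}\cdots q_{v_p} \prod_j \log_2 d^+(v_j)]$. Via the many-to-one formula and the spinal decomposition for size-conditioned Galton--Watson trees, these can be re-expressed as expectations under $p$-fold Kesten-type laws. Criticality of $\eta$ together with the finite third moment then control all the relevant statistics (spine length, degrees along the spine, and subtree sizes), yielding the desired uniform bound.

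The main obstacle is the tail control, which is needed both for the almost-sure upper bound in (i) and for the uniform moment bound in (ii). Because $B_2$ is nowhere continuous for the local topology, the local convergence $G_n \to G_*$ cannot be exploited directly; mass can leak to the boundary of $G_n$ through heads with atypically large out-degree lying deep in the tree, and these must be suppressed uniformly in $n$. The finite third moment hypothesis is precisely what makes the standard Galton--Watson tail estimates strong enough to do so simultaneously at all depths and for all $p$; coordinating these bounds across the whole multi-sum expansion is the technical crux of the proof.
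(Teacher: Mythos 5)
Your overall architecture (couple $G_n$ with $G_*$, truncate at a growing depth, control the tail via the grafting property, and prove uniform integrability for the moments) is the right one, and you correctly identify tail control as the crux. But there is a genuine gap at exactly that crux: you never explain how to make the truncation depth grow fast enough. Qualitative local convergence $G_n\to G_*$ only gives agreement of the truncations for each \emph{fixed} $k$; under a Skorokhod-type coupling the depth up to which $G_n$ and $G_*$ agree does tend to infinity, but at an uncontrolled random rate. Your tail estimates, however, need the truncation depth $k_n$ to beat $\log\log n$: the hypothesis of Proposition~\ref{propInequalityLimSupB2} is $\Prob{h(X_\infty)>k_n}\log\Abs{G_n}\to 0$, and since the relevant escape probability decays like $(1-\eta_0)^{k_n}$ while $\Abs{G_n}=n$, this forces $k_n\gg\log\log n$; the moment convergence needs even more, namely that the disagreement probability times $(\log_2 n)^p$ vanish, i.e.\ a polynomially small disagreement probability. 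Supplying this \emph{quantitative} coupling is the real technical content of the paper's proof: it introduces the pointed truncations $T_n^{k}$, $T_*^{k}$ and proves $d_{\mathrm{TV}}(T_n^{k_n},T_*^{k_n})=\Theta(k_n/\sqrt{n})$ (Proposition~\ref{propDTV}) via a Berry--Esseen-type local central limit theorem --- and \emph{that} is where the third-moment hypothesis enters, not in ``standard Galton--Watson tail estimates'' for degrees and subtree sizes as you suggest. With this bound and $\log\log n\ll k_n\ll n^{1/2-\epsilon}$, Borel--Cantelli gives $G_n^{k_n}=G_*^{k_n}$ eventually along subsequences, and the grafting property yields the sandwich $B_2(G_*^{k_n})\leq B_2(G_n)\leq B_2(G_*^{k_n})+p_{k_n}\log_2 n$ from which everything follows.

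Two further remarks. First, your plan for the $p$-th moments --- expanding the telescoped sum over the size-conditioned tree via ``many-to-one'' formulas --- is underspecified and would in any case hit the same wall: the law of a truncation of $T_n$ is that of $T_*^{k}$ reweighted by $f(n-\Abs{\mathbf{t}}+1)/f(n)$ with $f(n)=n\,\Prob{\Abs{T}=n}$, and controlling that ratio uniformly is again the local CLT with a rate. The paper's route is lighter: bound $B_2(G_n)^p$ by $(\log_2 n)^p$ on the polynomially unlikely disagreement event and by $2^{p-1}\big(B_2(G_*)^p+(p_{k_n}\log_2 n)^p\big)$ otherwise, then prove $\Expec{B_2(G_*)^p}<\infty$ via the spine decomposition of $G_*$ --- which is where your telescoping identity genuinely earns its keep. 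Second, extending that identity to $G_*$ ``as a monotone limit'' is not automatic: one must show the increasing sequence of truncated entropies actually attains $B_2(G_*)$ rather than a strictly smaller value (Proposition~\ref{propStubSeq}, which requires an argument about ends). Your claim that the walk on $G_*$ is absorbed in a leaf almost surely is correct, but the clean justification is $\Expec{p_k}=(1-\eta_0)^k\to 0$ (Lemma~\ref{lemPkSmall}), not merely that infinitely many spine heads branch.
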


Note the assumption in Theorem~\ref{thmContinuityB2BlowUps} that the offspring
distribution~$\eta$ has a finite third moment. This is because,
as already pointed out in Section~\ref{secB2InfiniteNetwork},
the mere convergence of $G_n$ to $G_*$ is not sufficient to get the convergence
of~$B_2(G_n)$: we also need to control the speed of convergence. Loosely speaking,
we need an upper bound on the total variation distance between
$[G_n]_{k_n}$ and $[G_*]_{k_n}$, for suitable sequences $(k_n)$.
This upper bound comes from a total variation bound for the convergence
of Galton--Watson trees: see Proposition~\ref{propDTV} in
Appendix~\ref{appTotalVariationBound}.
This total variation bound relies on a Berry--Esseen type local central limit
theorem, hence the third moment condition.

Combined with the independence between the base tree and the random networks
used in the blowup procedure, the recursive structure of Galton--Watson
trees makes it possible to get a simple expression for the expected value
of the $B_2$ index.

\begin{theorem} \label{thmExpecB2BlowUpGW}
Let $T$ be a Galton--Watson tree whose offspring distribution $\eta$ is such that
$\eta_0 > 0$, and let $G$ be a blowup of $T$ with respect to $\nu$. Then,
\[
  \ExpecBrackets{B_2(G)}  \;=\; \frac{1}{\eta_0} \, \ExpecBrackets{f(\xi)}\,, 
\]
where $\xi \sim \eta$, $f(0) = 0$ and, for $k \geq 1$,
$f(k) = \ExpecBrackets{B_2(\Head{k})}$, where 
$\Head{k} \sim \nu_k$.
If in addition $\eta$ has mean~1, then denoting by $T_*$ the Kesten tree
associated to $T$ and by $G_*$ the blowup of $T_*$ with respect to~$\nu$,
\[
  \ExpecBrackets{B_2(G_*)} \;=\; \frac{1}{\eta_0}
  \big(\ExpecBrackets{f(\xi)} \;+\; \ExpecBrackets*{\normalsize}{f(\hat{\xi}\,)}\big)\,,
\]
where $\smash{\hat{\xi}}\sim \hat{\eta}$, the size-biased distribution
associated to $\eta$ -- that is, $\hat{\eta}_k = k\, \eta_k$.
\end{theorem}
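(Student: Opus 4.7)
The plan is to decompose $G$ (resp.\ $G_*$) at its top by the grafting property (Proposition~\ref{propGraftingFinite}, extended to infinite networks in Appendix~\ref{appPropB2}) and to use the independence and leaf-exchangeability built into the blowup construction to turn $\ExpecBrackets{B_2(\cdot)}$ into a linear fixed-point equation.

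First I would condition on $\xi := d^+(\mathrm{root}(T)) \sim \eta$. If $\xi = 0$, then $G$ is reduced to a single leaf, so $B_2(G) = 0 = f(0)$. If $\xi = k \geq 1$, by construction the top of $G$ is a copy of $\Head{k} \sim \nu_k$, and each of its $k$ leaves is identified with the root of an independent blowup $G^{(j)}$ distributed as $G$. The grafting property applied iteratively at these $k$ leaves gives
\[
B_2(G) \;=\; B_2(\Head{k}) \;+\; \sum_{j=1}^k p_j\, B_2(G^{(j)})\,,
\]
where $(p_j)$ is the distribution of the directed random walk on $\Head{k}$, summing to $1$ because $\Head{k}$ is finite. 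Taking conditional expectations given $(k, \Head{k})$, using that the $G^{(j)}$ are iid copies of $G$ independent of $\Head{k}$, and then averaging over $\xi$ yields
\[
\ExpecBrackets{B_2(G)} \;=\; \ExpecBrackets{f(\xi)} \;+\; (1-\eta_0)\,\ExpecBrackets{B_2(G)}\,,
\]
where the $\xi = 0$ term is absorbed into $f(0) = 0$. Solving for $\ExpecBrackets{B_2(G)}$ gives the first identity.

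For the second identity, apply the same decomposition at the root of the Kesten tree $T_*$, whose offspring count is $\hat{\xi} \sim \hat{\eta}$. Exactly one of the $\hat{\xi}$ children is the spine vertex whose descendant blowup is distributed as $G_*$; the remaining $\hat{\xi} - 1$ children yield iid copies of $G$. By leaf-exchangeability of $\nu_{\hat{\xi}}$, the spine child is matched to a uniformly chosen leaf $J$ of $\Head{\hat{\xi}}$, independently of $\Head{\hat{\xi}}$, so that $\ExpecBrackets{p_J \mid \Head{\hat{\xi}}} = 1/\hat{\xi}$ and $\ExpecBrackets{\sum_{j \neq J} p_j \mid \Head{\hat{\xi}}} = 1 - 1/\hat{\xi}$. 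The identity $\ExpecBrackets{1/\hat{\xi}} = \sum_{k \geq 1} \eta_k = 1 - \eta_0$, immediate from $\hat{\eta}_k = k \eta_k$, then produces
\[
\ExpecBrackets{B_2(G_*)} \;=\; \ExpecBrackets{f(\hat{\xi})} \;+\; (1-\eta_0)\,\ExpecBrackets{B_2(G_*)} \;+\; \eta_0\,\ExpecBrackets{B_2(G)}\,,
\]
from which the announced formula follows after plugging in the first identity.

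The only genuine difficulty lies in justifying the grafting decomposition and the exchange of expectation with (possibly infinite) sums when $G$ or $G_*$ is infinite; both are covered by the extension of the $B_2$ index and of the grafting property to infinite phylogenetic networks developed in Appendix~\ref{appPropB2}, and non-negativity of $B_2$ ensures that every identity remains meaningful in $[0,+\infty]$.
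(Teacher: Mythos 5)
Your proof is correct and follows essentially the same route as the paper's: decompose at the root via the grafting property, use the independence and leaf-exchangeability of the blowup to reduce to a linear equation in $\ExpecBrackets{B_2(G)}$ (resp.\ $\ExpecBrackets{B_2(G_*)}$), and use $\ExpecBrackets*{\normalsize}{1/\hat{\xi}} = 1-\eta_0$ for the spine term. The only difference is cosmetic bookkeeping (the paper writes the spine equation as $\alpha_* = \ExpecBrackets{f(\hat{\xi})} + \alpha + (\alpha_*-\alpha)\ExpecBrackets{q_S}$, which is your equation rearranged).
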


\begin{proof}
First, we use the recursive structure of~$G$ (see Figure~\ref{figLocalLimitGalled})
to characterize the distribution of~$B_2(G)$. For this, note that:
\begin{itemize}
\item with probability $\eta_0$, $G$ is reduced to its root, and $B_2(G) = 0$;
\item with probability $\eta_k$, the root of $T$ has $k$ children and
  thus the network $\Gamma$ associated with it is distributed as $\nu_k$.
  In that case, by the grafting property,
  \begin{equation} \label{eqRecursionDistriB2}
    B_2(G) \;\mathrel{\overset{d}{=}}\;
    B_2(\Gamma) \;+\; \sum_{i = 1}^k q_i\,B_2(G^{(i)}) \,,
  \end{equation}
  where $G^{(1)}, \ldots, G^{(k)}$ are independent copies of $G$, and
  $q_i$ is the probability that the directed random walk started from the
  root of $\Gamma$ reaches its leaf labeled~$i$.
\end{itemize}

  \begin{figure}[h]
  \centering
  \captionsetup{width=.9\linewidth}
  \includegraphics[width=0.85\linewidth]{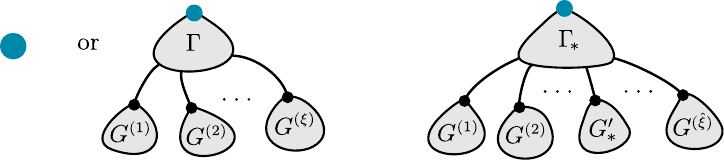}
  \caption{The recursive structure of $G$ (on the left) and of $G_*$ (on the right).
  The network $G$ is either reduced to a single vertex or obtained
  by grafting $\xi$ independent copies of itself on the leaves of
  a network $\Gamma \sim \nu_{\xi}$. By contrast, $G_*$ is never reduced
  to a single vertex: it is obtained by grafting $\hat{\xi}-1$
  independent copies of $G$  and an independent copy of itself on the leaves of
  a network $\Head{*} \sim \nu_{\hat{\xi}}$.}
  \label{figLocalLimitGalled}
\end{figure}

We now consider $G_n$ the subgraph of $G$ that corresponds to the blowup
of $[T]_n$, for each $n\geq 0$.
Note that $G_{n+1}$ is composed of finite graphs grafted on $G_n$, and that the
sequence $(G_n)_{n\geq0}$ increases to $G$: it is what we call a \emph{stub
sequence} of $G$, see Definition~\ref{defStubSeq} in the Appendix.
It is not hard to see -- and it is proved in Proposition~\ref{propStubSeq} --
that $(B_2(G_n))_{n\geq 0}$ is nondecreasing and tends to $B_2(G)$.
Let $\xi$ denote the number of children of the root of $T$, and write $\alpha_n
\defas \ExpecBrackets{B_2(G_n)}$ for conciseness. Conditional on $\Set{\xi = k}$,
we have that $q_i$ is independent of $G^{(i)}$ and that $\sum_{i=1}^k q_i = 1$.
Moreover, each network $G^{(i)}$ is distributed as $G$. As a result, by
taking expectations in Equation~\eqref{eqRecursionDistriB2} we get
\[
  \begin{dcases}
    \;\ExpecBrackets{B_2(G_{n+1}) \given \xi = 0} \;=\; 0 \\
    \;\ExpecBrackets{B_2(G_{n+1}) \given \xi = k} \;=\; \alpha_n + f(k) \,,
  \end{dcases}
\]
where $f(k) \defas \ExpecBrackets{B_2(\Gamma) \given \xi = k}$.
Plugging this in $\alpha_{n+1} = \sum_{k\geq 0} \eta_k \, \ExpecBrackets{B_2(G_{n+1}) \given \xi = k}$, we get
\[
 \alpha_{n+1} = \alpha_{n}(1-\eta_0) + \ExpecBrackets{f(\xi)}\,.
\]
Solving for $(\alpha_n)_{n\geq 0}$ yields
\[
  \alpha_0=0, \quad \alpha_1 = \ExpecBrackets{f(\xi)}, \quad \dots, \quad \alpha_n = \ExpecBrackets{f(\xi)} \sum_{k=0}^{n-1}(1-\eta_0)^{k} ,
\]
and by taking the limit as
$n\to\infty$, we get
$\alpha_n \to \alpha \defas \ExpecBrackets{B_2(G)} = \eta_0^{-1} \ExpecBrackets{f(\xi)}$,
proving the first part of the theorem.

Similarly, let $\smash{\hat{\xi}}$ denote the number of leaves of the root of
$T_*$, recalling that $\smash{\hat{\xi}}\sim \hat{\eta}$, where $\hat{\eta}$
denotes the size-biased distribution associated to $\eta$.
Let $\Head{*}$ denote the network associated to the root of $T_*$, and set
$\alpha_* \defas \ExpecBrackets{B_2(G_*)}$.  As previously, write $q_i$ for the
probability that the directed random walk started from the root of $\Head{*}$
goes through its leaf labeled $i$.
Finally, let $S$ denote the leaf of $\Head{*}$ through which the spine of $G_*$
goes, and write $I = \Set*{1, \ldots, \hat{\xi}} \setminus \Set{S}$.
Then,
\begin{equation}\label{eq:B_2(G^*)-recursive-eq}
    B_2(G_*) \;\mathrel{\overset{d}{=}}\;
  B_2(\Head{*}) \;+\; \sum_{i \in I} q_i\, B_2(G^{(i)}) \;+\;
  q_S\, B_2(G'_*)
\end{equation}
where $G^{(1)}, G^{(2)}, \ldots$ are independent copies of $G$, and
$G'_*$ is an independent copy of $G_*$.
Similarly, we define $(G_{*,n})_{n\geq 0}$ a stub sequence of
$G_*$ by removing the part of the graph descending from the $n$-th spinal
vertex.
Taking expectations and recalling
that $q_1 + \cdots + q_{\hat{\xi}} = 1$, we get
\begin{equation} \label{eqProofExpectedValueB2Blowup02}
  \beta_{n+1}\;\defas\;\ExpecBrackets{B_2(G_{*,n+1})}
    \;=\; \ExpecBrackets{B_2(\Head{*})} \;+\; \alpha \;+\; 
    (\beta_{n} - \alpha)\, \ExpecBrackets{q_S} \,, 
\end{equation}
In this expression, $\ExpecBrackets{B_2(\Head{*})} = \ExpecBrackets{f(\smash{\hat{\xi}}\,)}$.
Next, to compute $\ExpecBrackets{q_S}$, note that
\[
  \ExpecBrackets{q_S \given \Head{*}} \;=\;
  \sum_{i = 1}^{\hat{\xi}} q_i\, \underbrace{\Prob{S = i \given \Head{*}}}_{1/\hat{\xi}} \;=\;
  \frac{1}{\hat{\xi}} \,,
\]
from which it follows that $\ExpecBrackets{q_S} = \sum_{k \geq 1} \frac{\hat{\eta}_k}{k} =
\sum_{k \geq 1} \eta_k = 1 - \eta_0 $. Plugging this in
Equation~\eqref{eqProofExpectedValueB2Blowup02} and solving for $(\beta_{n})_{n\geq 0}$
then yields the desired expression for $\ExpecBrackets{B_2(G_{*})} = \lim_{n\to\infty} \beta_{n}$.
\end{proof}

Note that since the function $f$ in Theorem~\ref{thmExpecB2BlowUpGW} satisfies
$f(k) \leq \log_2 k$ for $k \geq 1$, this implies that if $G$ is a blowup
of~$T$, then $\ExpecBrackets{B_2(G)} \leq \ExpecBrackets{B_2(T)}$. In fact, it is not too hard
to see that we have the stronger statement
\[
  \ExpecBrackets{B_2(G) \given T} \;\leq\; B_2(T) \,, 
\]
as proved in Corollary~\ref{corIneqB2Blowup} from
Appendix~\ref{appProofsBlowUps}. In particular, this immediately implies that
if $T$ is a random tree (not necessarily a Galton--Watson tree) and $G$ is a
blowup of $T$, then $B_2(T)$ is second-order stochastically dominant over
$B_2(G)$; see Corollary~\ref{corStochasticDominance}.

Having introduced all the required tools, we now turn to
the study of the $B_2$ index of uniform leaf-labeled galled trees.

\subsection{Application to leaf-labeled galled trees} \label{secAppLocLimGalledTrees}

Let us start by briefly recalling the blowup construction of uniform
leaf-labeled galled trees introduced by Stufler in~\cite{stufler2022branching};
we refer the reader to this article for a rigorous justification
of the sampling procedure outlined below.

Following \cite{stufler2022branching}, let a \emph{head of size $k$} be:
\begin{itemize}
  \item for $k = 2$, a galled tree with 2 labeled leaves;
  \item for $k \geq 3$, a galled tree with a single gall of length $k+1$ and $k$ labeled leaves.
\end{itemize}
Let then $a_k$ denote the number of heads of size $k$ -- so that
$a_1 = 0$, $a_2 = 3$, $a_3 = 9$; see Figure~\ref{figBlowUpGalledTrees}.
To count the number of heads of size $k \geq 3$: first, orient the gall and
choose the position of the reticulation in the gall {($k$~possibilities);} then,
label the $k$~leaves {($k!$~possibilities);} finally, forget the orientation of
the cycle, and observe that each head is obtained exactly twice. This reasoning
also works for $k = 2$, but in that case we have to add the network consisting
of a single cherry. Putting the pieces together, we get $a_1 = 0$ and, 
for $k \geq 2$,
\[
  a_k \;=\; \frac{k \cdot k!}{2} \;+\; \Indic{k = 2} \,.
\]
Next, let $\nu = (\nu_k)_{k\geq 2}$, where $\nu_k$ denotes the uniform
distribution on the heads of size~$k$. It is not too hard to see that if
$T_n$ is a simply generated tree with $n$ leaves (see
e.g.~\cite{janson2012simply}) whose weight sequence is given by $w_k = a_k / k!$,
then the blowup of $T_n$ with respect to $\nu$ is a uniform leaf-labeled
galled tree.

\begin{figure}[h]
  \centering
  \captionsetup{width=.8\linewidth}
  \includegraphics[width=0.7\linewidth]{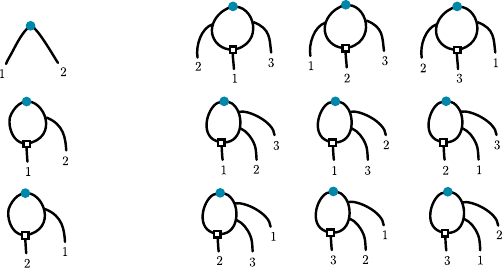}
  \caption{The heads of size $k$ used in the blowup construction of galled trees, for
  $k =2$ (left) and $k = 3$ (right). The blue dots are the roots, and the
  white squares are reticulations. Note that the heads are leaf-labeled.}
  \label{figBlowUpGalledTrees}
\end{figure}

To show that the simply generated tree $T_n$ can also be obtained as a
conditioned critical Galton--Watson tree, we must find an offspring
distribution $(\eta_k)_{k \geq 0}$ with mean~1 that is an exponential tilt of
the weight sequence $(w_k)_{k \geq 1}$ -- that is, we must find $\theta > 0$
such that
\[
  \begin{cases}
    \eta_k = \theta^{k-1} \, w_k\; \text{ for } k \geq 1 \text{ and }
       \eta_0 = 1 - \sum_{k \geq 1} \eta_k  & \text{(exponential tilt)} \\
    \sum_{k \geq 0} k\,\eta_k = 1 & \text{(criticality).}
  \end{cases}
\]
Note that the specific form of this exponential tilt (where, to get a probability
distribution, we change $\eta_0$ -- instead of by multiplying all $\eta_k$'s by
a constant) is due to the fact that we are conditioning on the number of
leaves (as opposed to  the number of vertices).

To find $\theta$, set $g(z) = \sum_{k\geq 1} w_k\, z^{k}$, so that
$\sum_{k\geq 1}k\,\eta_k = 1 \iff g'(\theta) = 1$. Elementary calculations
then show that
\[
  g'(z) \;=\; \frac{(2z^2 - 3z + 3)(z-2)z}{2(z-1)^3} \,,  
\]
from which we get
\[
  \theta \;=\; \frac{5 - \sqrt{17}}{4} \,.
\]
Note in passing that $\theta$ is also the first-order term in the expansion of
the generating function of galled trees at $z = 1/8$ -- see
Eq.~\eqref{exp-GTz}.

Let us recap the construction.

\begin{proposition} \label{propConstructionGalledTrees}
Let $T_n$ be a critical Galton--Watson tree conditioned to have $n$ leaves whose
offspring distribution $(\eta_k)_{k \geq 0}$ is defined by
\[
  \eta_0 = \frac{(2 - \theta)(\theta^2 - 3\theta  + 1)}{2(\theta - 1)^2},\quad\;
  \eta_1 = 0, \quad\;
  \eta_2 = \frac{3}{2}\, \theta, \quad\;
  \eta_k = \frac{k}{2}\,\theta^{k-1}\; (k \geq 3),
\]
with $\theta = (5 - \sqrt{17})/4$.
Let then $G_n$ be the blowup of $T_n$ with respect to $(\nu_k)_{k\geq 2}$, where
$\nu_k$ denotes the uniform distribution on the set of heads of size~$k$.
Then, $G_n$ is uniformly distributed on the set of galled trees with
$n$ labeled leaves.
\end{proposition}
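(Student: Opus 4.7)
The proposition is essentially a summary of a construction developed in the preceding pages, and my plan is to make this explicit in two steps and then close with the algebraic verification.

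Step one: use the blowup representation from~\cite{stufler2022branching}, recalled in the text just above the proposition: if $\widetilde T_n$ is a simply generated tree with $n$ leaves whose weight sequence is $w_k = a_k/k!$, then the blowup of $\widetilde T_n$ with respect to $(\nu_k)_{k\geq 2}$ is uniformly distributed on the set of galled trees with $n$ labelled leaves. With $a_1=0$, $a_2=3$, and $a_k = k\cdot k!/2$ for $k\geq 3$, this yields $w_1=0$, $w_2=3/2$ and $w_k = k/2$ for $k\geq 3$. The factor $1/k!$ here absorbs, through the leaf-exchangeability of $\nu_k$ and the uniform matching step of the blowup, the freedom to permute the children of each internal vertex in the ordered tree $\widetilde T_n$.

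Step two: invoke the standard equivalence between simply generated trees conditioned on the number of leaves and critical Galton--Watson trees conditioned on the number of leaves (see e.g.~\cite[Section~4]{janson2012simply} for the leaf-conditioning version). Concretely, for any $\theta>0$ strictly inside the disk of convergence of $g(z) = \sum_{k\geq 1} w_k z^k$, the sequence $\eta_k := \theta^{k-1}w_k$ for $k\geq 1$, together with $\eta_0 := 1 - g(\theta)/\theta$, defines an offspring distribution whose associated Galton--Watson tree, conditioned to have $n$ leaves, has the same law as $\widetilde T_n$. This offspring distribution has mean $g'(\theta)$, so the choice $g'(\theta)=1$ both makes it critical and produces a critical Galton--Watson representation of $\widetilde T_n$.

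To conclude, I would carry out the tilting explicitly. Summing the weights yields $g(z) = \tfrac{z^2}{2} - \tfrac{z}{2} + \tfrac{z}{2(1-z)^2}$, whose derivative agrees with the rational form for $g'(z)$ stated in the text. Solving $g'(\theta)=1$ reduces to a polynomial equation with the unique root $\theta = (5-\sqrt{17})/4 \in (0,1)$ in the disk of convergence (also the first-order term of $\mathrm{GT}(z)$ at its singularity; see~\eqref{exp-GTz}). The values $\eta_1 = 0$, $\eta_2 = 3\theta/2$ and $\eta_k = k\theta^{k-1}/2$ for $k\geq 3$ then follow immediately from $\eta_k = \theta^{k-1}w_k$. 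The only nontrivial remaining step is the algebraic simplification of $\eta_0 = 1 - g(\theta)/\theta$ into the stated closed form $\tfrac{(2-\theta)(\theta^2-3\theta+1)}{2(\theta-1)^2}$, using the defining relation $g'(\theta)=1$ to eliminate higher powers of $\theta$, together with the check that this expression is nonnegative. This is the main (though routine) computational obstacle, and is most safely discharged with a computer algebra system.
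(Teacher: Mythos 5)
Your proposal is correct and follows essentially the same route as the paper, which develops exactly this argument (Stufler's blowup representation with weights $w_k = a_k/k!$, followed by the leaf-conditioning exponential tilt $\eta_k = \theta^{k-1}w_k$ with $g'(\theta)=1$) in the text immediately preceding the proposition rather than in a separate proof. One minor remark: the closed form for $\eta_0$ follows from $1 - g(\theta)/\theta = \tfrac{(3-\theta)(1-\theta)^2 - 1}{2(1-\theta)^2}$ as a direct polynomial identity, so the relation $g'(\theta)=1$ is not actually needed for that simplification.
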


Since the offspring distribution $\eta$ in
Proposition~\ref{propConstructionGalledTrees} has a finite third moment,
it follows from Theorem~\ref{thmContinuityB2BlowUps} that, as $n \to \infty$,
$B_2(G_n)$ converges in distribution and in all moments to $B_2(G_*)$, where
$G_*$ is the local limit of $G_n$ -- which, as we have already seen, is the
blowup with respect to $\nu$ of the Kesten tree $T_*$ associated to~$\eta$.
Therefore, $\lim_n \ExpecBrackets{B_2(G_n)} = \ExpecBrackets{B_2(G_*)}$ can be
computed using the expression given in Theorem~\ref{thmExpecB2BlowUpGW}.

Because when $\xi = 2$ the head can be a cherry (whose structure differs from
other heads, see Figure~\ref{figBlowUpGalledTrees}), it will be convenient to
let $\beta_k$ denote the expected $B_2$ index of a network $\Head{k}$
sampled uniformly at random among the galled trees with a single gall of length
$k+1$ and $k$ labeled leaves; and to write the function $f$ from
Theorem~\ref{thmExpecB2BlowUpGW} as $f(2) = \frac{1}{3} + \frac{2}{3} \beta_2$
for $k = 2$ and $f(k) = \beta_k$ for $k\geq 3$.

To compute $\beta_k$, orient the gall of $\Head{k}$ uniformly at
random, then number its vertices from $1$ to $k$. For $r = 0, \ldots, k-1$, let
$\Head{k,r}$ denote the realization of $\Head{k}$ whose reticulation is on the
{$(r+1)$-th} vertex of the gall -- that is, there are $r$ vertices on one side
of the gall and $k-r-1$ on the other. Letting $\beta_{k, r} = B_2(\Head{k, r})$,
we have
\[
  \beta_{k, r} \;=\;
  \sum_{i = 1}^r \frac{i+1}{2^{i+1}} \;+\,
  \sum_{j = 1}^{k-r-1} \frac{j+1}{2^{j+1}} \;-\;
  \left(\frac{1}{2^{r+1}} + \frac{1}{2^{k-r}}\right)
  \log_2 \left(\frac{1}{2^{r+1}} + \frac{1}{2^{k-r}}\right)\,.
\]
Next, we need to integrate over the position of the reticulation, as described
by $r$ in the parametrization above.
Using the procedure described at the beginning of the section to enumerate the
heads, we see that $r$ is uniform on $\Set{0, \ldots, k-1}$. Therefore,
\begin{equation} \label{eqBetaK}
  \beta_k \;=\; \frac{1}{k} \sum_{r = 0}^{k-1} \beta_{k, r}\,.
\end{equation}

Finally, by substituting the expression for $\eta_k$ from
Proposition~\ref{propConstructionGalledTrees} in the expression of
Theorem~\ref{thmExpecB2BlowUpGW}, we get --
in the notation of the theorem,
\[
  \ExpecBrackets{f(\xi)} \;=\;
  \frac{\theta}{2} \;+\; \frac{1}{2}\sum_{k\geq 2} k\, \theta^{k-1}\,\beta_k
  \quad\text{and}\quad
  \ExpecBrackets*{\normalsize}{f(\hat{\xi}\,)} \;=\;
  \theta \;+\; \frac{1}{2}\,\sum_{k \geq 2} k^2\, \theta^{k-1} \beta_k \,.
\]
The following theorem summarizes the calculations of this section;
compare with the results from Section~\ref{secAnalyticCombinatorics}.

\begin{theorem}
Let $G_n$ be sampled uniformly at random among the set of galled trees with
$n$ labeled leaves. Then, as $n \to \infty$, $B_2(G_n)$ converges in distribution
and in all moments to a random variable $B_2(G_*)$ with finite moments of all
orders. In particular,
\[
  \ExpecBrackets{B_2(G_*)} \;=\; 
  \frac{1}{2\,\eta_0} \mleft(3\, \theta \;+\; \sum_{k \geq 2} (k+1)k\, \theta^{k-1} \beta_k \mright) \,,
\]
where $\theta$ and $\eta_0$ are explicit algebraic constants
given in Proposition~\ref{propConstructionGalledTrees}, and $\beta_k$ is
given in Equation~\eqref{eqBetaK}. Numerically,
$\alpha_* = 2.707911858984\dots$.
\end{theorem}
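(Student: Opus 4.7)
The plan is to assemble the three tools built up in this section. By Proposition~\ref{propConstructionGalledTrees}, $G_n$ has the same law as the blowup with respect to $(\nu_k)_{k\geq 2}$ of a critical Galton--Watson tree conditioned on having $n$ leaves, whose offspring distribution $\eta$ is fully explicit. I would first verify the hypotheses of Theorem~\ref{thmContinuityB2BlowUps}: $\eta_0>0$ follows by substituting $\theta = (5-\sqrt{17})/4 \approx 0.22$ into the formula (the three factors $2-\theta$, $\theta^2-3\theta+1 = 1-\theta(3-\theta)$, and $(\theta-1)^2$ are all positive); criticality is built into the defining equation $g'(\theta)=1$; and the third moment is finite since, for $k\geq 3$, $\eta_k = \tfrac{k}{2}\theta^{k-1}$ with $\theta<1$, so $\sum_k k^3 \eta_k$ converges by the ratio test. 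Theorem~\ref{thmContinuityB2BlowUps} then immediately gives convergence of $B_2(G_n)$ to $B_2(G_*)$ in distribution and in all moments, together with the finiteness of all limiting moments.

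For the explicit expression of $\alpha_* \defas \Expec{B_2(G_*)}$, I would apply Theorem~\ref{thmExpecB2BlowUpGW}, which says $\alpha_* = \eta_0^{-1}(\Expec{f(\xi)} + \Expec{f(\hat\xi)})$ with $f(k) = \Expec{B_2(\Head{k})}$ and $\Head{k}\sim\nu_k$. The only delicate point is the case $k=2$: the uniform distribution $\nu_2$ over heads of size~$2$ places mass $\tfrac13$ on the cherry (whose $B_2$ equals~$1$) and mass $\tfrac23$ on the two single-gall heads on two leaves (whose averaged $B_2$ is $\beta_2$), so that $f(2) = \tfrac13 + \tfrac23\beta_2$. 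For $k\geq 3$ one simply has $f(k) = \beta_k$ by definition.

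Plugging in $\eta_k$ and $\hat\eta_k = k\eta_k$ from Proposition~\ref{propConstructionGalledTrees}, and rearranging so that both sums start at $k=2$, yields
\[
  \Expec{f(\xi)} \;=\; \tfrac{\theta}{2} + \tfrac{1}{2}\sum_{k\geq 2} k\,\theta^{k-1}\beta_k, \qquad
  \Expec{f(\hat\xi)} \;=\; \theta + \tfrac{1}{2}\sum_{k\geq 2} k^2\,\theta^{k-1}\beta_k,
\]
where the extra terms $\tfrac\theta2$ and $\theta$ precisely absorb the cherry contribution $\tfrac13$ coming from $f(2)$. Summing these two expressions and dividing by $\eta_0$ gives the announced formula. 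Convergence of the series is routine: $\beta_k \leq \log_2 k$ because the head $\Head{k}$ has only $k$ leaves, and $\theta<1$, so $\sum_k k(k+1)\theta^{k-1}\beta_k$ converges. Finally, the numerical value $2.707911858984\ldots$ is obtained by truncating the series and evaluating $\beta_k$ via Equation~\eqref{eqBetaK}; the agreement with the constant of Theorem~\ref{thmLimExpecB2AnalyticCombinatorics} provides an independent cross-check of both approaches. There is no real obstacle in this argument: the continuity and expectation theorems do essentially all the work, and the only care needed is the bookkeeping around the cherry case $k=2$.
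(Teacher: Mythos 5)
Your proposal is correct and follows essentially the same route as the paper: invoke Proposition~\ref{propConstructionGalledTrees} to realize $G_n$ as a blowup of a conditioned critical Galton--Watson tree, check the hypotheses of Theorem~\ref{thmContinuityB2BlowUps} (the geometric decay of $\eta_k$ gives all moments, in particular the third), and then evaluate $\Expec{B_2(G_*)}$ via Theorem~\ref{thmExpecB2BlowUpGW} with the same bookkeeping $f(2)=\tfrac13+\tfrac23\beta_2$, $f(k)=\beta_k$ for $k\geq 3$. The algebra absorbing the cherry contribution into the terms $\tfrac{\theta}{2}$ and $\theta$ checks out and reproduces the paper's formula exactly.
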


\begin{remark} \label{rem:higher-moments-computations-proba}
The recursive structure of $G^*$ could in principle be used to compute
the higher moments of $B_2(G^*)$.
Indeed, raising both sides of Eq.~\eqref{eq:B_2(G^*)-recursive-eq} to the power~$\ell$ and
taking expectations, we get a recursion for
$\ExpecBrackets*{\normalsize}{B_2(G^*)^\ell}$ involving
$\ExpecBrackets*{\normalsize}{B_2(G)^\ell}$ and all lower-order moments.
However, the computations quickly become tedious and we did not push this approach further.
\end{remark}

\section*{Acknowledgments}

Part of this work was done while FB was at the Institute for Theoretical
Studies of ETH Zürich and was supported by Dr.~Max Rössler, the Walter Haefner
Foundation and the ETH Zürich Foundation. 
FB and JJD were in part supported by Agence Nationale de la Recherche (ANR),
grant ANR-24-CE40-7154.
MF and TCY were in part supported by National Science and Technology Council
(NSTC), Taiwan under grant NSTC-111-2115-M-004-002-MY2. 

\addcontentsline{toc}{section}{References}
\bibliographystyle{abbrvnat}
\bibliography{refs}

\appendix

\newpage 

\section*{Appendices} \label{secAppendices}
\addcontentsline{toc}{section}{Appendices}
\refstepcounter{section} 

\subsection{The limit law via the method of moments} \label{ll-mom}

The main purpose of this appendix is to prove the technical estimate (\ref{upper-bound-dl}) (see Lemma~\ref{bound-dl} below) in order to complete the proof of Theorem~\ref{ll-B2}. Set 
\begin{align*}
f_{\ell,1}(z)&\defas \frac{1}{2}\sum_{\substack{\ell_1+\ell_2+\ell_3=\ell\\\ell_1,\ell_2<\ell}}\binom{\ell}{\ell_1,\ell_2,\ell_3}\frac{1}{2^{\ell_1+\ell_2}}A^{[\ell_1]}(z)A^{[\ell_2]}(z),\\
f_{\ell,2}(z)&\defas\sum_{k\geq 1}\sum_{\substack{\ell_1+\cdots+\ell_{k+2}=\ell\\\ell_j<\ell, 1\leq j\leq k+1}}\binom{\ell}{\ell_1,\ldots,\ell_{k+2}}\prod_{s=1}^{k}\frac{1}{2^{(s+1)\ell_s}}\hspace{-0.2em}\left(\frac{1}{2}+\frac{1}{2^{k+1}}\right)^{\ell_{k+1}}\hspace{-0.7em}\mu_{k,0}^{\ell_{k+2}}\\
&\qquad\quad\times\prod_{s=1}^{k+1}A^{[\ell_s]}(z),\\
f_{\ell,3}(z)&\defas\frac{1}{2}\sum_{k,m\geq 1}\sum_{\substack{\ell_1+\cdots+\ell_{k+m+2}=\ell\\\ell_j<\ell,1\leq j\leq k+m+1}}\binom{\ell}{\ell_1,\cdots,\ell_{k+m+2}}\prod_{s=1}^{k}\frac{1}{2^{(s+1)\ell_s}}\prod_{t=1}^{m}\frac{1}{2^{(t+1)\ell_{k+t}}}\nonumber\\
&\qquad\quad\times\left(\frac{1}{2^{k+1}}+\frac{1}{2^{m+1}}\right)^{\ell_{k+m+1}}\mu_{k,m}^{\ell_{k+m+2}}\prod_{s=1}^{k}A^{[\ell_s]}(z)\prod_{t=1}^{m+1}A^{[\ell_{k+t}]}(z)
\end{align*}
so that $f_{\ell}(z)=f_{\ell,1}(z)+f_{\ell,2}(z)+f_{\ell,3}(z)$; see (\ref{flz}), where we have split the last sum of (\ref{flz}) into two parts since below, we use the first (simpler) part to explain our ideas and then treat the second (more complicated) part without repeating the similar details. Moreover, set 
\[
c_{\ell,i}\defas f_{\ell,i}(A^{[j]}(z)\leftrightarrow c_j),\qquad
d_{\ell,i}\defas f'_{\ell,i}((A^{[j]})'(z)\leftrightarrow d_j,A^{[j]}(z)\leftrightarrow c_j),
\]
and thus
\begin{equation}\label{cl+dl}
c_{\ell}=\frac{c_{\ell,1}+c_{\ell,2}+c_{\ell,3}}{g_{\ell}(\rho)},\qquad d_{\ell}=\frac{d_{\ell,1}+d_{\ell,2}+d_{\ell,3}}{g_{\ell}(\rho)}-c_{\ell}\frac{g_{\ell}'(\rho)}{g_{\ell}(\rho)},
\end{equation}
where notation is as in Section~\ref{secAnalyticCombinatorics}.

We first consider $c_{\ell}$ for which we have the following result.
\begin{lemma}\label{bound-cl}
There exists a constant $K>0$ such that $c_\ell\leq K^\ell\ell!$ for all $\ell\geq 0$.
\end{lemma}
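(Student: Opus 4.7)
The plan is to proceed by strong induction on $\ell$, choosing the constant $K$ sufficiently large at the end. The base case is trivial: $c_0=\rho<1\leq K^0\cdot 0!$ for any $K\geq 1$. For the inductive step, I use the decomposition $c_\ell=(c_{\ell,1}+c_{\ell,2}+c_{\ell,3})/g_\ell(\rho)$ from \eqref{cl+dl} and bound each $c_{\ell,i}$ using the induction hypothesis $c_j\leq K^j j!$ for $j<\ell$.

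Two preliminary estimates are required. First, a uniform lower bound $g_\ell(\rho)\geq c_g>0$ for $\ell\geq 1$. Since $\rho<1$ and every $\ell$-dependent coefficient in $g_\ell(\rho)$ contains a factor of the form $2^{-(s+1)\ell}$ or $(1/2+2^{-k-1})^\ell$, dominated convergence gives $g_\ell(\rho)\to 1$ as $\ell\to\infty$; combined with strict positivity at each fixed $\ell$ (direct check for small $\ell$, forced by finiteness of the recursively defined $c_\ell$ otherwise), this yields $c_g\defas\inf_{\ell\geq 1}g_\ell(\rho)>0$. Second, $M\defas\sup_{k,m\geq 0}\mu_{k,m}<\infty$, which is immediate from \eqref{mukm}.

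Bounding $c_{\ell,1}$ is the easiest step: the multinomial identity together with the induction hypothesis gives $\binom{\ell}{\ell_1,\ell_2,\ell_3}c_{\ell_1}c_{\ell_2}\leq \ell!\,K^{\ell_1+\ell_2}/\ell_3!$, and the constraint $\ell_1,\ell_2<\ell$ excludes the self-referential terms $c_\ell c_0$. Summing a simple geometric convolution yields $c_{\ell,1}\leq P(\ell)(K/2)^\ell \ell!$ for a polynomial $P$, which is at most $(c_g/3)K^\ell\ell!$ once $K$ is large enough.

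The same strategy handles $c_{\ell,2}$ and $c_{\ell,3}$, but here lies the main obstacle: one must show that the outer sums over $k$ (respectively $k,m$) converge \emph{uniformly} in $\ell$. Two kinds of decay enable this. First, the inner factors $2^{-(s+1)\ell_s}$ and $2^{-(t+1)\ell_{k+t}}$ turn each inner $\ell_s$-sum into a geometric series $\sum_{\ell_s\geq 0}(K/2^{s+1})^{\ell_s}\leq 2$ as soon as $s$ is large relative to $\log_2 K$, producing a product that is uniformly bounded in $k,m$. Second, a crucial factor $\rho^{k+m}$ emerges by using the \emph{sharp} value $c_0=\rho<1$ (rather than the crude $c_0\leq 1$ supplied by the induction hypothesis) for the many indices $\ell_j=0$ appearing when $k+m$ is larger than the number of nonzero $\ell_j$'s. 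The worst case to examine is the extreme term $\ell_{k+m+2}=\ell$ with all other $\ell_j=0$: without the sharp bound it would contribute $\tfrac{1}{2}\mu_{k,m}^\ell$ with no $(k,m)$-decay and the outer sum would diverge; with the sharp bound it becomes $\tfrac{1}{2}\mu_{k,m}^\ell\rho^{k+m+1}$, summing to at most $C\,M^\ell\leq (c_g/3)K^\ell\ell!$ once $K\geq M$. All remaining terms of $c_{\ell,2}$ and $c_{\ell,3}$ are majorized by analogous convolutions in which a factor $K^{\ell-\ell_{k+m+2}}/\ell_{k+m+2}!$ from the induction hypothesis combines with $\mu_{k,m}^{\ell_{k+m+2}}$ to leave an entire-function tail bounded by $e^M$, while the $\rho^{k+m}$-type decay controls the outer sums. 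Summing all three contributions gives $c_{\ell,1}+c_{\ell,2}+c_{\ell,3}\leq c_g\,K^\ell\ell!$ for $K$ sufficiently large, and division by $g_\ell(\rho)\geq c_g$ closes the induction.
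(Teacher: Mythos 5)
Your proposal follows essentially the same route as the paper's proof: induction on $\ell$ via the decomposition $c_\ell=(c_{\ell,1}+c_{\ell,2}+c_{\ell,3})/g_\ell(\rho)$, a uniform lower bound on $g_\ell(\rho)$ (the paper simply notes that $g_\ell(\rho)$ is increasing in $\ell$ with $g_1(\rho)>0$, which is cleaner than your dominated-convergence argument), the crude bounds $\mu_{k,0}\leq 2$ and $\mu_{k,m}\leq 3$, the multinomial-plus-induction estimate for $c_{\ell,1}$, and --- most importantly --- the observation that inserting the sharp value $c_0=\rho<1$ for the zero indices produces a factor of the form $\prod_{s=1}^{k}(\rho+\cdots)$, i.e.\ essentially $\rho^{k}$ (resp.\ $\rho^{k+m}$), which makes the outer sums over $k$ (and $m$) converge uniformly in $\ell$. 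That is exactly the paper's mechanism, and your identification of the extreme term $\ell_{k+m+2}=\ell$ (all other indices zero) as the one that would break without this refinement is on point.

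There is, however, one step that would fail as written. You claim that each inner sum is a geometric series $\sum_{\ell_s\geq 0}(K/2^{s+1})^{\ell_s}\leq 2$ ``as soon as $s$ is large relative to $\log_2 K$''. Since $K$ must ultimately be taken large (the paper needs $K\geq 512$), this ratio exceeds $1$ --- and the series diverges --- for every $s$ with $2^{s+1}\leq K$, and you do not say how those finitely many but problematic factors are to be controlled. The repair is pure bookkeeping, and it is what the paper does: the target bound already carries $K^\ell=K^{\ell_1+\cdots+\ell_{k+m+2}}$, so after dividing it out each inner factor is $\sum_{\ell_s\geq 0} c_{\ell_s}\big(\ell_s!\,(K2^{s+1})^{\ell_s}\big)^{-1}\leq \rho+\frac{c_1}{K2^{s+1}}+\sum_{\ell_s\geq 2}2^{-(s+1)\ell_s}$, a geometric series of ratio $2^{-(s+1)}<1$ for every $s\geq 1$; no case distinction on $s$ is needed, and this is also precisely where the $\rho$ that drives the outer decay appears. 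You apply exactly this cancellation to the index $\ell_{k+m+2}$, where $\mu_{k,m}^{\ell_{k+m+2}}/(K^{\ell_{k+m+2}}\ell_{k+m+2}!)$ sums to $e^{\mu_{k,m}/K}$, so the inconsistent treatment of the indices $\ell_1,\dots,\ell_{k+m}$ reads as an oversight rather than a missing idea --- but as stated that bound is divergent and must be corrected before the induction closes.
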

\begin{proof}
We prove the claim by induction on $\ell$, where we take
\[
  K \defas\max\left\{c_1,\sqrt{\frac{c_2}{2}},\ldots,\sqrt[9]{\frac{c_9}{9!}},512\right\}.
\]
Note that by this choice, the claim holds for $1\leq\ell\leq 9$. In addition, it trivially holds for $\ell=0$. Also, note that $g_{\ell}(\rho)$ is increasing in $\ell$ and $g_1(\rho)\equiv 0.570194101601\cdots$. Thus, by (\ref{cl+dl}), we have to show that
\[
c_{\ell,1}+c_{\ell,2}+c_{\ell,3}\leq g_{1}(\rho)K^{\ell}\ell!
\]
for $\ell\geq 10$. We assume that this claim holds for $\ell'<\ell$ and prove it for $\ell$. For this, we estimate $c_{\ell,i}$ with $i=1,2,3$ separately. 

First, for $c_{\ell,1}$, we have:
\begin{align}
c_{\ell,1} & \leq \frac{1}{2}\sum_{\substack{\ell_1+\ell_2+\ell_3=\ell\\ \ell_1,\ell_2<\ell}}\frac{\ell!}{\ell_3!}\frac{1}{2^{\ell_1+\ell_2}}K^{\ell_1+\ell_2}\leq\frac{\ell!}{2}\sum_{\substack{\ell_1+\ell_2+\ell_3=\ell\\ \ell_1,\ell_2<\ell}}\frac{K^{\ell-\ell_3}}{2^{\ell-\ell_3}}\nonumber\\
&=\frac{\ell!}{2}\left(\frac{K}{2}\right)^{\ell}\sum_{\substack{\ell_1+\ell_2+\ell_3=\ell\\ \ell_1,\ell_2<\ell}}\left(\frac{2}{K}\right)^{\ell_3}\leq\frac{\ell!}{2}\left(\frac{K}{2}\right)^\ell\sum_{\ell_3=0}^{\ell}\left(\frac{1}{256}\right)^{\ell_3}\sum_{\substack{\ell_1+\ell_2=\ell-\ell_3\\\ell_1,\ell_2<\ell}}1\nonumber\\
&\leq K^{\ell}\ell!\left(\frac{1}{2}\right)^{\ell+1}\sum_{\ell_3=0}^{\ell}\left(\frac{1}{256}\right)^{\ell_3}(\ell-\ell_3+1)\nonumber\\
&=K^{\ell}\ell!\left(\frac{1}{2}\right)^{\ell+1}\left(\frac{256}{255}\ell+\frac{256^{-\ell}}{65025}+\frac{65024}{65025}\right)\nonumber\\
 & \leq 0.005390234525\cdots K^{\ell}\ell!\label{c1},
\end{align}
where in the estimate of the second line, we used that $K\geq 512$, and the last inequality holds for $\ell\geq 10$.

Next, we consider $c_{\ell,2}$. Here, we first note that since $\ell_1+\cdots+\ell_{k+2}=\ell$, we have $k+1$ degrees of freedom, i.e., $\ell_{k+1}$ is fixed if $\ell_{1},\ldots,\ell_{k}$ and $\ell_{k+2}$ are decided. We replace $c_{\ell_{k+1}}$ on the right-hand side of $c_{\ell,2}$ by $K^{\ell_{k+1}}\ell_{k+1}!$ except when $\ell_{k+1}=0$ and $\ell_{k+1}=1$ where we replace it by $\rho$ and $c_1$, respectively. This gives
\begin{equation}\label{cl2-est}
c_{\ell,2}\leq\frac{9}{16}\sum_{k\geq 1}\sideset{}{'}\sum\binom{\ell}{\ell_1,\ldots,\ell_{k+2}}\prod_{s=1}^{k}\frac{1}{2^{(s+1)\ell_s}}\mu_{k,0}^{\ell_{k+2}}\left(\prod_{s=1}^{k} c_{\ell_s}\right)K^{\ell_{k+1}}\ell_{k+1}!,
\end{equation}
where $\ell_{k+1}=\ell-\ell_1-\cdots-\ell_{k}-\ell_{k+2}$, the second sum is over all $0<\ell_1+\cdots+\ell_{k}+\ell_{k+2}\leq\ell$ with $\ell_j<\ell,1\leq j\leq k$, and we have used that $\xi_{k,\ell_{k+1}}\leq9/16$ with
\[
\xi_{k,\ell_{k+1}}=\begin{cases}\rho,&\text{if}\ \ell_{k+1}=0,\\
\left(\frac{1}{2}+\frac{1}{2^{k+1}}\right)\frac{c_1}{K},&\text{if}\ \ell_{k+1}=1,\\
\left(\frac{1}{2}+\frac{1}{2^{k+1}}\right)^{\ell_{k+1}},&\text{if}\ \ell_{k+1}\geq 2.\end{cases}
\]
Next, by using $\mu_{k,0}\leq 2$ and plugging the induction hypothesis for $c_{\ell_s}$ into (\ref{cl2-est}) except when $s\neq k+1$ where we again use $\rho$ and $c_1$ if $\ell_s=0$ and $\ell_s=1$, we have
\begin{align}
c_{\ell,2}&\leq\frac{9}{16}K^{\ell}\ell!\sum_{k\geq 1}\prod_{s=1}^{k}\left(\rho+\frac{c_1}{K2^{s+1}}+\sum_{\ell_s\geq 2}\frac{1}{2^{(s+1)\ell_s}}\right)\sum_{\ell_{k+2}=0}^{\infty}\frac{2^{\ell_{k+2}}}{K^{\ell_{k+2}}\ell_{k+2}!}\nonumber\\
&\leq\frac{9e^{1/256}}{16}K^{\ell}\ell!\sum_{k\geq 1}\prod_{s=1}^{k}\left(\rho+\frac{c_1}{K2^{s+1}}+\frac{1}{2^{s+1}(2^{s+1}-1)}\right)\nonumber\\
&=0.223033369804\cdots K^{\ell}\ell!,\label{c2}
\end{align}
where we used that $K\geq 512$ in the second last step and Maple to evaluate the numerical constant.

Similar, we obtain for $c_{\ell,3}$ by using $\mu_{k,m}\leq 3$ that
\begin{align}
c_{\ell,3}&\leq\frac{\rho e^{3/512}}{2}K^{\ell}\ell!\sum_{k\geq 1}\prod_{s=1}^{k}\left(\rho+\frac{c_1}{K2^{s+1}}+\frac{1}{2^{s+1}(2^{s+1}-1)}\right)\nonumber\\
&\qquad\qquad\qquad\qquad\times\sum_{m\geq 1}\prod_{t=1}^{m}\left(\rho+\frac{c_1}{K2^{t+1}}+\frac{1}{2^{t+1}(2^{t+1}-1)}\right)\nonumber\\
&=0.017199020110\cdots K^{\ell}\ell!.\label{c3}
\end{align}

Combining (\ref{c1}), (\ref{c2}), and (\ref{c3}) gives
\[
c_{\ell,1}+c_{\ell,2}+c_{\ell,3}\leq 0.245622624440\cdots K^{\ell}\ell!\leq g_1(\rho)K^{\ell}\ell!
\]
which proves the desired result.
\end{proof}

We use this now to prove a similar result for $d_{\ell}$.

\begin{lemma}\label{bound-dl}
There exists a constant $K>0$ such that $d_{\ell}\leq K^{\ell}\ell!$ for all $\ell\geq 0$.
\end{lemma}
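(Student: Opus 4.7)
The plan is to follow the inductive template of Lemma~\ref{bound-cl}: by (\ref{cl+dl}), the bound reduces to controlling each of $d_{\ell,1}, d_{\ell,2}, d_{\ell,3}$ and the correction $c_\ell\,g'_\ell(\rho)/g_\ell(\rho)$ separately, and then choosing $K$ large enough that the sum of the resulting constants stays below $g_1(\rho)$ (divided through), while also dominating $(d_\ell/\ell!)^{1/\ell}$ for the finitely many base cases. Throughout, I would take $K$ at least as large as the constant from Lemma~\ref{bound-cl}, so that $c_\ell\leq K^\ell\ell!$ is available at every step of the induction.

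The key observation is that $f'_{\ell,i}((A^{[j]})'\leftrightarrow d_j,A^{[j]}\leftrightarrow c_j)$ is the Leibniz derivative of the polynomial $f_{\ell,i}$ in its arguments $A^{[\ell_r]}$: a product of $N$ such factors produces a sum of $N$ terms, each obtained by replacing a single $c_{\ell_r}$ by $d_{\ell_r}$. Since the induction hypothesis gives $d_{\ell_r}\leq K^{\ell_r}\ell_r!$, which is the same bound Lemma~\ref{bound-cl} supplies for $c_{\ell_r}$, each individual summand is controlled by the estimate already done for $c_{\ell,i}$. The only new feature is a combinatorial multiplier equal to the number of factors: $2$ for $d_{\ell,1}$, $k+1$ for $d_{\ell,2}$, and $k+m+1$ for $d_{\ell,3}$. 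The weighted sums
\[
\sum_{k\geq 1}(k+1)\prod_{s=1}^{k}\left(\rho+\frac{c_1}{K\,2^{s+1}}+\frac{1}{2^{s+1}(2^{s+1}-1)}\right)
\]
and its double-sum analogue with weight $k+m+1$ still converge to finite absolute constants, because each factor in the product is bounded well below $1$ and decays exponentially in $s$ (resp.\ $s,t$). This yields $|d_{\ell,i}|\leq C_i\, K^\ell\ell!$ with absolute constants $C_i$, analogous to (\ref{c1})--(\ref{c3}).

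For the correction term, Lemma~\ref{bound-cl} gives $c_\ell\leq K^\ell\ell!$, and it suffices to verify that $|g'_\ell(\rho)/g_\ell(\rho)|$ is bounded uniformly in $\ell$. The denominator is non-decreasing in $\ell$ (all series-coefficients of $1-g_\ell$ are non-increasing in $\ell$) with $g_1(\rho)\approx 0.570>0$. The numerator is obtained by termwise differentiation of the series defining $g_\ell(\omega)$; its coefficients are uniformly bounded in $\ell$ and grow at most linearly in $k$ in the coefficient of $\omega^k$, so the differentiated series converges absolutely at $\omega=\rho\approx 0.219<1$, uniformly in $\ell$. Summing the four contributions gives $|d_\ell|\leq C\,K^\ell\ell!$ for an absolute constant $C$ depending on $K$ only through bounded, non-increasing terms (the $c_1/K$ factors); enlarging $K$ if necessary brings the overall constant below $1$ and covers the finite list of base cases.

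The main obstacle, beyond the bookkeeping, is the extra polynomial factor $k+1$ or $k+m+1$ introduced by the product rule. This is not serious because the series appearing in Lemma~\ref{bound-cl} converge with a comfortable margin below $g_1(\rho)\approx 0.570$ (their total was computed there to be $\approx 0.246$), leaving ample exponential decay to absorb the additional polynomial weight. The only place where one must be slightly careful is in showing that enlarging $K$ does not inflate the constants $C_i$: this is because the $K$-dependence enters only through the decreasing quantity $c_1/K$, so the constants $C_i$ are monotone non-increasing in $K$, and the induction closes as soon as $K$ is taken sufficiently large.
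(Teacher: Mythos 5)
Your overall strategy is exactly the paper's: induct on $\ell$, observe that $f'_{\ell,i}$ is a Leibniz derivative in which exactly one factor $c_{\ell_r}$ is replaced by $d_{\ell_r}$, reuse the estimates from Lemma~\ref{bound-cl}, and control the correction term via a uniform bound on $-g_\ell'(\rho)/g_\ell(\rho)$. But the closing step has a genuine gap, in two places. First, the inflation caused by the product rule is larger than the factor $k+1$ (resp.\ $k+m+1$) you announce. In the estimates of $c_{\ell,2}$ and $c_{\ell,3}$, convergence of the sum over $k$ hinges on substituting the refined values $c_0=\rho$ and $c_1$ at positions with $\ell_s\in\{0,1\}$ --- the crude bound $K^{\ell_s}\ell_s!$ would make each factor of the product at least $1$. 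When a factor $c_{\ell_r}$ is replaced by $d_{\ell_r}$, the case $\ell_r=0$ contributes $d_0=1$ instead of $c_0=\rho$, inflating that factor by $1/\rho\approx 4.57$; the effective multiplier is thus of order $(k+1)/\rho$, which is why the paper's bounds for $d_{\ell,2}$ and $d_{\ell,3}$ carry the ratio sums $\sum_{s=1}^k(1+\cdots)/(\rho+\cdots)$. The series still converge, but the resulting total is roughly $0.796$, more than three times the $0.246$ of Lemma~\ref{bound-cl} --- not the mild inflation your margin argument assumes.

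Second, and more seriously, enlarging $K$ cannot bring the overall constant below $1$. The $K$-dependence enters only through terms such as $e^{2/K}$ and $c_1/(K2^{s+1})$, so the constants converge to positive, $K$-independent limits determined by $\rho$ and the structure of the series; whether their total clears the required threshold is a numerical fact that must be verified explicitly (the paper does this with Maple). Moreover, your proposed threshold $g_1(\rho)\approx 0.570$ is too small: dividing the total $\approx 0.796$ by it gives about $1.4>1$, and the induction only closes because for $\ell\geq 10$ one may instead divide by $g_{10}(\rho)\approx 0.987$ (using that $g_\ell(\rho)$ increases in $\ell$), which together with the correction term $-g_{10}'(\rho)/g_{10}(\rho)\approx 0.063$ yields $d_\ell\leq 0.869\cdots K^\ell\ell!$. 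So the plan is the right one, but it does not close by a soft ``take $K$ large'' argument; it needs the sharper denominator $g_{10}(\rho)$ and an explicit numerical evaluation of the limiting constants.
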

\begin{proof}
We again use induction on $\ell$ where the claim holds for $\ell=0$. Moreover, we choose $K$ large enough such that the conclusion of Lemma~\ref{bound-cl} holds and in addition that 
\begin{itemize}
\item[(i)] $K\geq\max\left\{d_1,\sqrt{d_2/2},\ldots,\sqrt{d_9/9!},4608c'/c_1,4608d'/d_1\right\}$;
\item[(ii)] $\rho\geq\max\left\{3c_1/(4K),3^2c_2/(2! 4^2K^2),\ldots,3^{9}c_9/(9!4^9K^9)\right\}$,
\end{itemize}
where $c'\defas \max\{c_1,\ldots,c_9\}$ and $d'\defas \max\{d_1,\ldots,d_9\}$. 

By the item (i), the claim holds for all $1\leq\ell\leq 9$. Thus, we can assume that $\ell\geq 10$ and that the claim holds for all $\ell'<\ell$. We are going to bound $d_{\ell,i}$ for $i=1,2,3$.

First for $d_{\ell,1}$, we have
\begin{align*}
f'_{\ell,1}(z)&=\frac{1}{2}\sum_{\substack{\ell_1+\ell_2+\ell_3=\ell\\ \ell_1,\ell_2<\ell}}\binom{\ell}{\ell_1,\ell_2,\ell_3}\frac{1}{2^{\ell_1+\ell_2}}\left((A^{[\ell_1]})'(z)A^{[\ell_2]}(z)+A^{[\ell_1]}(z)(A^{[\ell_2]})'(z)\right)
\end{align*}
and thus
\begin{align*}
d_{\ell,1} & = \frac{1}{2}\sum_{\substack{\ell_1+\ell_2+\ell_3=\ell\\ \ell_1,\ell_2<\ell}}\binom{\ell}{\ell_1,\ell_2, \ell_3}\frac{1}{2^{\ell_1+\ell_2}}\left(d_{\ell_1}c_{\ell_2}+c_{\ell_1}d_{\ell_2}\right)\\
&=\sum_{\substack{\ell_1+\ell_2+\ell_3=\ell\\ \ell_1,\ell_2<\ell}}\binom{\ell}{\ell_1,\ell_2, \ell_3}\frac{1}{2^{\ell_1+\ell_2}}c_{\ell_1}d_{\ell_2}.
\end{align*}
Using the induction hypothesis and the estimate from Lemma~\ref{bound-cl}, we obtain
\begin{equation}\label{d1}
d_{\ell,1}\leq\sum_{\substack{\ell_1+\ell_2+\ell_3=\ell\\ \ell_1,\ell_2<\ell}}\frac{\ell!}{\ell_3!}\frac{1}{2^{\ell_1+\ell_2}}K^{\ell_1+\ell_2}\leq0.010780469050\cdots K^{\ell}\ell!
\end{equation}
which is twice the bound from (\ref{c1}).

Next, for $d_{\ell,2}$, we have
\begin{align*}
f'_{\ell,2}(z)=\sum_{k\geq 1}\sum_{\substack{\ell_1+\cdots+\ell_{k+2}=\ell\\ \ell_j<\ell,1\leq j\leq k+1}}&\binom{\ell}{\ell_1,\cdots,\ell_{k+2}}\prod_{s=1}^k\frac{1}{2^{(s+1)\ell_s}}\left(\frac{1}{2}+\frac{1}{2^{k+1}}\right)^{\ell_{k+1}}\mu_{k,0}^{\ell_{k+2}}\\
&\times\left(\sum_{i=1}^{k+1}(A^{[\ell_i]})'(z)\prod_{\substack{1\leq j\leq k+1\\j\neq i}}A^{[\ell_j]}(z)\right)
\end{align*}
and thus
\begin{align*}
d_{\ell,2}&=\sum_{k\geq 1}\sum_{\substack{\ell_1+\cdots+\ell_{k+2}=\ell\\ \ell_j<\ell,1\leq j\leq k+1}}\binom{\ell}{\ell_1,\cdots,\ell_{k+2}}\prod_{s=1}^k\frac{1}{2^{(s+1)\ell_s}}\left(\frac{1}{2}+\frac{1}{2^{k+1}}\right)^{\ell_{k+1}}\mu_{k,0}^{\ell_{k+2}}\\
&\qquad\qquad\qquad\times\left(\sum_{i=1}^{k+1}d_{\ell_i}\prod_{\substack{j=1\\j\neq i}}^{k+1}c_{\ell_j}\right)\\
&=\sum_{k\geq 1}\sum_{\substack{\ell_1+\cdots+\ell_{k+2}=\ell\\ \ell_j<\ell,1\leq j\leq k+1}}\binom{\ell}{\ell_1,\cdots,\ell_{k+2}}\prod_{s=1}^k\frac{1}{2^{(s+1)\ell_s}}\left(\frac{1}{2}+\frac{1}{2^{k+1}}\right)^{\ell_{k+1}}\mu_{k,0}^{\ell_{k+2}}\\
&\qquad\qquad\qquad\times\left(\sum_{i=1}^{k}d_{\ell_i}\prod_{\substack{j=1\\j\neq i}}^{k+1}c_{\ell_j}\right)\\
&\quad+\sum_{k\geq 1}\sum_{\substack{\ell_1+\cdots+\ell_{k+2}=\ell\\ \ell_j<\ell,1\leq j\leq k+1}}\binom{\ell}{\ell_1,\cdots,\ell_{k+2}}\prod_{s=1}^k\frac{1}{2^{(s+1)\ell_s}}\left(\frac{1}{2}+\frac{1}{2^{k+1}}\right)^{\ell_{k+1}}\mu_{k,0}^{\ell_{k+2}}\\
&\qquad\qquad\qquad\times d_{\ell_{k+1}}\prod_{j=1}^{k}c_{\ell_j}.
\end{align*}
We use now both the estimate for $c_{\ell}$ from Lemma~\ref{bound-cl} and the induction hypothesis for $d_{\ell}$ except for $0\leq\ell\leq 9$. Moreover, we use $\mu_{k,0}\leq 2$ (as in the proof of Lemma~\ref{bound-cl}) and
\[
\left(\frac{1}{2}+\frac{1}{2^{k+1}}\right)^{\ell_{k+1}}c_{\ell_{k+1}}\leq\left(\frac{3}{4}\right)^{\ell_{k+1}}c_{\ell_{k+1}}\leq\rho K^{\ell_{k+1}}\ell_{k+1}!
\]
which follows by item (ii) above and 
\[
\left(\frac{1}{2}+\frac{1}{2^{k+1}}\right)^{\ell_{k+1}}d_{\ell_{k+1}}\leq K^{\ell_{k+1}}\ell_{k+1}!
\]
which holds trivially. The rest is handled by similar ideas as in the proof of Lemma~\ref{bound-cl}. Thus, we obtain
\begin{align*}
d_{\ell,2}&\leq\rho e^{2/K} K^{\ell}\ell!\sum_{k\geq 1}\prod_{s=1}^k\left(\rho+\left(\sum_{i=1}^9\frac{c_i}{2^{i(s+1)}i!K^i}\right)+\frac{1}{2^{9(s+1)}(2^{s+1}-1)}\right)\\
&\qquad\qquad\qquad\times\left(\sum_{s=1}^k\frac{1+\left(\sum_{i=1}^9\frac{d_i}{2^{i(s+1)}i!K^i}\right)+\frac{1}{2^{9(s+1)}(2^{s+1}-1)}}{\rho+\left(\sum_{i=1}^9\frac{c_i}{2^{i(s+1)}i!K^i}\right)+\frac{1}{2^{9(s+1)}(2^{s+1}-1)}}\right)\\
&\hspace{1.25 em}+ e^{2/K} K^{\ell}\ell!\sum_{k\geq 1}\prod_{s=1}^k\left(\rho+\left(\sum_{i=1}^9\frac{c_i}{2^{i(s+1)}i!K^i}\right)+\frac{1}{2^{9(s+1)}(2^{s+1}-1)}\right).
\end{align*}
Note that, by our choice of $K$,
\[
\sum_{i=1}^{9}\frac{c_i}{2^{i(s+1)}i!K^i}\leq\frac{9c'}{2^{s+1}K}\leq\frac{1}{512\cdot 2^{s+1}}
\]
and likewise with the $c_i$'s replaced by $d_i$'s. Plugging this into the above expression and numerical evaluating it (again with Maple), we obtain
\begin{equation}\label{d2}
d_{\ell,2}\leq0.643482769458\cdots K^{\ell}\ell!.
\end{equation}

We finally consider $d_{\ell,3}$ which is treated in a similar fashion. First,
\begin{align*}
f'_{\ell,3}(z)&=\frac{1}{2}\sum_{k\geq 1}\sum_{m\geq 1}\sum_{\substack{\ell_1+\cdots+\ell_{k+m+2}=\ell \\ \ell_j< \ell,1\leq j\leq k+m+1}}\binom{\ell}{\ell_1, \cdots,\ell_{k+m+2}}\prod_{s=1}^k\frac{1}{2^{(s+1)\ell_s}}\prod_{t=1}^m\frac{1}{2^{(t+1)\ell_{k+t}}}\\
&\qquad\times\left(\frac{1}{2^{k+1}}+\frac{1}{2^{m+1}}\right)^{\ell_{k+m+1}}\mu_{k,m}^{\ell_{k+m+2}}\left(\sum_{i=1}^{k+m+1}(A^{[\ell_i]})'(z)\prod_{\substack{j=1\\j\neq i}}^{k+m+1}
A^{[\ell_j]}(z)\right)
\end{align*}
and thus
\begin{align*}
d_{\ell,3}&= \frac{1}{2}\sum_{k\geq 1}\sum_{m\geq 1}\sum_{\substack{\ell_1+\cdots+\ell_{k+m+2}=\ell \\ \ell_j<\ell,1\leq j\leq k+m+1}}\binom{\ell}{\ell_1, \cdots,\ell_{k+m+2}}\prod_{s=1}^k\frac{1}{2^{(s+1)\ell_s}}\prod_{t=1}^m\frac{1}{2^{(t+1)\ell_{k+t}}}\\
&\qquad\quad\times\left(\frac{1}{2^{k+1}}+\frac{1}{2^{m+1}}\right)^{\ell_{k+m+1}}\mu_{k,m}^{\ell_{k+m+2}}\left(\sum_{i=1}^{k+m}d_{\ell_i}\prod_{\substack{j=1\\j\neq i}}^{k+m+1}
c_{\ell_j}\right)\\
& \hspace{1 em} +\frac{1}{2}\sum_{k\geq 1}\sum_{m\geq 1}\sum_{\substack{\ell_1+\cdots+\ell_{k+m+2}=\ell \\ \ell_j<\ell,1\leq j\leq k+m+1}}\binom{\ell}{\ell_1, \cdots, \ell_{k+m+2}}\prod_{s=1}^k\frac{1}{2^{(s+1)\ell_s}}\prod_{t=1}^m\frac{1}{2^{(t+1)\ell_{k+t}}}\\
&\qquad\quad\times\left(\frac{1}{2^{k+1}}+\frac{1}{2^{m+1}}\right)^{\ell_{k+m+1}}\mu_{k,m}^{\ell_{k+m+2}}d_{\ell_{k+m+1}}\prod_{i=1}^{k+m}
c_{\ell_i}(z).
\end{align*}
Now, using $\mu_{k,m}\leq 3$ (see the proof of Lemma~\ref{bound-cl}) and arguments as above, we obtain
\begin{align}
&d_{\ell,3}\leq\rho e^{3/K} K^\ell \ell!\sum_{k\geq 1}\prod_{s=1}^k\left(\rho+\frac{c_1}{512\cdot2^{(s+1)}K}+\frac{1}{2^{9(s+1)}(2^{s+1}-1)}\right)\nonumber\\
&\times \hspace{-0.1cm}\left(\sum_{s=1}^k\frac{1+\frac{d_1}{512\cdot2^{(s+1)}}+\frac{1}{2^{9(s+1)}(2^{s+1}-1)}}{\rho+\frac{c_1}{512\cdot2^{(s+1)}}+\frac{1}{2^{9(s+1)}(2^{s+1}-1)}}\right)\sum_{k\geq 1}\prod_{s=1}^k\left(\rho+\frac{c_1}{512\cdot2^{(s+1)}}+\frac{1}{2^{9(s+1)}(2^{s+1}-1)}\right)\nonumber\\
&\ +\frac{e^{3/K}K^\ell \ell!}{2}\left(\sum_{k\geq 1}\prod_{s=1}^k\left(\rho+\frac{c_1}{512\cdot2^{(s+1)}}+\frac{1}{2^{9(s+1)}(2^{s+1}-1)}\right)\right)^2\nonumber\\
&\ =0.141422204463\cdots K^{\ell}\ell!\label{d3}.
\end{align}

Combining (\ref{d1}),(\ref{d2}), and (\ref{d3}) gives:
\[
d_{\ell,1}+d_{\ell,2}+d_{\ell,3}\leq 0.795685442972\cdots K^{\ell}\ell!.
\]
Finally, for $\ell\geq 10$,
\begin{align*}
d_{\ell}&\leq\frac{d_{\ell,1}+d_{\ell,2}+d_{\ell,3}}{g_{10}(\rho)}-\frac{g_{10}'(\rho)}{g_{10}(\rho)}c_{\ell}\\
&\leq0.869090272578\cdots K^{\ell}\ell!
\end{align*}
since $-g_{\ell}'(\rho)/g_{\ell}(\rho)$ decreases to $0$ and $g_{10}(\rho)= 0.986940779096\cdots$ and $-g_{10}'(\rho)/g_{10}(\rho)=0.062876303286\cdots.$ This proves the claim.
\end{proof}


\subsection{The boundary of a phylogenetic network} \label{appBoundary}

Recall from Section~\ref{secB2InfiniteNetwork} that, to extend
the definition of $B_2$ to infinite phylogenetic networks, we need
a notion of ``boundary'' of a network that includes not only the
vertices in which the directed random walk $X$ can get trapped (i.e.\ the
leaves), but also all of the distinct ways in which it can escape to infinity
(i.e.\ the ends, see Definition~\ref{defEnds}). This motivated
Definition~\ref{defBoundary} of the boundary $\partial G$ of a phylogenetic
network $G$ as $\partial G = \mathcal{L} \cup \mathcal{E}$, where $\mathcal{L}$
and $\mathcal{E}$ denote the set of leaves and the set of ends of~$G$,
respectively.

In this appendix, we show that $\partial G$ can be embedded in a suitable
compact metric space $(\mathcal{K},\, d_{\mathcal{K}})$, and that the escape
point $X_\infty$ of the directed random walk $X$ is a well-defined random variable
in~$\mathcal{K}$. We also discuss some topological properties of
the boundary $\partial G$, in particular its connection with the Martin
boundary.

Let us start by introducing some notation. We write $u\ancestor v$ to indicate
that $u$ is an ancestor of $v$, i.e.\ that there exists a finite directed path
from $u$ to $v$ in $G$. For any vertex $v \in G$, we denote by
\[
  \bar{v} \;=\, \Set{u\in G \suchthat u \ancestor v} 
\]
the set of ancestors of $v$. Finally, for any set of vertices $S \subset G$, we
use the short notation $[S]_k$ for the set $S \cap [G]_k$ of vertices of $S$ at
height at most $k$ in $G$.

\begin{definition} \label{defK}
Let $G$ be a phylogenetic network. Set
\[
  \mathcal{K} \;=\; \Set{S\subset G \suchthat \forall u\in S,\, \bar{u}\subset S}\,,
\]
and equip $\mathcal{K}$ with the distance $d_{\mathcal{K}}$ defined by
\[
  d_{\mathcal{K}}(S, S') \;=\;
  \sup\Set{2^{-n} \suchthat n \in \mathbb{N},\, [S]_n \neq [S']_n},
\]
with the convention $\sup \emptyset = 0$.
\end{definition}

It is readily checked that $(\mathcal{K}, d_{\mathcal{K}})$ is a compact
metric space (using a diagonal argument and the local finiteness of $G$
to show compactness).

Because the map $v\in G\mapsto \bar{v}\in \mathcal{K}$ is
injective, the vertices of $G$ can be seen as points of $\mathcal{K}$.
The next proposition shows that the ends of $G$ can also be
embedded in $\mathcal{K}$; more precisely, they are points of $\mathrm{cl}(G)$,
the topological closure of $G$ in~$\mathcal{K}$.

\begin{proposition} \label{propK}
Let $G$ be a phylogenetic network, and let $\mathcal{E}$ denote its end set.
For each $x \in \mathcal{E}$, define
\[
  \bar{x} \;=\; \Set{v\in G \suchthat \exists r\in x \text{ such that }v\in r},
\]
where $v\in r$ means that the ray~$r$ goes through the vertex~$v$. Then, the map
\[
  x\in \mathcal{E} \;\longmapsto\; \bar{x} \in \mathcal{K}
\]
is injective. Furthermore, for any ray
$r = (v_n)_{n \geq 0}$, we have $\bar{v}_n \to\bar{x}$ in
$(\mathcal{K}, d_\mathcal{K})$, where
$x$ is the end associated with $r$.
\end{proposition}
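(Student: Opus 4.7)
The plan is to prove the three assertions in turn---that $\bar{x}$ lies in $\mathcal{K}$, that $\bar{v}_n \to \bar{x}$ for any ray $(v_n) \in x$, and that $x \mapsto \bar{x}$ is injective---and the first is the easiest. To check that $\bar{x}$ is ancestor-closed, take $v \in \bar{x}$ on some ray $\rho \in x$ and an ancestor $u \ancestor v$. Concatenate any directed path from $u$ to $v$ with the tail of $\rho$ starting at $v$: this produces a new ray $r'$ sharing an infinite tail with $\rho$, hence $r' \codir \rho$, so $r' \in x$ and $u \in r' \subset \bar{x}$.

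For the convergence, fix $k \geq 0$ and aim to show $[\bar{v}_n]_k = [\bar{x}]_k$ for all sufficiently large $n$. The inclusion $\subset$ holds for every $n$ by the same concatenation argument: any ancestor of $v_n$ lies on the ray obtained by prepending it to the tail of $r$ from $v_n$, which is in $x$. For the reverse inclusion, given $u \in [\bar{x}]_k$, pick a ray $\rho \in x$ through $u$ and let $\sigma$ denote its tail from $u$ (still in $x$). By co-directionality $\sigma \codir r$, some ray $r''$ meets both $\sigma$ and $r$ infinitely often; choose on $r''$ an occurrence of a vertex of $\sigma$ preceding an occurrence of some $v_N$ on $r$. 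This yields $u \ancestor v_N$, and hence $u \ancestor v_n$ for all $n \geq N$ because $r$ is itself a directed path. Since $[\bar{x}]_k$ is finite by local finiteness of $G$, taking the maximum over $u \in [\bar{x}]_k$ of these thresholds gives a single $N$ that works uniformly.

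For injectivity, the key observation---essentially a byproduct of the previous step---is that $\bar{x} = \bigcup_n \bar{v}_n$ for every ray $(v_n) \in x$, since any $u \in \bar{x}$ lies in $[\bar{x}]_k = [\bar{v}_n]_k$ for $k \geq h(u)$ and $n$ large. So if $\bar{x} = \bar{x}'$ with $r = (v_n) \in x$ and $r' = (v'_m) \in x'$, every $v_n$ satisfies $v_n \ancestor v'_{m(n)}$ for some $m(n)$, and conversely. Because $r'$ is itself a directed path, $v_n \ancestor v'_m$ implies $v_n \ancestor v'_{m'}$ for every $m' \geq m$, so $m(n)$ can be taken arbitrarily large; the symmetric statement holds for $r$. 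I then inductively build strictly increasing indices $n_0 < n_1 < \cdots$ and $m_0 < m_1 < \cdots$ satisfying $v_{n_k} \ancestor v'_{m_k} \ancestor v_{n_{k+1}}$, choose directed paths realizing these ancestor relations, and concatenate them into a single infinite directed walk. Because $G$ is a DAG, no directed walk can revisit a vertex without creating a cycle, so this walk is in fact a ray; by construction it meets both $r$ and $r'$ infinitely, witnessing $r \codir r'$ and contradicting $x \neq x'$.

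The main obstacle is this injectivity step. In a general DAG two rays can share many vertices without being co-directional, so there is no easy way to distinguish $\bar{x}$ from $\bar{x}'$ by exhibiting a single vertex lying in one but not in the other. The concatenation-of-paths construction above sidesteps this by producing the witness ray directly, with the acyclicity of $G$ playing the crucial role of promoting a concatenated directed walk into a simple directed path, and hence into a genuine ray.
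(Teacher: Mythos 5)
Your proof is correct and follows essentially the same route as the paper's: the concatenation argument for ancestor-closedness, the finiteness of $[\bar{x}]_k$ plus a descendant-finding step for convergence, and the zig-zag ray construction for injectivity. The only (harmless) differences are that you prove convergence first and feed the identity $\bar{x}=\bigcup_n \bar{v}_n$ into the injectivity argument, and that you make explicit the point the paper leaves implicit, namely that acyclicity of $G$ forces the concatenated directed walk to be a genuine ray.
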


\begin{proof}
First, let us show that $\bar{x}\in \mathcal{K}$ for any $x\in \mathcal{E}$.
Fix $x \in \mathcal{E}$, $v \in \bar{x}$ and $u \ancestor v$.
We must show that $u \in \bar{x}$, i.e.\ that $x$ contains a ray that goes
through $u$. Let $r\in x$ be such that $v \in r$. Because $G$ is rooted and
$u \ancestor v$, there exists a path that goes from the root of $G$ to~$u$, and
then from $u$ to~$v$. Let $r'$ be the ray that starts with this path and then
continues along~$r$. Since $r' \codir r$, we have $r' \in x$, concluding
the proof that $\bar{x} \in \mathcal{K}$.

Let us now show that $x\mapsto \bar{x}$ is injective. Consider
$x, y \in \mathcal{E}$ such that $\bar{x}=\bar{y}$, and fix two rays
$r = (v_i)_{i\geq 0}\in x$, and $r' = (v'_i)_{i\geq 0}\in y$.  To show that
$x = y$, we must show that $r \codir r'$, and for this it suffices to exhibit a
ray that intersects $r$ and $r'$ infinitely many times.  The key is to show
that for all $i\geq 0$, there exists $j \geq 0$ such that $v_i \ancestor v'_j$
-- indeed, by symmetry we then get that there exists
$k \geq 0$ such that $v'_j \ancestor v_k$, etc; and by a straightforward concatenation
procedure we can build a ray that goes to and fro between vertices of $r$ and
vertices of $r'$. By assumption, $\bar{x} = \bar{y}$ and thus
$v_i \in \bar{y}$. As a result, there exists a ray $r''\codir r'$ such that
$v_i \in r''$, and since $r''$ intersects $r'$ infinitely many times, there
exists $u \in r''$, $u \succcurlyeq v_i$ such that $u \in r'$, concluding
the proof of the injectivity of $x\mapsto \bar{x}$.

Finally, consider any ray $r = (v_n)_{n\geq 0}$, letting $x$ denote the 
corresponding end.  Let us show that $\bar{v}_n \to \bar{x}$ as $n \to \infty$,
i.e.\ let us fix some $k\geq 0$ and show that $[\bar{x}]_k = [\bar{v}_i]_k$
for all $i$ large enough.  Let $v\in [\bar{x}]_k$, and pick $r_v\in x$ such
that $v\in r_v$.  Because $r_v\codir r$, there exists $i_v\in \mathbb{N}$ such that
$v\ancestor v_{i_v}$. Since $[G]_k$ is finite, we can define
\[
  N_k \;=\; \max \Set{i_v \suchthat v\in [\bar{x}]_k}\,.
\]
Thus, $[\bar{x}]_k \subset \bar{v}_i$ for all $i\geq N_k$. Moreover, because
$\bar{x}\in \mathcal{K}$ we also have $\bar{v}_i\subset \bar{x}$ for
all~$i\geq 0$. As a result, $[\bar{x}]_k = [\bar{v}_i]_k$ for all $i\geq N_k$,
concluding the proof.
\end{proof}

Proposition~\ref{propK} shows that the vertices and the ends of $G$ can be
embedded in $\mathcal{K}$ (in the rest of this section, we thus identify $v\in G$ with $\bar{v}\in \mathcal{K}$ and simply drop the notation $\bar{v}$), and it immediately implies the following corollary.

\begin{corollary} \label{corXinftyWellDefined}
Let $G$ be a phylogenetic network and let $(X_t)_{t \geq 0}$ be the directed
random walk on $G$. Then, $X_t$ converges almost surely in
$(\mathcal{K},\, d_{\mathcal{K}})$ to a random variable $X_\infty \in \partial G$.
\end{corollary}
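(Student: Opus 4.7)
The plan is to split the proof into two deterministic cases according to whether the random walk eventually gets trapped at a leaf, and then to invoke Proposition~\ref{propK} in the infinite case. The corollary is really just a repackaging of Proposition~\ref{propK} in probabilistic language, so I expect no substantial new difficulty.

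First, I would record the structural fact that, because $G$ is a DAG, any directed trajectory $X_0 \to X_1 \to \cdots$ visits each vertex at most once. Consequently, for every realization of the walk, exactly one of the following occurs: either there is a finite (random) time $T$ at which $X_T \in \mathcal{L}$, with the convention $X_t = X_T$ for all $t \geq T$; or the walk never reaches a leaf, in which case $(X_t)_{t\geq 0}$ is an infinite sequence of pairwise distinct vertices joined by directed edges, that is, a ray of $G$.

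In the first case, $(\bar{X}_t)_{t \geq 0}$ is eventually constant equal to $\bar{X}_T$ in $\mathcal{K}$, so it converges trivially to $X_\infty = X_T \in \mathcal{L} \subset \partial G$. In the second case, letting $x \in \mathcal{E}$ denote the end associated with the ray $(X_t)_{t\geq 0}$, the second part of Proposition~\ref{propK} directly yields $\bar{X}_t \to \bar{x}$ in $(\mathcal{K}, d_\mathcal{K})$, so $X_\infty = x \in \mathcal{E} \subset \partial G$. In both cases the limit belongs to $\partial G$.

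Finally, I would note that since the convergence holds on every sample path (not merely with probability one), the almost-sure statement is automatic. Measurability of $X_\infty$ then follows from the pointwise convergence of the measurable maps $X_t$ into the compact, hence Polish, metric space $(\mathcal{K}, d_\mathcal{K})$. The only step that required genuine work, namely showing that a ray $(v_n)$ converges in $\mathcal{K}$ to its associated end, has already been settled in Proposition~\ref{propK}; there is no further obstacle to overcome here.
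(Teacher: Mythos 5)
Your proof is correct and matches the paper's intent exactly: the paper states that Corollary~\ref{corXinftyWellDefined} follows immediately from Proposition~\ref{propK}, and your case split (trapped at a leaf versus tracing out a ray, with the ray case handled by the last assertion of Proposition~\ref{propK}) is precisely the argument being left implicit. No gap.
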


\begin{remark}
Note that there is no connection between the boundary $\partial G$ and the
topological boundary of $G$ in $\mathcal{K}$, that is,
$\mathrm{cl}(G) \setminus \mathrm{int}(G)$. Indeed:
\begin{mathlist}
\item Because the vertices of $G$ are isolated points in $\mathcal{K}$, they
  all belong to the interior of $G$, i.e.\ $G$ is open in $\mathcal{K}$ --
  whereas leaves belong to $\partial G$.
\item Although $\mathcal{E} \subset \mathrm{cl}(G)$, not all limit points
  are ends: as Figure~\ref{figClosureGExample} shows, limit points can
  correspond to union of ends. \qedhere
\end{mathlist}
\end{remark}

\begin{figure}[h]
  \centering
  \captionsetup{width=.9\linewidth}
  \includegraphics[width=0.7\linewidth]{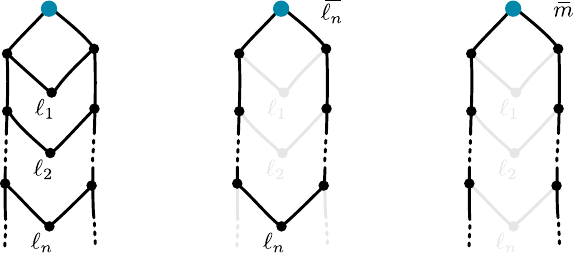}
  \caption{
  As previously, the root is highlighted in blue and the edges are pointing
  downwards.  In this example, the sequence $(\ell_n)$ has a limit
  $m \in \mathcal{K} \setminus \partial G$. Note that this limit corresponds to
  the union of two ends: in fact, although we do not detail this here, 
  it is not too hard to show that the limit in $\mathcal{K}$  of a
  sequence of points of $\partial G$ is always either a leaf or a union of
  ends.}
  \label{figClosureGExample}
\end{figure}

We close this appendix by briefly discussing the connection with a standard
notion of boundary for transient random walks: the Martin boundary.
Let us start by recalling its definition in our setting; we refer the reader
to \cite[Chapter~10]{kemeny1976denumerable} for a more general definition.

Let $\Prob[u]{\,\cdot\,}$ denote probabilities conditional on the
directed random walk $X$ being started from $u$, instead of from the root.
The \emph{Martin kernel of $X$} is defined as
\[
  M(u, v) \;=\;
  \frac{\Prob[u]{v \in X}}{\Prob{v \in X}} \,.
\]
Note that the functions $M(u,\, \cdot\,)$ are bounded, since
\[
  \frac{\Prob[u]{v \in X}}{\Prob{v \in X}}
  \;=\; \frac{\Prob{u \in X \given v \in X}\, \Indic{u \ancestor v}}{\Prob{u \in X}}
  \;\leq\; \frac{1}{\Prob{u \in X}}\,.
\]
As a result, there is a canonical compactification of $G$ on which all
functions $M(u, \,\cdot\,)$ extend continuously (see e.g.\
\cite[Theorem~7.3]{woess2009denumerable}).
This compactification, which we denote by $\widehat{G}$, is known as the Martin
compactification of $G$. The \emph{Martin boundary of $G$} is then defined as
$\widehat{G} \setminus G$.

Equivalently, a sequence $(v_n)$ of vertices of $G$ converges 
in $\widehat{G}$ if and only if $M(u, v_n)$ converges for all $u\in G$.
The Martin boundary consists of all limit points of sequences $(v_n)$ that
escape to infinity (in the sense that $h(v_n) \to \infty$, where $h(v)$ denotes
the height of $v$).

Although the Martin boundary and our boundary $\partial G$ are both designed to
capture the asymptotic behavior of $X$ (and both make it possible to define an
almost-sure limit for $X_t$), they are distinct notions and there does not seem
to be a simple relationship between them; in particular, one is not finer
than the other.

\begin{figure}[h]
  \centering
  \captionsetup{width=.9\linewidth}
  \includegraphics[scale=0.6]{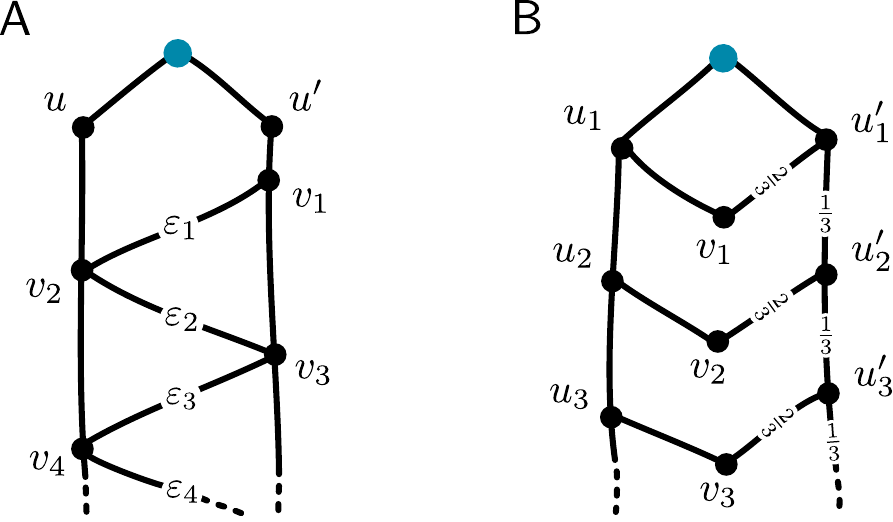}
  \caption{Two phylogenetic networks for which the Martin boundary differs
  from $\mathcal{K}$.  Again, the root is highlighted in blue and the edges are
  pointing downwards.  Provided that the weights of the edges going out of a
  vertex are rational probabilities, by replacing that vertex by a network as
  in Section~\ref{secBlowUps}, we can assume that the probability that the
  directed random walk follows an edge is equal to its weight.}
  \label{figMartinBoundary}
\end{figure}

For instance, the phylogenetic network~$G$ given in
Figure~{\ref{figMartinBoundary}.A} provides an example of a sequence of
vertices that converges in $\mathcal{K}$ to an end $x \in \partial G$, but that
does not converge in $\widehat{G}$: indeed, assume that the weights
$(\epsilon_n)$ given in Figure~\ref{figMartinBoundary}.A satisfy
\[
  \delta \;\defas\; \sum_{n\geq 1} \epsilon_n \;\leq\; \frac{1}{4}\,.
\]
Let $A$ be the event that the directed random walk $X$ started from the root
of $G$ goes through one of the central edges $(v_1\to v_2), (v_2\to v_3), \ldots$,
that is
\[
  A \;=\; \bigcup_{k \geq 1} \Set*[\big]{v_k \in X \text{ and } v_{k+1} \in X} \,.
\]
Using union bounds, we see that
\[
  \max\Set*[\big]{\Prob{A},\; \Prob{A \given u\in X},\;
    \Prob{A \given u'\in X}} \;\leq\; \delta\,.
\]
If $X$ goes to $u$ and then never crosses to the right through one of the
central edges, then it goes through $v_{2n}$. As a result, letting
$\Complement*{A}$ denote the complement of $A$,
\[
  \Prob*{\big}{u\in X, v_{2n}\in X} \;\geq\;
  \Prob*{\big}{\{u\in X\}\cap \Complement*{A}} \;\geq\;
  \frac{1}{2}(1-\delta)\,.
\]
Conversely, if $X$ goes to $u$ and then visits $v_{2n+1}$, then it must have
crossed to the right using one of the central edges -- so that
\[
  \Prob*{\big}{u\in X, v_{2n+1}\in X} \;\leq\;
  \Prob*{\big}{\{u\in X\}\cap A} \;\leq\; \frac{1}{2}\,\delta\,.
\]
Similarly, we get $\Prob{u'\in X, v_{2n}\in X} \leq \frac{1}{2}\delta$ and
$\Prob{u'\in X, v_{2n+1}\in X} \geq \frac{1}{2}(1-\delta)$. As a result,
\[
  \begin{dcases}
  \;\Prob*{\big}{u\in X\given v_{2n}\in X} \;\geq\; \frac{1}{2}(1-\delta) \;\geq \; \frac{3}{8} \\
  \;\Prob*{\big}{u\in X\given v_{2n+1}\in X} \;\leq\; \frac{\delta}{1-\delta} \;\leq \; \frac{1}{3}\, ,
  \end{dcases}
\]
from which it follows that $(v_n)$ does not converge in $\widehat{G}$ -- whereas it does in $\mathcal{K}$,
since $\partial G$ consists of a single end.

Conversely, the phylogenetic network $G'$ given in
Figure~\ref{figMartinBoundary}.B provides an example of a sequence
that converges in the Martin boundary $\widehat{G}'$, but not in $\mathcal{K}'$:
indeed, it is readily computed that
\[
  \Prob*{\big}{u_1\in X\given v_{n}\in X} \;=\;
  \frac{(\frac{1}{2})^{n+1}}{(\frac{1}{2})^{n+1}+(\frac{1}{3})^{n}}
  \;\tendsto{n\to\infty}\; 1,
\]
from which it follows that the sequence $(u_1,v_1,u_2,v_2,\dots)$ converges in
$\widehat{G}'$. Yet this sequence does not converge in $\mathcal{K}$,
because the subsequences $(u_n)$ and $(v_n)$ have different limits.

\subsection{Properties of the entropy} \label{appPropEntropy}

To make this article self-contained, in this appendix we  recall the
definition of the entropy of a probability measure on a general (i.e.\ not
necessarily countable) space, as well as some of its main properties. The
results are presented mostly without proofs, and we refer the reader to
\cite[Chapter~2]{martin1984mathematical} for a detailed treatment.

Let us start by recalling the definition of the entropy of a countable
partition.

\begin{definition} \label{defCountableEntropy}
Let $(\Omega, \mathscr{A}, \mu)$ be a probability space, and let $\pi$ be
a countable measurable partition of $\Omega$. Then, the \emph{entropy of $\pi$
with respect to $\mu$ is}
\[
  H_\mu(\pi) \;=\;
  - \sum_{A \in \pi} \mu(A) \log_2 \mu(A) \,.\qedhere
\]
\end{definition}

A key property of the entropy, which we refer to as its \emph{monotonicity},
is given by the next proposition.  Recall that a partition $\pi$ is said to be
\emph{finer} than a partition~$\pi'$ if for all $B \in \pi$ there exists $B'\in
\pi'$ such that $B \subset B'$.  We write $\pi \finer \pi'$ to indicate that
the partition $\pi$ is finer than $\pi'$.

\begin{proposition}[Monotonicity of the entropy] \label{propFundEntropy}
Let $\pi$ and $\pi'$ be two [countable] measurable partitions.
If $\,\pi' \finer \pi$, then $H_\mu(\pi') \geq H_\mu(\pi)$.
\end{proposition}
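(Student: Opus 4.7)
The plan is to compare the two entropies block by block, exploiting the fact that $\pi' \finer \pi$ means every $A \in \pi$ is the disjoint union of the blocks $B \in \pi'$ with $B \subset A$. First I would rewrite $\mu(A) = \sum_{B \in \pi',\, B \subset A} \mu(B)$ inside the definition of $H_\mu(\pi)$, so that both entropies become double sums indexed by pairs $(A, B)$ with $B \subset A$:
\[
H_\mu(\pi) \;=\; -\sum_{A \in \pi}\ \sum_{\substack{B \in \pi' \\ B \subset A}} \mu(B)\,\log_2 \mu(A), \qquad H_\mu(\pi') \;=\; -\sum_{A \in \pi}\ \sum_{\substack{B \in \pi' \\ B \subset A}} \mu(B)\,\log_2 \mu(B).
\]
Subtracting the first from the second gives
\[
H_\mu(\pi') - H_\mu(\pi) \;=\; \sum_{A \in \pi}\ \sum_{\substack{B \in \pi' \\ B \subset A}} \mu(B)\,\log_2 \frac{\mu(A)}{\mu(B)},
\]
with the standard convention $0 \log_2(1/0) = 0$. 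Since $B \subset A$ implies $0 \leq \mu(B) \leq \mu(A)$, every term is nonnegative, and the inequality $H_\mu(\pi') \geq H_\mu(\pi)$ follows immediately.

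An equivalent way to phrase the same computation, which I would mention for conceptual clarity, is to factor out $\mu(A)$ in the inner sum: for each $A$ with $\mu(A) > 0$, one recognizes
\[
- \sum_{\substack{B \in \pi' \\ B \subset A}} \mu(B)\,\log_2 \mu(B) \;=\; -\mu(A)\,\log_2 \mu(A) \;+\; \mu(A)\, H\!\bigl(\mu(\,\cdot\mid A)\bigr),
\]
where $\mu(\,\cdot \mid A)$ denotes the conditional distribution of $\pi'$ given $A$. Summing over $A \in \pi$ and using that conditional entropies are nonnegative yields the same conclusion.

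The only subtleties are purely bookkeeping: countable sums of nonnegative terms are always well-defined in $[0, +\infty]$; blocks $A$ with $\mu(A) = 0$ force $\mu(B) = 0$ for all $B \subset A$ and contribute $0$ throughout; and the standard convention $0 \log_2 0 = 0$ (equivalently, continuity of $x \mapsto -x \log_2 x$ at $0$) handles the degenerate blocks in $\pi'$. There is no real obstacle here: the whole statement reduces to the nonnegativity of relative entropy applied fiberwise over $\pi$.
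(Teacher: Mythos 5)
Your proof is correct. Note that the paper itself gives no proof of this proposition: it is stated as a recalled fact with a pointer to the literature (\cite[Chapter~2]{martin1984mathematical}), so there is nothing to compare against line by line; your argument is the standard one, and your second phrasing is precisely the paper's own Proposition~\ref{propGraftingEntropy} (the grafting identity for entropy) applied to every block of $\pi$ at once, with nonnegativity of the conditional entropies doing the work. One small point of hygiene: since $H_\mu(\pi)$ or $H_\mu(\pi')$ may be $+\infty$, the "subtract and observe the difference is a sum of nonnegative terms" step is not literally licensed; it is cleaner to observe the termwise inequality $-\mu(B)\log_2\mu(B) \geq -\mu(B)\log_2\mu(A)$ for each $B \in \pi'$ with $B \subset A$ (valid because $0 \leq \mu(B) \leq \mu(A)$ and $-\log_2$ is nonincreasing), and then sum both sides over all pairs, all terms being nonnegative so that the sums are unambiguously defined in $[0,+\infty]$. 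This is exactly the computation you wrote, just ordered so as to avoid an $\infty - \infty$; with that rearrangement the proof is complete, including the degenerate cases you already flag.
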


In particular, Proposition~\ref{propFundEntropy} justifies -- and is trivially
preserved by -- the following extension of Definition~\ref{defCountableEntropy}
to uncountable partitions.

\begin{definition} \label{defGeneralEntropy}
Let $(\Omega, \mathscr{A}, \mu)$ be a probability space, and let $\pi$ be
a measurable partition of $\Omega$. The \emph{entropy of $\pi$ with respect
to $\mu$} is
\[
  H_\mu(\pi) \;=\;
  \sup \Set*[\big]{H_\mu(\pi') \suchthat \pi'
  \text{ countable measurable partition of $\Omega$ s.t.\ } \pi \finer \pi'}\,. 
\]
The \emph{entropy of the probability distribution $\mu$} is then defined as
\[
  H(\mu) \;=\;
  \sup \Set*[\big]{H_\mu(\pi) \suchthat \pi \text{ measurable partition of $\Omega$}}
  \,. \qedhere
\]
\end{definition}

The monotonicity of the entropy entails that, letting
$\sigma$ denote the partition into singletons, $H(\mu) = H_\mu(\sigma)$.
In particular, if $\Omega$ is countable then we recover the familiar 
definition of Shannon entropy:
\[
  H(\mu) \;=\; - \sum_{i \in \Omega}
  \mu(\{i\}) \log_2 \mu(\{i\}) \, .
\]
In fact, if $\mu$ has a non-atomic part, then its entropy is
infinite. This is sometimes used as an equivalent definition of the
entropy, see e.g.\ \cite[Definition~2.14]{martin1984mathematical},
but we will provide a short proof of this equivalence below. Before
stating it, let us point out an elementary but useful fact. Since the
proof only involves straightforward calculations, we omit it.

\begin{proposition} \label{propGraftingEntropy}
Let $(\Omega, \mathscr{A}, \mu)$ be a probability space, and let $\pi$ be
a measurable partition of~$\,\Omega$.
Assume that $\pi'$ is obtained from $\pi$ by fragmenting
one of its block~$B$ such that $\mu(B) > 0$, and let $\pi_B$ denote the
corresponding partition of $B$.  Then,
\[
  H_{\mu}(\pi') \;=\; H_\mu(\pi) \;+\; \mu(B) \, H_{\mu_B}(\pi_B) \,,
\]
where $\mu_B$ denotes the conditional probability distribution induced on $B$
by $\mu$.
\end{proposition}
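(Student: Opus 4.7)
The plan is to first establish the identity when both $\pi$ and $\pi_B$ are countable, and then extend to arbitrary measurable partitions via the supremum definition in Definition~\ref{defGeneralEntropy}. In the countable case, $\pi' = (\pi \setminus \{B\}) \cup \pi_B$ is itself countable, and the identity reduces to expanding the Shannon sum from Definition~\ref{defCountableEntropy}: for every $C \in \pi_B$ we have $C \subset B$, hence $\mu(C) = \mu(B)\mu_B(C)$ and $\log_2 \mu(C) = \log_2 \mu(B) + \log_2 \mu_B(C)$. The contribution of the $\pi_B$-blocks to $H_\mu(\pi')$ therefore splits into a term $-\mu(B)\log_2 \mu(B)$, which combined with the contributions of the remaining blocks of $\pi'$ reconstitutes $H_\mu(\pi)$, and a term $\mu(B) H_{\mu_B}(\pi_B)$, using $\sum_{C\in \pi_B}\mu_B(C) = 1$.

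For the general case, the two inequalities should be proved separately via the supremum. For $\leq$: given any countable $\sigma'$ with $\pi' \finer \sigma'$, introduce $\sigma^* = \sigma' \vee \{B,\, \Omega \setminus B\}$. Since $B$ is a block of $\pi'$, this $\sigma^*$ is still a countable coarsening of $\pi'$, and it is finer than $\sigma'$, so $H_\mu(\sigma^*) \geq H_\mu(\sigma')$ by monotonicity (Proposition~\ref{propFundEntropy}). The blocks of $\sigma^*$ split cleanly into those contained in $B$ — forming a countable coarsening $\sigma^*_B$ of $\pi_B$ — and those disjoint from $B$, which together with $B$ itself form a countable coarsening $\sigma$ of $\pi$ having $B$ as a block. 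Applying the countable case to the pair $(\sigma, \sigma^*_B)$ then gives $H_\mu(\sigma') \leq H_\mu(\sigma^*) = H_\mu(\sigma) + \mu(B) H_{\mu_B}(\sigma^*_B) \leq H_\mu(\pi) + \mu(B) H_{\mu_B}(\pi_B)$. For $\geq$: given arbitrary countable $\sigma$ and $\sigma_B$ coarsening $\pi$ and $\pi_B$ respectively, with $B \in \sigma$, the combined partition $(\sigma \setminus \{B\}) \cup \sigma_B$ is a countable coarsening of $\pi'$, and the countable case plus taking suprema separately over $\sigma$ and $\sigma_B$ gives the matching lower bound — once one observes that requiring $B \in \sigma$ is cost-free, because any countable coarsening of $\pi$ can be refined by splitting off $B$ without decreasing its entropy.

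There is no real analytic difficulty here, and the only step requiring care is the bookkeeping in the general case: checking that $\sigma' \vee \{B, \Omega \setminus B\}$ remains a coarsening of $\pi'$ (which crucially uses that $B$ is a block of $\pi'$, not merely a measurable subset); verifying that the blocks of $\sigma^*$ partition into an \emph{inside $B$} piece and an \emph{outside $B$} piece that are genuine coarsenings of $\pi_B$ and $\pi$ respectively; and confirming that the auxiliary restriction $B \in \sigma$ does not lower the supremum defining $H_\mu(\pi)$.
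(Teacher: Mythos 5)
Your argument is correct, and it supplies exactly what the paper leaves out: the authors state Proposition~\ref{propGraftingEntropy} with the remark that ``the proof only involves straightforward calculations, so we omit it,'' so there is no written proof to compare against. Your two-step structure --- the countable case by expanding the Shannon sum with $\mu(C)=\mu(B)\mu_B(C)$, then the general case by sandwiching via the supremum in Definition~\ref{defGeneralEntropy} --- is surely the intended ``straightforward calculation,'' and the bookkeeping you identify (that $\sigma'\vee\{B,\Omega\setminus B\}$ coarsens $\pi'$, that its blocks split into an inside-$B$ coarsening of $\pi_B$ and an outside-$B$ coarsening of $\pi$, and that imposing $B\in\sigma$ costs nothing by monotonicity) all checks out, including in the case where either entropy is infinite, since all terms are nonnegative. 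One wording slip worth fixing: you twice assert that ``$B$ is a block of $\pi'$,'' and even call this crucial; in fact $B$ is a block of $\pi$ and is merely a \emph{union} of blocks of $\pi'$ (namely the blocks of $\pi_B$), unless $\pi_B$ is the trivial partition $\{B\}$. What your argument actually uses --- and what does hold --- is that $B$ is $\pi'$-saturated, i.e.\ every block of $\pi'$ is either contained in $B$ or disjoint from it, which is exactly what makes $\sigma'\vee\{B,\Omega\setminus B\}$ a coarsening of $\pi'$. With that phrasing corrected, the proof is complete.
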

  
\begin{proposition} \label{propShannonEntropyPolishSpace}
  Let $(\Omega, \mathscr{A}, \mu)$ be a probability space. If $\mu$
has a non-atomic part of positive mass, then $H(\mu) = +\infty$.
\end{proposition}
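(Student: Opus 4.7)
My plan is to show that under the hypothesis there exist countable measurable partitions of $\Omega$ with arbitrarily large Shannon entropy. By the monotonicity of entropy (Proposition~\ref{propFundEntropy}) and Definition~\ref{defGeneralEntropy}, the supremum defining $H(\mu)$ over arbitrary measurable partitions reduces to a supremum over countable measurable partitions: the singleton partition $\sigma$ is finer than every countable measurable partition, and any uncountable measurable partition can only contribute through its countable coarsenings. So it suffices to exhibit a sequence $(\pi_N)$ of countable measurable partitions with $H_\mu(\pi_N) \to +\infty$.

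Interpreting ``the support of $\mu$ is uncountable'' as ``$\mu$ is not concentrated on any countable subset of $\Omega$'', I would first use the Lebesgue decomposition (available on any standard Borel space) to write $\mu = \mu_a + \mu_c$, where $\mu_a$ is purely atomic, carried by a countable set of atoms $A \subset \Omega$, and $\mu_c$ is a non-atomic sub-probability measure. The hypothesis then forces $\alpha \defas \mu_c(\Omega) > 0$.

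The key ingredient is the classical intermediate-value property for non-atomic finite measures on standard Borel spaces: every value in $[0, \alpha]$ arises as $\mu_c(B)$ for some measurable $B$. Iterating this, for each $N \geq 1$ I would produce a measurable partition $\{B_1, \ldots, B_N\}$ of $\Omega$ with $\mu_c(B_i) = \alpha/N$. Define $\pi_N$ to be the countable measurable partition whose blocks are the sets $B_i \setminus A$, for $i = 1, \ldots, N$, together with the singletons $\{a\}$ for $a \in A$. Since $\mu_c$ is non-atomic and $A$ is countable, $\mu_c(A) = 0$, so $\mu(B_i \setminus A) = \mu_c(B_i) = \alpha/N$. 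The atomic singletons contribute a non-negative amount to the entropy, so
\[
H_\mu(\pi_N) \;\geq\; \sum_{i=1}^N \frac{\alpha}{N} \log_2 \frac{N}{\alpha} \;=\; \alpha \log_2 \frac{N}{\alpha},
\]
which tends to $+\infty$ as $N \to \infty$, yielding $H(\mu) = +\infty$.

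The only non-trivial step is the intermediate-value property for non-atomic measures. Rather than prove it afresh, I would invoke the standard isomorphism of $(\Omega, \mu_c/\alpha)$ (modulo $\mu_c$-null sets) with $[0,1]$ equipped with Lebesgue measure, citing \cite{martin1984mathematical} or another standard reference; the partition $\{B_1, \ldots, B_N\}$ is then the pullback of the subdivision of $[0,1]$ into $N$ equal intervals. Everything else is bookkeeping within the definitions already established earlier in this appendix.
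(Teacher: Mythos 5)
Your proof is correct and follows essentially the same strategy as the paper's: isolate the (positive-mass) non-atomic part of $\mu$, equipartition it into $N$ measurable pieces of equal mass using the standard isomorphism with an interval, and conclude that the resulting countable partitions have entropy at least of order $\log_2 N$. The only cosmetic differences are that the paper separates the atoms via the grafting property for entropy and works with the conditional measure on the complement of the atom set (using the Borel isomorphism theorem and continuity of the pushforward's distribution function to build the equipartition), whereas you bound $H_\mu(\pi_N)$ directly on the full space after a Lebesgue decomposition; both routes rest on the same classical equipartition fact and are equally valid.
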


\begin{proof}
Let $(A_i)_{i\geq 1}$ denote the family of distinct atoms of $\mu$.
Let $A=\bigcup_i A_i$, and note that by assumption
$\mu(\Complement*{A}) > 0$. Define
\[
  \sigma_A \;=\; \Set{A_i \suchthat i \geq 1}\cup \{\Complement*{A}\} \,.
\]
By Proposition~\ref{propGraftingEntropy}, we have
\[
  H(\mu) \;=\;
  H_\mu(\sigma_A) \;+\; \mu(\Complement*{A})\, H(\mu_{\Complement*{A}}) \,.
\]
Thus, to finish the proof it suffices to show that
$H(\mu_{\Complement*{A}}) = +\infty$.

By Lyapunov's theorem on non-atomic measures (see
e.g.~\cite{Art90}), for all $n$ it is possible to find a measurable
partition $\pi_n = \{B_1, \dots, B_n\}$ of $\Complement*{A}$ such
that $\mu(B_i)=\mu(\Complement*{A})/n$ for all
$i\in \{1, \dots, n\}$. It follows that
\[
  H(\mu_{\Complement*{A}}) \;\geq\; H_{\mu_{\Complement*{A}}}(\pi_n) \;=\; \log_2(n) \,.
\]
Since $n$ can be arbitrarily large, this concludes the proof.
\end{proof}

Finally, we close this section by a useful lemma for refining sequences of
partitions.  For this, we first introduce some vocabulary and notations.  Given
a partition $\pi$, we denote by:
\begin{itemize}
  \item $\pi[x]$ the block of $\pi$ that contains $x$;
  \item $\sim_\pi$ the equivalence relation associated with $\pi$.
\end{itemize}

\begin{definition} \label{defCVPartitions}
We say that a sequence $(\pi_n)$ of partitions of a set $E$ \emph{converges
to the partition $\pi$} if, for all $x, y \in E$, there exists
$N$ such that, for all $n \geq N$,
\[
  x \,\sim_{\pi_n} y \;\iff\; x \,\sim_\pi y \, . \qedhere
\]
\end{definition}

\begin{remark} \label{remCVPartitions}
Equivalently, $(\pi_n)$ converges to $\pi$ if and only if
$\lim_n \pi_n[x] = \pi[x]$ for every $x \in E$, i.e.\ if and only if
$\Indic*{\pi_n[x]}$ converges pointwise to $\Indic*{\pi[x]}$ for all $x\in E$.
\end{remark}

\begin{lemma} \label{lemCVEntropy}
Let $(\pi_n)$ be a sequence of measurable partitions on a probability
space $(\Omega,\mathscr{A},\mu)$. If $(\pi_n)$ converges to a partition $\pi$,
such that $\pi\finer \pi_n$ for all $n\geq 0$, then $\pi$ is measurable and
\[
  H_\mu(\pi_n) \;\tendsto{n\to\infty}\; H_\mu(\pi).
\]
In particular, if $\pi$ is the partition into singletons, the limit is $H(\mu)$.
\end{lemma}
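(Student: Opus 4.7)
The plan is to handle measurability of $\pi$, the $\limsup$, and the $\liminf$ separately, with the last being the substantive step via Fatou's lemma applied to the densities $x \mapsto \mu(\pi_n[x])$.

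For measurability, I would first show that $\pi[x] = \bigcap_{n \geq 0} \pi_n[x]$ for every $x \in \Omega$. The inclusion $\pi[x] \subseteq \pi_n[x]$ for all $n$ is exactly $\pi \finer \pi_n$; for the converse, if $y \sim_{\pi_n} x$ for every $n$, then the convergence of partitions (Remark~\ref{remCVPartitions}) forces $y \sim_\pi x$, so $y \in \pi[x]$. Since each $\pi_n[x]$ is measurable, $\pi[x]$ is a countable intersection of measurable sets, hence measurable. The $\limsup$ bound $\limsup_n H_\mu(\pi_n) \leq H_\mu(\pi)$ is then immediate from $\pi \finer \pi_n$ combined with the monotonicity of entropy (Proposition~\ref{propFundEntropy}, extended to uncountable partitions by its supremum definition).

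For the $\liminf$, introduce the measurable functions $\phi_n(x) \defas \mu(\pi_n[x])$ and $\phi(x) \defas \mu(\pi[x])$. I would first prove that $\phi_n \to \phi$ pointwise on $\Omega$: for each fixed $x$, the indicators $\mathbf{1}_{\pi_n[x]}(y)$ converge to $\mathbf{1}_{\pi[x]}(y)$ pointwise in $y$ (they equal $1$ when $y \in \pi[x]$ by $\pi \finer \pi_n$, and are eventually $0$ otherwise by the convergence of partitions), and are bounded by $1$, so dominated convergence yields $\phi_n(x) \to \phi(x)$. Since $\phi_n \leq 1$, the integrands $-\log_2 \phi_n$ are nonnegative, and Fatou's lemma gives
\[
  \liminf_{n} \int_\Omega -\log_2 \phi_n(x) \, d\mu(x) \;\geq\; \int_\Omega -\log_2 \phi(x) \, d\mu(x).
\]
In the countable case the left-hand side equals $H_\mu(\pi_n)$; meanwhile, any countable measurable $\pi'$ with $\pi \finer \pi'$ satisfies $\mu(\pi'[x]) \geq \phi(x)$ and hence $H_\mu(\pi') \leq \int -\log_2 \phi \, d\mu$, so taking the supremum over such $\pi'$ in the definition of $H_\mu(\pi)$ gives $H_\mu(\pi) \leq \int -\log_2 \phi \, d\mu$. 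Combining these inequalities with the $\limsup$ bound yields $H_\mu(\pi_n) \to H_\mu(\pi)$, and the special case of singletons follows since $H_\mu(\pi) = H(\mu)$ for that choice of $\pi$, as noted after Definition~\ref{defGeneralEntropy}.

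The main technical subtlety lies in the identity $H_\mu(\pi_n) = \int -\log_2 \phi_n \, d\mu$, which is immediate only when $\pi_n$ is countable---and this is the only case that arises in our applications. Extending it to general measurable $\pi_n$, which could have uncountably many blocks of measure zero covering a region of positive measure, relies on the standard-space hypothesis through a Borel-isomorphism argument similar to the one used in the proof of Proposition~\ref{propShannonEntropyPolishSpace}.
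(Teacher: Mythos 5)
Your proof is correct and follows essentially the same route as the paper's: Fatou's lemma applied to $x \mapsto -\log_2 \mu(\pi_n[x])$ for the lower bound, and the monotonicity of entropy (Proposition~\ref{propFundEntropy}) for the upper bound, with measurability of $\pi$ obtained from the pointwise convergence of the blocks. The only difference is one of care rather than of strategy: where the paper asserts the identity $H_\mu(\sigma)=\int f^\sigma\,d\mu$ for arbitrary measurable partitions, you sidestep it for the limit partition by deriving the one-sided inequality $H_\mu(\pi)\leq \int -\log_2\mu(\pi[x])\,d\mu(x)$ directly from the supremum definition, which is a slightly cleaner treatment of the same step.
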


\begin{proof}
The fact that $\pi$ is measurable can be deduced from Remark~\ref{remCVPartitions}.
This remark is also the key to the rest of the proof.
Indeed, since $\pi_n[x] \to \pi[x]$ for all $x\in \Omega$,
\begin{equation} \label{eqMuPinToPi}
\mu(\pi_n[x]) \;\tendsto{n\to\infty}\; \mu(\pi[x]),\qquad \forall x\in \Omega.
\end{equation}
Now, for any measurable partition $\sigma$, define the following measurable
function:
\[
  f^\sigma\colon\, x\in \Omega\, \longmapsto
  \begin{cases}
    -\log_2\big(\mu(\sigma[x])\big) &\text{if } \mu(\sigma[x])>0, \\
    \infty & \text{otherwise}.
  \end{cases}
\]
By Equation~\eqref{eqMuPinToPi}, $f^{\pi_n}\to f^{\pi}$ pointwise.
Noting that, in view of Proposition~\ref{propShannonEntropyPolishSpace},
$H_\mu(\sigma)=\int f^\sigma \,d\mu$, we deduce from Fatou's lemma that
\[
  \liminf_{n\to\infty} H_\mu(\pi_n) \;\geq\; H_\mu(\pi).
\]
This concludes the proof, since, by Proposition~\ref{propFundEntropy},
$H_\mu(\pi_n) \leq H_\mu(\pi)$ for all $n$.
\end{proof}

\subsection{Properties of the \texorpdfstring{$B_2$}{B2} index} \label{appPropB2}

In this Appendix, we list and prove miscellaneous properties of the $B_2$ index.
Let us start with something elementary, but fundamental -- namely, the fact that
the grafting property holds for infinite phylogenetic networks.

\begin{proposition} \label{propGraftingInfinite}
Let $G_1$ and $G_2$ be two phylogenetic networks, and let $G$ be the
phylogenetic network obtained by identifying a leaf $\ell \in G_1$ with the
root of $G_2$. Then,
\[
  B_2(G) \;=\; B_2(G_1) \;+\; p_\ell\, B_2(G_2) \,, 
\]
where $p_\ell$ denotes the probability that the directed random walk on $G_1$
ends in~$\ell$.
\end{proposition}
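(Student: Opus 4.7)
The plan is to reduce the statement to the analogous decomposition for entropies of partitions (Proposition~\ref{propGraftingEntropy}) by showing that the exit distribution of the directed random walk on $G$ decomposes naturally in terms of the exit distributions on $G_1$ and $G_2$. Concretely, I would proceed in three steps: a boundary decomposition, a Markovian decomposition of the exit measure, and a one-line application of the entropy grafting formula.

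First, I would establish the boundary decomposition. Write $\partial G_i$ and $\partial G$ for the respective boundaries. Every leaf of $G_1$ other than $\ell$ remains a leaf of $G$, and the leaves of $G_2$ become leaves of $G$; moreover, every ray of $G$ either stays in $G_1$ (hence defines a ray of $G_1$) or eventually crosses into $G_2$ and remains there (since in a DAG one cannot exit $G_2$ once past the root of $G_2$). A ray of $G$ that is co-directional with a ray entirely inside $G_2$ must itself enter $G_2$, so no end of $G_1$ is identified with any end of $G_2$, and the co-directional relation on each side is preserved. Hence
\[
  \partial G \;=\; \bigl(\partial G_1\setminus\{\ell\}\bigr)\;\sqcup\;\partial G_2.
\]

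Next, let $\mu,\mu_1,\mu_2$ denote the exit distributions on $G,G_1,G_2$, and split the walk $X$ on $G$ on the event $\{\ell\in X\}$. On the complement, $X$ stays in $G_1$ and exits at a point of $\partial G_1\setminus\{\ell\}$ whose law is the restriction of $\mu_1$; on $\{\ell\in X\}$, by the memorylessness of the directed random walk the trajectory after visiting $\ell$ is a directed random walk on $G_2$ started from its root, independent of what happened before, and hence exits at a point of $\partial G_2$ with distribution $\mu_2$. This gives, for every measurable $A\subset\partial G$,
\[
  \mu(A) \;=\; \mu_1\bigl(A\cap(\partial G_1\setminus\{\ell\})\bigr)\;+\;p_\ell\,\mu_2(A\cap\partial G_2),
\]
so that $\mu(\partial G_2)=p_\ell$ and the conditional distribution of $\mu$ on $\partial G_2$ is exactly $\mu_2$.

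Finally, I would apply Proposition~\ref{propGraftingEntropy} to the partition $\pi$ of $\partial G$ whose blocks are the singletons $\{x\}$ for $x\in \partial G_1\setminus\{\ell\}$ together with the single block $\partial G_2$, refined into the partition $\pi'$ into singletons (which fragments only $\partial G_2$). By the formula above, the probabilities that $\mu$ assigns to the blocks of $\pi$ coincide with $\mu_1(\{x\})$ for $x\neq\ell$ and $\mu_1(\{\ell\})=p_\ell$, so $H_\mu(\pi)=H(\mu_1)=B_2(G_1)$; the conditional entropy on $\partial G_2$ is $H(\mu_2)=B_2(G_2)$; and $H_\mu(\pi')=H(\mu)=B_2(G)$ by Definition~\ref{defB2Infinite}. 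The grafting formula for entropies then yields $B_2(G) = B_2(G_1) + p_\ell\, B_2(G_2)$. The main subtlety is the boundary decomposition, where one must verify that the co-directional relation on $G$ neither merges ends across the gluing point nor splits ends of $G_i$; the degenerate case $p_\ell=0$ is handled by noting that $\mu$ then coincides with $\mu_1$ as measures on $\partial G$, with the convention $0\cdot(+\infty)=0$ absorbing the case $B_2(G_2)=+\infty$.
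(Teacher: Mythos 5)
Your proof is correct and follows exactly the route the paper intends: the paper omits the argument entirely, stating only that the proposition is ``a straightforward application of the analogous Proposition~\ref{propGraftingEntropy} for the entropy,'' and your three steps (the boundary decomposition $\partial G = (\partial G_1\setminus\{\ell\})\sqcup\partial G_2$, the Markov decomposition of the exit measure at $\ell$, and the entropy grafting formula applied to the partition that lumps $\partial G_2$ into a single block) supply precisely the details being left implicit. Your verification that co-directionality neither merges nor splits ends across the gluing point, and your treatment of the degenerate case $p_\ell = 0$, are both appropriate and consistent with the paper's definitions.
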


Since the proof of Proposition~\ref{propGraftingInfinite} is a straightforward
application of the analogous Proposition~\ref{propGraftingEntropy} for the
entropy, we do not detail it.


Next, let us prove the measurability of $B_2$.  This technicality is needed to
ensure that the $B_2$ index of a random phylogenetic network is a well-defined
random variable.

\begin{proposition}
Let $\phylnet$ denote the space of phylogenetic networks, equipped with
the local topology and its Borel $\sigma$-field.
The function $B_2\colon \phylnet \to \R\cup\{+\infty\}$ is measurable.
\end{proposition}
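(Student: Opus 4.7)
The plan is to factor $B_2$ as the composition $G\mapsto \mu_G\mapsto H(\mu_G)$, where $\mu_G$ denotes the law of the escape point $X_\infty$ of the directed random walk on~$G$, viewed as a Borel probability measure on a fixed universal compact Polish space. Concretely, I would fix once and for all a canonical labeling of every phylogenetic network by a countable set~$V^\ast$, so that for each $G$ the space~$\mathcal{K}$ of Definition~\ref{defK} embeds in the compact metric space $\mathcal{K}_\ast := 2^{V^\ast}$ endowed with the product topology, and every $\mu_G$ becomes a Borel probability measure on~$\mathcal{K}_\ast$. The task then reduces to showing that both the map $G\mapsto \mu_G$ and the entropy functional $\mu\mapsto H(\mu)$ are Borel measurable.

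For the first map, by a standard $\pi$-$\lambda$ argument it suffices to check that $G\mapsto \mu_G(C)$ is measurable for every clopen cylinder $C = \{S\in \mathcal{K}_\ast : S\cap F = T\}$, with $F\subset V^\ast$ finite and $T\subset F$. Since $C$ is clopen and $\bar{X}_t\to \bar{X}_\infty$ almost surely in $\mathcal{K}_\ast$ by Corollary~\ref{corXinftyWellDefined}, bounded convergence gives
\[
\mu_G(C)\;=\;\lim_{t\to\infty}\Prob{\bar{X}_t\in C}\,,
\]
and each finite-horizon probability on the right is determined by the restriction $[G]_N$ for any $N$ larger than both~$t$ and the maximal height of a vertex in~$F$. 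Hence this probability is a locally constant (in particular continuous) function of~$G$ for the local topology, and $G\mapsto \mu_G(C)$ is a pointwise limit of continuous functions, hence Borel measurable.

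For the second map, I would apply Lemma~\ref{lemCVEntropy} to the sequence $(\pi_n)$ of partitions of $\mathcal{K}_\ast$ by cylinders over the first $n$ elements of~$V^\ast$. The sequence $(\pi_n)$ is refining and converges, in the sense of Definition~\ref{defCVPartitions}, to the partition into singletons, so $H_\mu(\pi_n)\nearrow H(\mu)$ for every $\mu\in \mathrm{Prob}(\mathcal{K}_\ast)$. Each $H_\mu(\pi_n)$ is a finite sum of terms of the form $-\mu(C)\log_2 \mu(C)$ over clopen cylinders~$C$, and is therefore a continuous function of~$\mu$ for the weak topology. Hence $H$ is a pointwise supremum of continuous functions, and is Borel measurable. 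Composing with the first map yields the measurability of~$B_2$.

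The main obstacle will be the measurability of $G\mapsto \mu_G$, and specifically the bookkeeping required to embed the varying spaces $\mathcal{K}$ (one per network~$G$) into the single fixed space $\mathcal{K}_\ast$. Once this embedding is in place, the almost-sure convergence $\bar{X}_t\to \bar{X}_\infty$ from Corollary~\ref{corXinftyWellDefined} reduces the verification on cylinders to the continuous dependence of the walk, restricted to a finite time horizon, on the truncation~$[G]_N$---which is essentially built into the definition of the local topology on~$\phylnet$.
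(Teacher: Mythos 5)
Your argument is correct and is essentially the paper's proof repackaged: both rest on the almost sure convergence $\bar{X}_t \to \bar{X}_\infty$ in $\mathcal{K}$ together with bounded convergence, so as to express the limiting cylinder/partition probabilities as limits of finite-horizon probabilities that are locally constant for the local topology, and both invoke Lemma~\ref{lemCVEntropy} to pass from the entropies of finite partitions up to $H(\mu) = B_2(G)$. The only difference is organizational: the paper works directly with the isomorphism-invariant double limit of the continuous functions $G \mapsto -\sum_{A\subset [G]_k} p_A^n\log_2 p_A^n$, which sidesteps the canonical-labeling bookkeeping that your factorization through probability measures on the universal space $2^{V^\ast}$ requires (a real but surmountable issue, which you correctly flag).
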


\begin{proof}
We will show that $B_2$ can be expressed as a pointwise limit of continuous
functions from $\phylnet$ to $\R$.

Let $G$ be a fixed phylogenetic network, and let $X=(X_n)_{n\geq 0}$ be the
directed random walk on $G$, started from the root.  Note that $X$ is the
only source of randomness, and that, for any $n \geq 0$, the distribution of
$X_n$ depends only on $[G]_n$. Therefore, the functions
\[
  B_2^{k,n}\colon G \;\longmapsto\; -\!\!\!\sum_{A\subset [G]_k}\!\! p_A^n \log_2(p_A^n),
  \quad \text{where } p_A^n = \Prob{[\bar{X}_n]_k = A},
\]
are continuous for the local topology (recall that $\bar{X}_n$ denotes the set
of ancestors of $X_n$ in $G$, and that we use the short notation
$[\bar{X}_n]_k = \bar{X}_n \cap [G]_k$).  Furthermore, we gave in
Definition~\ref{defK} a compactification $\mathcal{K}$
of $G$ in which $X_n$ converges almost surely to $X_\infty$ and in which, 
by definition, the functions $x \mapsto [\bar{x}]_k$ are continuous,
for all~$k$. Therefore, for any $k\geq 1$ and $A \subset [G]_k$, by dominated
convergence,
\[
  p_A^n \;=\; \Prob{[\bar{X}_n]_k = A} \;\tendsto{n\to\infty}\;
  \Prob{[\bar{X}_\infty]_k = A} \;\asdef\; p_A\, .
\]
This implies that
\[
  B_2^{k,n}(G) \;\tendsto{n\to\infty}\;
  B_2^k(G) \;\defas\; -\!\!\!\sum_{A\subset [G]_k}\!\! p_A \log_2(p_A)
  \;=\; H_{\mu}(\pi_k),
\]
where $\mu$ is the distribution of $X_\infty$ on $\mathcal{K}$, and $\pi_k$ is
the partition defined by
\[
  x\sim_{\pi_k} y \;\iff \; [\bar{x}]_k = [\bar{y}]_k\,.
\]
Because the map $x\mapsto \bar{x}$ is injective (see
Proposition~\ref{propK}), as $k\to\infty$ the partition~$\pi_k$ tends to the
partition into singletons.
Therefore, by Lemma~\ref{lemCVEntropy} and
definition of $B_2$,
\[
  B_2^k(G) = H_\mu(\pi_k) \;\tendsto{k\to\infty}\; H(\mu) = B_2(G),
\]
concluding the proof.
\end{proof}

We now provide tools to study the continuity of $B_2$.
We focus on easy-to-check sufficient conditions that are likely to
hold for biologically relevant models of phylogenetic networks.

\begin{proposition} \label{propInequalityLimInfB2}
Let $(G_n)$ be a sequence of phylogenetic networks converging locally to $G$.
Assume that the directed random walk on $G$ ends in a leaf almost surely.
Then,
\[
  \liminf_{n\to\infty} B_2(G_n) \;\geq\; B_2(G).
\]
\end{proposition}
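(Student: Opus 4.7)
The plan is to approximate $B_2$ from below by partial entropies depending only on $[G]_k$. For each $k \geq 0$, let $\pi_k$ denote the partition of $\partial G$ defined by $x \sim_{\pi_k} y$ iff $[\bar{x}]_k = [\bar{y}]_k$, and set $B_2^k(G) = H_\mu(\pi_k)$, where $\mu$ is the law of the escape point $X_\infty$. Two general facts will do most of the work. First, by monotonicity of entropy (Proposition~\ref{propFundEntropy}) and the fact that the singleton partition refines $\pi_k$, we have $B_2^k(H) \leq B_2(H)$ for every phylogenetic network $H$. Second, since $x \mapsto \bar{x}$ is injective on $\partial G$ (Proposition~\ref{propK}), the sequence $(\pi_k)_{k \geq 0}$ refines to the singleton partition of $\partial G$, so Lemma~\ref{lemCVEntropy} gives $B_2^k(G) \to B_2(G)$ as $k \to \infty$. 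It therefore suffices to prove that, for each fixed $k$, $B_2^k(G_n) \to B_2^k(G)$ as $n \to \infty$; chaining these three ingredients will yield
\[
  \liminf_n B_2(G_n) \;\geq\; \liminf_n B_2^k(G_n) \;=\; B_2^k(G),
\]
after which letting $k \to \infty$ on the right gives the proposition.

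The heart of the argument is therefore the convergence $B_2^k(G_n) \to B_2^k(G)$ at fixed $k$. Since $[\bar{X}_\infty]_k$ takes only finitely many values (the ancestor-closed subsets of the finite set $[G]_k$), it reduces to showing $\mathbb{P}([\bar{X}_\infty^{(n)}]_k = A) \to \mathbb{P}([\bar{X}_\infty]_k = A)$ for each such $A$. This is where the hypothesis enters crucially. Given $\varepsilon > 0$, I would choose $T \geq k$ large enough that the walk on $G$ reaches a leaf within the first $T$ steps with probability at least $1 - \varepsilon$; this is possible because the walk terminates almost surely. For $n$ large enough that $[G_n]_{T+1} = [G]_{T+1}$, the out-degrees of all vertices of height at most $T$ agree in $G$ and $G_n$, so the laws of the trajectory $(X_0, \ldots, X_T)$ on the two networks coincide; moreover every leaf of $G$ of height at most $T$ is then a leaf of $G_n$ as well. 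On the event that the walk has reached a leaf by time $T$, $[\bar{X}_\infty]_k$ is a deterministic function of $(X_0, \ldots, X_T)$, and since the complementary event has probability at most $\varepsilon$ in both networks, $|\mathbb{P}([\bar{X}_\infty^{(n)}]_k = A) - \mathbb{P}([\bar{X}_\infty]_k = A)| \leq 2\varepsilon$ for $n$ large enough, giving the required convergence in distribution (and of the associated entropies, since the state space is finite).

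The only real obstacle is this reconciliation of the two walks: along the sequence $(G_n)$ the walk may well escape to infinity, so one cannot directly compare $X_\infty^{(n)}$ with $X_\infty$. The almost-sure termination on $G$ is precisely what permits truncation at a deterministic horizon $T$ with a tail error uniform in $n$, after which the matching of local neighbourhoods handles the main term. Everything else --- the monotonicity of entropy, the refinement argument via Lemma~\ref{lemCVEntropy}, and the eventual equality $[G_n]_k = [G]_k$ --- is off-the-shelf.
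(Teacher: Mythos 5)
Your overall plan coincides with the paper's: lower-bound $B_2(G_n)$ by a scale-$k$ partial entropy, show that this partial entropy is (approximately) determined by a bounded-height restriction of the network, and use the almost-sure termination of the walk on $G$ to let the partial entropies exhaust $B_2(G)$ as $k\to\infty$. Your truncation at a deterministic horizon $T$, with a tail error controlled uniformly in $n$ by the termination hypothesis, is exactly the right mechanism, and is in fact spelled out more explicitly than in the paper. However, one step fails as stated: the claim that, once $[G_n]_{T+1}=[G]_{T+1}$, the value of $[\bar X_\infty]_k$ on the event of termination by time $T$ is the \emph{same} function of the trajectory in $G$ and in $G_n$, whence $|\mathbb{P}([\bar X^{(n)}_\infty]_k=A)-\mathbb{P}([\bar X_\infty]_k=A)|\le 2\varepsilon$. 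The restricted ancestor set $[\bar\ell]_k$ of a leaf $\ell$ is \emph{not} determined by any bounded-height restriction of the network: the relation $u\ancestor\ell$ may be witnessed only by directed paths passing through vertices of arbitrarily large height (take $u$ of height $k$, a long path $u\to w_1\to\cdots\to w_m\to\ell$ whose last vertices have huge height, and keep $h(\ell)$ small via a second, short incoming path to $\ell$). Local convergence therefore does not force $[\bar\ell]^{G_n}_k=[\bar\ell]^{G}_k$ even for leaves of height at most $T$, the partitions induced on the boundary by $x\mapsto[\bar x]_k$ can genuinely differ between $G_n$ and $G$ for every $n$, and the two laws need not be within $2\varepsilon$ of each other.

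The gap is local and fixable, and the fix lands you essentially on the paper's proof. Replace $\pi_k$ by the coarser partition $\pi'_k$ whose blocks are the singletons $\{\ell\}$ for $\ell\in\mathcal{L}_k$ (the leaves of $G$ of height at most $k$), plus one block for the remainder of the boundary. Since a leaf is an ancestor of no boundary point other than itself, $\pi_k$ refines $\pi'_k$, so by Proposition~\ref{propFundEntropy},
\[
  B_2(G_n)\;\ge\; H_{\mu_n}(\pi'_k)\;\ge\; -\sum_{\ell\in\mathcal{L}_k}p^{(n)}_\ell\log_2 p^{(n)}_\ell\,,
\]
and the quantities $p^{(n)}_\ell=\mathbb{P}(X^{(n)}_\infty=\ell)$ involve only the event that the walk hits $\ell$ --- no ancestor sets. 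Your horizon argument then applies verbatim: choosing $T$ so that the walk on $G$ terminates by time $T$ with probability at least $1-\varepsilon$, and $n$ large enough that $[G_n]_{T+1}=[G]_{T+1}$, one gets $|p^{(n)}_\ell-p_\ell|\le\varepsilon$ for every $\ell\in\mathcal{L}_k$, hence $-\sum_{\ell\in\mathcal{L}_k}p^{(n)}_\ell\log_2 p^{(n)}_\ell\to-\sum_{\ell\in\mathcal{L}_k}p_\ell\log_2 p_\ell$ as $n\to\infty$. Finally, since the walk on $G$ ends in a leaf almost surely, $-\sum_{\ell\in\mathcal{L}_k}p_\ell\log_2 p_\ell\uparrow B_2(G)$ as $k\to\infty$ (this replaces your appeal to Lemma~\ref{lemCVEntropy}, which is no longer needed), and your chain of inequalities closes. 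This coarser partition is exactly the one the paper works with, written there simply as the truncated sum over $\mathcal{L}_k$.
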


\begin{proof}
Let $\mathcal{L}$ be the leaf set of $G$, and let
$\mathcal{L}_k = \Set{\ell \in \mathcal{L} \suchthat h(\ell) \leq k}$
be the set of leaves at height at most~$k$.  As always, write $X$ for the
directed random walk on~$G$, and $X_\infty$ for its limit on the boundary of~$G$. 
Finally, let $p_\ell = \Prob{X_\infty=\ell}$.

Since $\Prob{X_\infty \in \mathcal{L}} = 1$, we have
$B_2(G) = -\!\sum_{\ell \in \mathcal{L}} p_\ell \log_2 p_\ell$; and since
$\mathcal{L} = \limup_k \mathcal{L}_k$, it follows that
\[
  -\sum_{\ell \in \mathcal{L}_k} \! p_{\ell}\log_2 p_{\ell}
  \;\tendsto{k\to\infty}\; B_2(G)\,.
\]
Now, fix some $b$ such that $0< b <B_2(G)$, and let $k$ be such that the sum in the
display above be greater than $b$.
Fix $n_0$ such that $[G_n]_k=[G]_k$ for all $n\geq n_0$.
Then, for all $n\geq n_0$,
\[
  B_2(G_n) \;\geq\; -\sum_{\ell \in \mathcal{L}_k} p_{\ell}\log_2 p_{\ell} \;>\; b.
\]
This shows that $\liminf_n B_2(G_n) \geq b$. Letting $b\to B_2(G)$ concludes the proof.
\end{proof}

\begin{proposition} \label{propInequalityLimSupB2}
Let $(G_n)$ be a sequence of \emph{finite}\! phylogenetic networks that
converge locally to~$G$.  Assume that, for a sequence $(k_n)$ satisfying
$[G_n]_{k_n}=[G]_{k_n}$, we have
\[
  \Prob*{\big}{h(X_\infty) > k_n}\,\log \Abs{G_n} \;\tendsto{n\to\infty}\; 0\,,
\]
where $h(X_\infty)$ is the height of the limit of the directed random walk on $G$,
and $\Abs{G_n}$ denotes the number of leaves of $G_n$. Then, 
\[
  \lim_{n\to\infty} B_2(G_n) \;=\; B_2(G).
\]
\end{proposition}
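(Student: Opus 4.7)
The plan is to prove convergence by establishing the two inequalities $\liminf B_2(G_n) \geq B_2(G)$ and $\limsup B_2(G_n) \leq B_2(G)$. The lower bound is essentially free from Proposition~\ref{propInequalityLimInfB2}: the hypothesis $\Prob{h(X_\infty) > k_n} \log|G_n| \to 0$ forces $\Prob{h(X_\infty) > k_n} \to 0$ (since $|G_n| \geq 2$ for all but finitely many~$n$, otherwise the result is trivial), and since the sequence $(k_n)$ must tend to infinity for this to be possible, we conclude that $h(X_\infty) < \infty$ almost surely, i.e.\ the directed random walk on $G$ ends in a leaf almost surely.

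For the upper bound, I would split the leaves of $G_n$ into the ``low'' ones (height $\leq k_n$) and the ``high'' ones (height $> k_n$), and analyze the two contributions to the entropy separately. Writing $p_\ell^{(n)} = \Prob{X_\infty^{(n)} = \ell}$ for $\ell \in \mathcal{L}(G_n)$ and using $[G_n]_{k_n} = [G]_{k_n}$, the directed random walks on $G$ and on $G_n$ coincide in distribution until they exit height $k_n$. In particular, for any leaf $\ell$ of height at most $k_n$ we have $p_\ell^{(n)} = p_\ell$, and moreover
\[
  \varepsilon_n \;\defas\; \sum_{\ell \in \mathcal{L}(G_n),\, h(\ell) > k_n} p_\ell^{(n)}
  \;=\; \Prob{h(X_\infty) > k_n} \,.
\]
Writing $B_2(G_n)$ as the sum of the low and high contributions, the low contribution converges to $B_2(G) = -\sum_{\ell \in \mathcal{L}} p_\ell \log_2 p_\ell$ by dominated convergence (together with the fact that $X_\infty$ is a leaf almost surely).

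For the high contribution, I would use the standard entropy bound: if $(q_i)_{i=1}^N$ are nonnegative reals summing to $\varepsilon$, then Jensen's inequality applied to the normalized distribution $(q_i/\varepsilon)$ gives
\[
  - \sum_{i = 1}^N q_i \log_2 q_i \;\leq\; \varepsilon \log_2 N - \varepsilon \log_2 \varepsilon \,.
\]
Applying this with $N$ the number of leaves of $G_n$ at height $> k_n$ (so $N \leq |G_n|$) and $\varepsilon = \varepsilon_n$, the high contribution is bounded by $\varepsilon_n \log_2 |G_n| - \varepsilon_n \log_2 \varepsilon_n$. The first term tends to zero by hypothesis, and the second tends to zero because $x \mapsto -x \log_2 x$ is continuous at~$0$. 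Hence $\limsup B_2(G_n) \leq B_2(G)$, completing the proof.

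The main obstacle is nothing deep but rather the bookkeeping: one must verify carefully that the coupling of the walks on $G$ and on $G_n$ up to exiting height~$k_n$ really yields $p_\ell^{(n)} = p_\ell$ for low leaves and matches the escape probabilities, and one must check that the low-part entropy genuinely converges to $B_2(G)$ (which relies on the almost-sure leaf-termination of $X_\infty$ established at the start, so that $\sum_{h(\ell) \leq k_n} -p_\ell \log_2 p_\ell$ is an increasing truncation of the convergent series defining $B_2(G)$).
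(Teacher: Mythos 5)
Your proposal follows essentially the same route as the paper's proof: the lower bound comes from Proposition~\ref{propInequalityLimInfB2} after observing that the hypothesis forces $\Prob{h(X_\infty) > k_n} \to 0$ and hence almost-sure absorption in a leaf, and the upper bound comes from splitting the leaves of $G_n$ at height $k_n$, identifying the low-height contribution with a truncation of the series for $B_2(G)$ via $[G_n]_{k_n}=[G]_{k_n}$, and bounding the high-height contribution by $q_n\log_2(\Abs{G_n}/q_n)$ via Jensen's inequality. The bookkeeping point you flag (that the restriction equality really yields $p^{(n)}_\ell = p_\ell$ for low leaves and matches the escape mass $q_n$) is treated just as implicitly in the paper, so there is nothing missing relative to the published argument.
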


\begin{proof}
Let us write $q_n = \Prob{h(X_\infty) > k_n}$.
Since $\log \Abs{G_n}$ cannot tend to $0$ (except in the trivial case where $G$
is a finite DAG with a single leaf), we must have $q_n\to 0$.
This means that $\Prob{X_\infty \in \mathcal{L}} = 1$, where $\mathcal{L}$
denotes the leaf set of $G$. As a result, by
Proposition~\ref{propInequalityLimInfB2} it suffices to show
\[
  \limsup_{n\to\infty} B_2(G_n) \;\leq\; B_2(G) \;=\;
  -\sum_{\ell\in \mathcal{L}}p_{\ell} \log_2 p_\ell\,,
\]
where $p_\ell = \Prob{X_\infty = \ell}$.
Let us write $\mathcal{L}_{k_n}^{n}$ (resp.\ $\mathcal{L}_{>k_n}^n$) for
the set of leaves in $G_n$ at height at most $k_n$ (resp.~at least $k_n+1$).
For any leaf $\ell\in G_n$, write $\smash{p_\ell^{(n)}}$ for the
probability that the directed random walk on $G_n$ ends in $\ell$.  Notice
that, since $[G_n]_{k_n}=[G]_{k_n}$,
\[
  -\!\sum_{\ell \in \mathcal{L}_{k_n}^{n}}\!\! p^{(n)}_\ell \log_2 p^{(n)}_\ell \;=\;
  -\!\sum_{\ell \in \mathcal{L}_{k_n}}\!\! p_\ell \log_2 p_\ell \,,
\]
and that, since $(p^{(n)}_\ell\! / q_n : \ell \in \mathcal{L}_{>k_n}^{n})$ is
a probability distribution, Jensen's inequality gives
\[
  -\frac{1}{q_n}\;\sum_{\mathclap{\quad\ell \in \mathcal{L}_{>k_n}^{n}}}\;
      p^{(n)}_\ell \log_2p^{(n)}_\ell \;\leq\;
  \log_2\bigg(\;\sum_{\mathclap{\quad\ell \in \mathcal{L}_{>k_n}^{n}}}\;\;
      \frac{p^{(n)}_\ell}{q_n}\cdot\frac{1}{p^{(n)}_\ell} \bigg) \;\leq\;
  \log_2\Big(\frac{\Abs{G_n}}{q_n}\Big)\,.
\]
It follows that
\[
  B_2(G_n) \;\leq\; -\!\sum_{\ell \in \mathcal{L}_{k_n}}\!\! p_\ell \log_2 p_\ell
  \;+\;  q_n\log_2\Big(\frac{\Abs{G_n}}{q_n}\Big),
\]
where the first term on the right-hand side tends to $B_2(G)$, while the second
one tends to $0$ because $q_n\to 0$ and $q_n \log_2 |G_n|\to 0$ by hypothesis.
\end{proof}

Finally, we close this section by providing a useful tool to compute the $B_2$ index
of a fixed infinite network~$G$. This essentially consists in chopping the
network at its cut-vertices, and using the grafting property -- but there is a
small subtlety (namely, showing that the resulting nondecreasing sequence of
$B_2$ indices, which is trivially bounded above by the $B_2(G)$, actually
reaches it in the limit). In order to give a clean statement, let us introduce
some vocabulary.

\begin{definition} \label{defStubSeq}
A \emph{stub} of a phylogenetic network $G$ is any phylogenetic network
$S \subset G$ that can be obtained by removing all out-going edges from a set
of cut-vertices of $G$ and keeping the connected component containing the
root. We write $S \sqsubset G$ to indicate that $S$ is a stub of $G$.
A \emph{stub sequence} of $G$ is an increasing sequence of stubs
$S_n \sqsubset S_{n+1} \sqsubset G$ such that $\lim_n S_n = G$.
\end{definition}

\begin{proposition} \label{propStubSeq}
Let $(S_n)$ be a stub sequence of $G$. Then,
\[
  B_2(S_n) \;\tendsto{n\to\infty} \;B_2(G)\,.
\]
\end{proposition}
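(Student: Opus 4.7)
The plan is to apply Lemma~\ref{lemCVEntropy} to a sequence of measurable partitions of $\partial G$ whose entropies match the $B_2$ indices of the stubs. Iterating the grafting property (Proposition~\ref{propGraftingInfinite}) at the cut-vertices defining $S_n$ yields the identity
\[
  B_2(G) \;=\; B_2(S_n) \;+\; \sum_{\ell \in V_n} p_\ell^{(n)}\, B_2(G_\ell),
\]
where (WLOG) $V_n$ is an antichain of cut-vertices, $G_\ell$ is the sub-network of $G$ rooted at $\ell$, and $p_\ell^{(n)}$ is the probability that the directed random walk on $S_n$ ends at $\ell$. Non-negativity of entropy gives $B_2(S_n)\leq B_2(G)$, and $S_n \sqsubset S_{n+1}$ combined with the identity yields monotonicity, so $B_2(S_n)$ converges to some $L\leq B_2(G)$.

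To show $L = B_2(G)$, I would introduce the exit map $\phi_n\colon \partial G\to \partial S_n$ defined by $\phi_n(x)=\ell$ for the unique $\ell\in V_n$ dominating $x$ if it exists, and $\phi_n(x)=x$ otherwise. Uniqueness follows from the antichain property combined with cut-vertex domination: two cut-vertices of $V_n$ lying on every root-to-$x$ path would have to be comparable. Coupling the directed walks on $G$ and $S_n$ up to the exit time from $S_n$, the pushforward $(\phi_n)_*\mu$ equals the law $\mu_n$ of the walk's endpoint on $\partial S_n$, so the partition $\pi_n$ of $\partial G$ into fibres of $\phi_n$ satisfies $H_\mu(\pi_n) = H(\mu_n) = B_2(S_n)$, while the partition $\sigma$ into singletons satisfies $H_\mu(\sigma) = H(\mu) = B_2(G)$, and $\sigma \finer \pi_n$ is trivial. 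Lemma~\ref{lemCVEntropy} will conclude the proof once I establish $\pi_n\to\sigma$ in the sense of Definition~\ref{defCVPartitions}.

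The main obstacle is this last convergence. For distinct $x, y\in\partial G$, Proposition~\ref{propK} gives $\bar x\neq\bar y$, hence $[\bar x]_k\neq[\bar y]_k$ for some $k$. For $n$ large enough that $[S_n]_k = [G]_k$, every $\ell\in V_n$ satisfies $h(\ell)>k$, and I claim $\phi_n(x)\neq\phi_n(y)$. Indeed, if both equalled some $\ell^*\in V_n$, then $\ell^*$ would dominate both $x$ and $y$; for any $v\in\bar x$ of height at most $k<h(\ell^*)$, the cut-vertex $\ell^*$ must lie on the $v\to\cdots\to x$ portion of any root-to-$x$ path (it cannot precede $v$ since its height exceeds that of $v$), which forces $v\ancestor\ell^*$; the converse inclusion is immediate, so $[\bar x]_k = [\bar{\ell^*}]_k = [\bar y]_k$, a contradiction. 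The remaining cases (where $\phi_n(x)$ or $\phi_n(y)$ is a leaf of $G$ in $S_n$ or an end of $S_n$) are straightforward, as distinct leaves and distinct ends of $G$ remain distinguishable in $\partial S_n$. Applying Lemma~\ref{lemCVEntropy} then yields $B_2(S_n)\to B_2(G)$.
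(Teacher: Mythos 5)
Your proof follows the same skeleton as the paper's: you identify $B_2(S_n)$ with the entropy $H_\mu(\pi_n)$ of the partition of $\partial G$ into the fibres of the exit map (this is exactly the paper's partition $\pi_{S_n}$, defined by ``$\bar x\cap\bar y$ contains a leaf of $S_n$''), and you conclude via Lemma~\ref{lemCVEntropy}; the preliminary monotonicity discussion is superfluous once that lemma applies. Where you genuinely diverge is in the crux, the convergence $\pi_n\to\sigma$: the paper argues by a dichotomy on the sequence of exit vertices $(v_n)$ --- if $\Set{v_n}$ is finite it stabilizes at a leaf of $G$ and forces $x=y$, and if it is infinite the $v_n$ are cut-vertices lying on a common ray and co-directionality forces $x=y$ --- whereas you reduce to a finite-level separation statement via the injectivity of $x\mapsto\bar x$ (Proposition~\ref{propK}). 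Your route is viable, but two steps need patching. First, $[S_n]_k=[G]_k$ only forces $h(\ell)\geq k$ for internal $\ell\in V_n$ (a cut-vertex at height exactly $k$ all of whose out-neighbours have height $k+1$ is invisible in the height-$k$ restriction); working with $[S_n]_{k+1}=[G]_{k+1}$ removes the issue. Second, the parenthetical ``it cannot precede $v$ since its height exceeds that of $v$'' is not valid for an arbitrary root-to-$x$ path in a DAG: a vertex of large height can precede a vertex of small height on a non-geodesic path. The fix is to take the \emph{shortest} root-to-$v$ path (all of whose vertices have height at most $h(v)\leq k<h(\ell^*)$, so it avoids $\ell^*$) and extend it by any $v$-to-$x$ path; since every root-to-$x$ path must meet the cut-vertex $\ell^*$, it lies on the $v$-to-$x$ portion, giving $v\ancestor\ell^*$ as you want (and then acyclicity rules out $\ell^*$ preceding $v$ on any other path). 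With these repairs the argument is correct.
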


\begin{proof}
For any stub $S \stub G$, let $\pi_S$ be the partition of $\partial G$
defined by the following equivalence relation:
\[
  x \sim_{S} y \;\iff\;
  x = y \text{ or }
  \bar{x} \cap \bar{y} \text{ contains a leaf of } S.
\]
To see that this is indeed an equivalence relation, recall that the leaves
of $S$ are either leaves or cut-vertices in $G$, and note that for all
$x \in \partial G$, the set $\bar{x}$ contains at most one leaf of $S$.
The transitivity of $\sim_S$ then follows readily.

Note that the partitions $\pi_{S_n}$ are refining along the stub sequence
$(S_n)$, and that they converge simply (see Definition~\ref{defCVPartitions}) to
$\pi_G$, the partition into singletons.
Indeed, pick $x\neq y$ and assume that $x\sim_{S_n}y$ for
all~$n$. For each $n$, let then $v_n$ be the unique leaf of $S_n$ in
$\bar{x} \cap \bar{y}$. Then:
\begin{itemize}
  \item If $\Set{v_n \suchthat n \geq 1}$ is finite, then there exists $\ell$
    such that $v_n = \ell$ for all $n$ large enough. By the local
    convergence of $S_n$ to $G$, this vertex $\ell$ is a leaf of $G$, and
    therefore $x = y = \ell$, yielding a contradiction.
  \item If $\Set{v_n \suchthat n \geq 1}$ is infinite, then
    $v_1, v_2, \ldots$ are cut-vertices of $G$ and they lie on a ray
    $r \subset \bar{x} \cap \bar{y}$. This entails
    $x = y$: indeed, let $r_x \in x$ and $r_y \in y$. For all $n$, 
    since $v_n \in \bar{x}$ and $v_n$ is a cut-vertex of $G$, $v_n \in
    r_x$. As a result, $r_x$ intersects $r$ infinitely many
    times.  Similarly, $r_y$ intersects $r$ infinitely many times.
    Thus, we have $r_x \codir r_y$ and therefore $x = y$ -- again
    yielding a contradiction.
\end{itemize}

Finally, note that $B_2(S_n) = H_\mu(\pi_{S_n})$, where $\mu$ is the distribution of $X_\infty$, and that, by definition,
$B_2(G) = H_\mu(\pi_G)$. Thus, the proposition follows from
Lemma~\ref{lemCVEntropy}.
\end{proof}

\subsection[Total variation bound for the convergence of Galton–Watson trees]{%
Total variation bound for the convergence of \newline Galton–Watson trees} \label{appTotalVariationBound}

In this appendix, we provide a total variation bound to quantify the speed
of convergence of conditioned Galton--Watson trees to Kesten's tree. Let
us start by setting\,/\,recalling some notation.

In the remainder of this appendix, $\xi$ will denote a random variable taking values in
$\N = \Set{0, 1, 2, \ldots}$, with mean $1$ and such that $\Prob{\xi=0}>0$.
Likewise, $T \sim \mathrm{GW}(\xi)$ will be a Galton--Watson tree with
offspring distribution $\xi$, which we view as an ordered tree.
Recall that, for any tree $\mathbf{t}$, we denote the number of leaves of
$\mathbf{t}$ by~$\Abs{\mathbf{t}}$. As in Section~\ref{secBlowUps}, for any~$n$
such that $\Prob{\Abs{T}=n}>0$, we write $T_n$ for a random tree
distributed as $T$ conditioned to have $n$ leaves. Note that, whenever we state
something about~$T_n$, we implicitly assume that $n$ is taken so that
$\Prob{\Abs{T}=n}>0$. Finally, we denote by $T_*$ the Kesten tree
associated to $T$, i.e.\ the local limit of $T_n$ as $n\to\infty$ (see
Definition~\ref{defKesten}), and we write $\hat{\xi}$ for a random variable
such that $\Prob*{\normalsize}{\hat{\xi}=n}=n\,\Prob{\xi=n}$ for all~$n \geq 1$
(thus, $\hat{\xi}$ is distributed as the number of children of the vertices
that lie on the spine of~$T_*$).

Let us now introduce a truncation of $T_*$: for each $k\geq 0$, let $v_k$ be
the $k$-th vertex on the spine of $T_*$, with $v_0$ being the root, and let
$T_*^k$ be the leaf-pointed rooted tree obtained from $T_*$ by removing all
descendants of $v_k$ (other than $v_k$ itself) and letting $v_k$ be the
distinguished leaf.  The distribution of $T_*^k$ is easily computed:
let~$\mathbf{t}$ denote a finite ordered tree with a distinguished leaf $v_k$
at height $k$, and write $\mathbf{v}=(v_0,v_1,\dots,v_k)$ for the path from the
root of $\mathbf{t}$ to its distinguished leaf~$v_k$.  Then, by construction
of~$T_*$,
\begin{align*}
  \Prob{T_*^k=\mathbf{t}}
  \;&= \Bigg(\prod_{i=0}^{k-1}\Prob{\hat{\xi}=d_{v_i}}\cdot \frac{1}{d_{v_i}}\Bigg)
       \Bigg(\!\!\prod_{\;v\in \mathbf{t}\setminus \mathbf{v}}\!\! \Prob{\xi=d_{v}} \Bigg) \\[0.5ex]
  &=\;\; \prod_{\mathrlap{\hspace{-1em} v\in \mathbf{t}\setminus \{v_k\}}}\, \Prob{\xi=d_{v}} \\[1ex]
  &=\; \frac{\Prob{T=\mathbf{t}}}{\Prob{\xi=0}},
\end{align*}
where $d_v$ denotes the number of children of vertex $v$ and, by a
slight abuse of notation, we write $\{T=\mathbf{t}\}$ for the event that the
Galton--Watson tree $T$ is equal to the non-pointed version of the tree
$\mathbf{t}$. Moreover, note that ${\Prob{T=\mathbf{t}}}/ {\Prob{\xi=0}}$
is also the probability that $T$ ``starts'' with the leaf-pointed tree
$\mathbf{t}$, i.e.\ that $T$ can be obtained by grafting a tree $T'$ on the
distinguished leaf of $\mathbf{t}$.  Writing $\{\mathbf{t}\subset T\}$ for this
event, we therefore have
\[
  \Prob{T_*^k=\mathbf{t}} \;=\; \Prob{\mathbf{t}\subset T}\,.
\]
Also note that, conditional on $\{\mathbf{t}\subset T\}$, the subtree of $T$
descending from the distinguished leaf of $\mathbf{t}$ is a
$\mathrm{GW}(\xi)$ tree.

Next, let us build a pointed tree $T_n^k$ from $T_n$, as we did for
$T_*^k$. Conditional on~$T_n$, let $u_n$ be chosen uniformly at random among
the $n$ leaves of $T_n$.  If $u_n$ is at height at least $k$, then let
$v_{n,k}$ be the vertex at distance $k$ from the root on the path to~$u_n$, and
define $T_n^k$ to be the tree obtained from $T_n$ by removing all vertices
descending from $v_{n,k}$ (other than $v_{n, k}$ itself) and letting
$v_{n,k}$ be the distinguished vertex. On the event that $u_n$ is at height
less than $k$, the definition of $T_n^k$ is irrelevant; we define it to be the
tree reduced to a single node.  From this construction, note that for any
$\mathbf{t}$ finite ordered tree with a distinguished leaf at height $k$,
\begin{align}
  \Prob{T_n^k=\mathbf{t}} \;&=\; \frac{n-\Abs{\mathbf{t}}+1}{n}\cdot\frac{\Prob{\mathbf{t}\subset T, \, \Abs{T}=n}}{\Prob{\Abs{T}=n}} \nonumber\\
  &=\;\frac{n-\Abs{\mathbf{t}}+1}{n}\cdot \Prob{\mathbf{t}\subset T} \cdot \frac{\Prob{\Abs{T}=n-\Abs{\mathbf{t}}+1}}{\Prob{\Abs{T}=n}} \nonumber\\
  &=\; \Prob{T_*^k=\mathbf{t}} \cdot \frac{f(n-\Abs{\mathbf{t}}+1)}{f(n)}, \label{eqCalculTnk}
\end{align}
where $f(n)\defas n\,\Prob{\Abs{T}=n}$.

With these definitions, we are ready to state the main result of this appendix.

\begin{proposition}\label{propDTV}
If $\,\ExpecBrackets{\xi^3}<\infty$, then for any sequence of integers
$(k_n)_{n\geq 0}$ such that $k_n=o(\sqrt{n})$, we have
\[
  d_{\mathrm{TV}}\big(T_n^{k_n},\, T_*^{k_n}\big) \;=\; \Theta\Big(\frac{k_n}{\sqrt{n}}\Big)\,.
\]
\end{proposition}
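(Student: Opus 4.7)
The plan is to start from equation~\eqref{eqCalculTnk}, which realises the law of $T_n^k$ as an $f$-tilt of the law of $T_*^k$ with Radon--Nikodym density \(R(m) \defas f(n-m+1)/f(n)\) at a pointed tree $\mathbf{t}$ of size $m = |\mathbf{t}|$. Setting $N_k = |T_*^k|$ and separating the valid pointed trees with $|\mathbf{t}|\leq n$, those with $|\mathbf{t}|>n$ (which $T_n^k$ cannot take) and the degenerate outcome $\bullet$ (which $T_*^k$ cannot take), I would first write
\[
  d_{\mathrm{TV}}(T_n^k, T_*^k) \;=\; \tfrac{1}{2}\Expec{|R(N_k) - 1|\,\Indic{N_k \leq n}} \;+\; \tfrac{1}{2}\Prob{N_k > n} \;+\; \tfrac{1}{2}\Prob{T_n^k = \bullet}\,.
\]
The two analytical inputs that I would need are: (a) a Berry--Esseen-type local CLT for the number of leaves of a critical GW tree, yielding $f(n) = c\, n^{-1/2}\bigl(1 + O(n^{-1/2})\bigr)$ under $\Expec{\xi^3}<\infty$, which follows through the Dwass/cyclic-lemma reduction to i.i.d.\ integer-valued sums with finite third moment; and (b) the spinal decomposition $N_k = 1 + \sum_{i=1}^{J} L_i$, where the $L_i$ are i.i.d.\ copies of the number of leaves of $T$ (with heavy tail $\Prob{L > x}\sim c'/\sqrt x$) and $J = \sum_{h=0}^{k-1}(\hat\xi_h - 1)$ has $\Expec{J} = O(k)$ and finite variance thanks to $\Expec{\xi^3}<\infty$.

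For the upper bound, I would split the first term according to the size of $N_k$. On $\{N_k \leq n/2\}$ a first-order Taylor expansion combined with (a) gives $|R(N_k) - 1| \leq C\,N_k/n + O(n^{-1/2})$, and integrating the single-big-jump estimate $\Prob{N_k > x} = O(k/\sqrt x)$ (valid for $x \gg k^2$) yields $\Expec{\min(N_k, n)} = O(k\sqrt n)$, so the contribution is $O(k/\sqrt n)$. On $\{n/2 < N_k \leq n\}$ the ratio $R$ can be as large as $\Theta(\sqrt n)$, but the pointwise estimate $\Prob{N_k = m} = O(k/m^{3/2})$ for $m \gg k^2$ (again a single-big-jump bound, using $\Prob{L = m}\sim c''/m^{3/2}$) together with direct summation using $f(n-m+1)\sim c/\sqrt{n-m+1}$ also yields $O(k/\sqrt n)$. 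The tail term satisfies $\Prob{N_k > n} = O(k/\sqrt n)$ by the same bound, and the degenerate term is $O(k^2/n) = o(k/\sqrt n)$, obtainable directly from the identity $\Prob{T_n^k = \bullet} = 1 - \Expec{R(N_k)\Indic{N_k \leq n}}$.

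For the matching lower bound, I would use only
\(
  d_{\mathrm{TV}}(T_n^k, T_*^k)\;\geq\;\tfrac{1}{2}\Prob{N_k > n}
\)
and show $\Prob{N_k > n} \geq c_2\, k/\sqrt n$ by a Bonferroni estimate: conditional on $J$, the probability that some $L_i$ exceeds $n$ is at least $J\,\Prob{L>n}/2$ whenever $J \leq \sqrt n/(2c')$, and since $J$ concentrates around its mean $\Theta(k) = o(\sqrt n)$ the unconditional probability is $\gtrsim k/\sqrt n$. The main hurdle will be establishing the two one-sided single-big-jump estimates $\Prob{N_k > x} = O(k/\sqrt x)$ and $\Prob{N_k = m} = O(k/m^{3/2})$ for $x, m \gg k^2$ with constants uniform in $k$; together with the Berry--Esseen local CLT for the leaf count, these are precisely what force the third-moment assumption, and the rest of the proof is essentially bookkeeping of these estimates.
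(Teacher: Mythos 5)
Your proposal is correct in outline and rests on the same three ingredients as the paper's proof: the tilting identity \eqref{eqCalculTnk}, the Berry--Esseen local limit theorem for the leaf count (via the Minami/Dwass reduction, giving $f(n)=c_1n^{-1/2}+O(n^{-1})$; this is Lemma~\ref{lemEstimatesForLeaves}(i)), and a heavy-tail estimate for $\Abs{T_*^{k}}$ coming from its spinal decomposition into i.i.d.\ subtree sizes with tail $\Theta(x^{-1/2})$; the lower bound via the mass that $T_*^{k_n}$ puts on trees with more than $n$ leaves is identical. Where you diverge is the upper bound: by using the symmetric formula $\frac{1}{2}\sum\Abs{\mu-\nu}$ you must control $\Abs{R-1}$ from both sides, which costs you the integrated tail $\Expec{N_k\wedge n}=O(k\sqrt{n})$ (routine) and, on $\{n/2<N_k\leq n\}$ where $R$ blows up, the local bound $\Prob{N_k=m}=O(k/m^{3/2})$ uniformly in $k$ -- the one genuinely delicate point you leave as a ``hurdle'', and a local large-deviation statement harder than anything else in the argument. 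The paper sidesteps all of this by writing $d_{\mathrm{TV}}=\sum_{\mathbf{t}}(\Prob{T_*^k=\mathbf{t}}-\Prob{T_n^k=\mathbf{t}})_+$: since $R\geq\sqrt{n/(n-\ell+1)}+O(n^{-1/2})\geq 1-O(n^{-1/2})$ for $\ell\leq n/2$, the positive part is $O(n^{-1/2})$ uniformly and everything with $\Abs{\mathbf{t}}>n/2$ is absorbed into $\Prob{\Abs{T_*^k}>n/2}$; I recommend adopting this one-sided formula, which collapses your two middle cases. For the tail of $N_k$, the paper invokes Heyde's large-deviation theorem for the $(1/2)$-stable domain of attraction, giving the two-sided $\Theta(k/\sqrt{x})$ in one stroke; your Bonferroni argument is a fine elementary substitute for the lower half, but you still need a Nagaev/Heyde-type uniform bound for the upper half, so citing that result is the path of least resistance. (Minor point: your identity yields only $\Prob{T_n^{k}=\bullet}=O(k/\sqrt{n})$, not $O(k^2/n)$, but that is all you need.)
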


\begin{remark}
The fact that the total variation distance between $T_n^{k_n}$ and $T_*^{k_n}$
tends to~$0$ whenever $k_n = o(\sqrt{n})$ already appears, in slightly
different forms, in the literature: a similar result, but where $T_n$ is
conditioned on its number of vertices, can be found in
Kersting~\cite[Theorem~5]{Ker11} -- without any moment assumption besides
$\ExpecBrackets{\xi} = 1$, and for any sequence $k_n=o(n/a_n)$ with $a_n$ being any
sequence such that $\sum_{i=1}^n\xi_i/a_n$ converges in distribution (where
$\xi_i$ are i.i.d.\ replicates of~$\xi$).
Stufler~\cite[Theorem~5.2]{stufler2019local} also states a closely related
result for fringe subtrees, under a finite variance assumption.
What is new here is the explicit speed of convergence $k_n/\sqrt{n}$, which we
need to show the convergence of all moments of $B_2(G_n)$ to those of
$B_2(G_*)$ in the proof of Theorem~\ref{thmContinuityB2BlowUps}.

The third moment assumption is needed in our proof to get a Berry--Esseen
type estimate, but this may be a superfluous assumption.
It would be interesting to see whether similar bounds can be obtained in the
critical case without any moment assumption (similar to Kersting), but this 
goes beyond the scope of this paper.

Finally, note that the proof that we develop would be essentially the same if
$T_n$ were conditioned on the number of vertices, or if we looked at fringe
subtrees.
\end{remark}

The proof of Proposition~\ref{propDTV} relies on the following lemma.

\begin{lemma} \label{lemEstimatesForLeaves} 
With the same notation as above,
\begin{mathlist}
  \item If $\,\ExpecBrackets{\xi^3}<\infty$, then there exists a constant $c_1 > 0$
    such that, along any sequence of $n$ such that $\Prob{\Abs{T}=n}>0$,
    \[
      f(n) \;=\; n\,\Prob{\Abs{T}=n} \;=\; \frac{c_1}{\sqrt{n}} + O(n^{-1}).
    \]
  \item If $\ExpecBrackets{\xi^2}<\infty$, then for any sequence of integers
    $(k_n)_{n\in \mathbb{N}}$ such that $k_n=o(\sqrt{n})$, we have
    \[
      \Prob{\Abs[\normalsize]{T_*^{k_n}} \geq n} \;=\;
      \Theta\Big(\frac{k_n}{\sqrt{n}}\Big).
    \]
\end{mathlist}
\end{lemma}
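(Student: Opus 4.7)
The plan is to prove (i) by singularity analysis of the generating function of $\Abs{T}$, and then to derive (ii) from (i) via the spine decomposition of the Kesten tree together with a ``one-big-jump'' argument for heavy-tailed sums.

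For (i), I set $\psi(z) = \Expec{z^{\Abs{T}}}$. Conditioning on the number of children of the root yields the functional equation
$$
  \psi(z) \;=\; p_0\, z \;+\; \phi(\psi(z)) \;-\; p_0, \qquad
  \text{where } \phi(u) = \Expec{u^\xi},\; p_0 = \Prob{\xi = 0},
$$
so that $\psi$ is the compositional inverse of $u\mapsto 1 + (u-\phi(u))/p_0$ around $\psi(1)=1$. Criticality ($\phi'(1)=1$) combined with $\phi''(1)=\sigma^2>0$ makes $z=1$ a branch point for $\psi$, and the hypothesis $\Expec{\xi^3}<\infty$ provides enough regularity of $\phi$ to yield an expansion
$$
  \psi(z) \;=\; 1 - c\sqrt{1-z} + O(1-z) \qquad (z \to 1)
$$
in a suitable $\Delta$-domain at $z=1$. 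Applying the transfer theorems of Flajolet-Odlyzko (the same tool used throughout Section~\ref{secAnalyticCombinatorics}) then gives $\Prob{\Abs{T} = n} = \frac{c}{2\sqrt{\pi}}\, n^{-3/2} + O(n^{-2})$ along the support of $\Abs{T}$, and multiplying by $n$ yields the claim.

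For (ii), I decompose $T_*^k$ along its spine $v_0, \dots, v_k$: each $v_i$ with $i<k$ has $\hat{\xi}_i$ children (iid copies of $\hat{\xi}$), one of which continues the spine and the others root independent $\mathrm{GW}(\xi)$ subtrees. Writing $L_i$ for the total number of leaves contained in the non-spine subtrees attached to $v_i$, the $L_i$ are iid and $\Abs{T_*^k} = 1 + \sum_{i=0}^{k-1} L_i$. From (i), $\Prob{\Abs{T} \geq n} \sim C/\sqrt{n}$; since $\Expec{\hat{\xi}}<\infty$ (by $\Expec{\xi^2}<\infty$) and regularly-varying tails of index $1/2$ are subexponential, a standard single-big-jump computation gives $\Prob{L_0 \geq n} \sim C'/\sqrt{n}$. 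The lower bound then follows from $\Prob{\sum_{i<k_n} L_i \geq n} \geq \Prob{\max_i L_i \geq n} = 1 - (1 - \Prob{L_0 \geq n})^{k_n} \sim k_n\, C'/\sqrt{n}$, using $k_n = o(\sqrt{n})$. For the upper bound, the truncation $M_i = L_i \wedge n$ satisfies $\Expec{M_0} = \int_0^n \Prob{L_0 > t}\, dt = O(\sqrt{n})$, so by a union bound and Markov's inequality,
$$
  \Prob{\textstyle\sum_{i < k_n} L_i \geq n}
  \;\leq\; k_n\, \Prob{L_0 > n} \;+\; \frac{k_n\, \Expec{M_0}}{n}
  \;=\; O(k_n/\sqrt{n}).
$$

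The main obstacle I anticipate is in (i), where the transfer theorems require $\psi$ to be $\Delta$-analytic at $z=1$: this is immediate when $\Abs{T}$ is aperiodic but more delicate when it is supported on a proper sublattice (so $\psi$ has additional singularities on the unit circle). Since the statement only concerns $n$ with $\Prob{\Abs{T} = n} > 0$, this can be handled by restricting to the appropriate arithmetic progression and applying singularity analysis to the corresponding sub-series. A secondary, more benign subtlety in (ii) is that $L_0$ has infinite mean, which is precisely why the truncation $M_i = L_i \wedge n$ is required before applying Markov's inequality.
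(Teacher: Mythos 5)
Your part (ii) is correct and takes a genuinely more elementary route than the paper. The paper uses the same spine decomposition --- $\Abs{T_*^{k}}-1$ is a sum of $k$ i.i.d.\ variables with tail $\Theta(n^{-1/2})$ --- but then concludes via Heyde's large-deviation theorem for sums in the domain of attraction of a $1/2$-stable law; your truncation argument ($M_i=L_i\wedge n$, $\Expec{M_0}=O(\sqrt n)$, union bound plus Markov) reaches the same two-sided bound $\Theta(k_n/\sqrt n)$ by hand, and since you only ever need the \emph{order} $\Theta(n^{-1/2})$ for the single-summand tail, the appeal to subexponential single-big-jump asymptotics could be replaced by the same truncation trick. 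One mismatch worth noting: the lemma asserts (ii) under $\Expec{\xi^2}<\infty$ alone, whereas you feed it the input $\Prob{\Abs{T}\geq n}\sim C n^{-1/2}$ ``from (i)'', which uses the third moment; the paper instead invokes a second-moment tail estimate (Kortchemski) at this step.

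Part (i) is where there is a genuine gap. Your functional equation for $\psi(z)=\Expec{z^{\Abs{T}}}$ is right, but the sentence ``the hypothesis $\Expec{\xi^3}<\infty$ provides enough regularity of $\phi$ to yield an expansion \dots in a suitable $\Delta$-domain'' is exactly the step that the standard toolkit does not deliver. Transfer theorems require $\psi$ to be $\Delta$-analytic at $z=1$, i.e.\ to continue analytically \emph{outside} the closed unit disk near $1$; the off-the-shelf singular-inversion theorems produce this only when $\phi$ is analytic in a disk of radius strictly greater than $1$, and a finite third moment does not guarantee that (a lacunary offspring law with $\Prob{\xi=2^k}\propto 2^{-5k}$ has $\Expec{\xi^3}<\infty$ while its generating function has the unit circle as a natural boundary). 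A bespoke continuation can in fact be pushed through --- $\psi$ continues past every boundary point other than $z=1$ because $\vert\psi(z_0)\vert<1$ there, and near $z=1$ the map $u\mapsto 1+(u-\phi(u))/p_0$ can be inverted on the disk using the uniform bound $\phi(u)=1-(1-u)+\tfrac{\sigma^2}{2}(1-u)^2+O(\Expec{\xi^3}\,\vert 1-u\vert^3)$ --- but this is a nontrivial piece of complex analysis that your proposal asserts rather than proves, and it essentially re-derives the local limit theorem. The paper sidesteps all of this: Minami's bijection converts leaf-counting for $T$ into vertex-counting for a modified critical Galton--Watson tree $T'$ with finite third moment, the Otter--Dwass formula gives $n\,\Prob{v(T')=n}=\Prob{S_n=n-1}$ for a centered random walk $S_n$, and a Berry--Esseen-type local central limit theorem (for which the third moment is exactly the right hypothesis) yields $c_1 n^{-1/2}+O(n^{-1})$. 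The periodicity issue you flagged is real but comes for free on that route; the analyticity issue is the one that actually blocks yours.
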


\begin{proof}
(i) By Minami's correspondence \cite[Theorem~2]{Min05}, we have
\[
  \Prob{\Abs{T}=n} \;=\; \Prob{v(T')=n},
\]
where $v(T')$ denotes the number of vertices of $T'$, and $T'$ is
a $\mathrm{GW}(\xi')$ tree whose offspring distribution is
\[
  \xi' \;=\; \sum_{i=1}^{Z}Y_i \,,
\]
with $\Prob{Z=n} = \Prob{\xi>0}^n\, \Prob{\xi=0}$ for all $n\geq 0$, and where
the variables $Y_i$ are i.i.d., independent of $Z$, and distributed as
$(\xi-1 \;|\; \xi>0)$.  It is easy to see that $T'$ is critical and that
$\xi'$ has a finite third moment.

Next, we use the well-known fact (see e.g.~\cite{Dwa69}) that
\[
  n\, \Prob{v(T')=n} \;=\;\Prob{S_n=n-1}, \quad \text{with } S_n = \sum_{i=1}^{n} \xi'_i,
\]
where the $\xi'_i$ are i.i.d.\ copies of $\xi'$. Since $\xi'$ has a finite
third moment, the asymptotic expansion given in~(i) follows readily
from a Berry--Esseen type estimate for the local central limit
theorem~\cite[Chapter~VII, Theorem~6]{Pet75}.

(ii) Let us fix a sequence $k_n = o(\sqrt{n})$.
By construction of Kesten's tree, we have
\begin{equation} \label{eqLTKasSum}
  \Abs{T_*^{k_n}}-1 \;\,\overset{d}{=}\,\; \sum_{i=1}^{k_n} X_i,
\end{equation}
where the $X_i$ are independent and distributed as $\sum_{i=1}^{Y}\Abs{T_i}$,
where $Y$ is distributed as $\hat{\xi}-1$ and the $T_i$ are i.i.d.\
$\mathrm{GW}(\xi)$ trees.

Similarly to (i), using a standard local central limit theorem -- this time,
requiring only a finite second moment for $\xi$ -- we get
$\Prob{\Abs{T}\geq n} \sim C_1n^{-1/2}$ for some constant $C_1>0$ (for a
complete proof in a more general case, see e.g.\ \cite[Theorem~3.1]{Kor12}).
Because $\hat{\xi}-1$ has a finite expectation, we also have
$\Prob{X_1\geq n} \sim C_2n^{-1/2}$ for the variables $X_i$
in~\eqref{eqLTKasSum}, for some constant $C_2>0$.  Therefore, $X_1$ is in the
domain of attraction of a $(1/2)$-stable distribution and we can apply Heyde's
large deviation theorem~\cite{Hey67}: for any nondecreasing sequence $(x_m)$
such that $\sum_{i=1}^m X_i/x_m \to 0$ in probability, we have
\[
  \Prob{\sum_{i=1}^m X_i \geq x_n} \;=\;
  \Theta\big(m\,\Prob{X_1 \geq x_n}\big) \;=\;
  \Theta\Big(\frac{m }{\sqrt{x_n}}\Big).
\]
Since $k_n=o(\sqrt{n})$ and since $X_1$ is in the domain of a
$(1/2)$-stable distribution, we have $\sum_{i=1}^{k_n} X_i/n \to 0$ in
probability; and so the equation above translates to
\[
  \Prob{|T_*^{k_n}| \geq n} \;=\;
  \Prob{\sum_{i=1}^{k_n} X_i \geq n} \;=\;
  \Theta\Big(\frac{k_n}{\sqrt{n}}\Big),
\]
concluding the proof.
\end{proof}

\begin{proof}[Proof of Proposition~\ref{propDTV}]
Recall from~\eqref{eqCalculTnk} that for any fixed tree $\mathbf{t}$ with a
distinguished leaf at height $k$, we have
\[
  \Prob{T_n^k=\mathbf{t}} \;=\;
  \Prob{T_*^k=\mathbf{t}}\,\frac{f(n-\Abs{\mathbf{t}}+1)}{f(n)},
\]
with $f(n)=n\Prob{\Abs{T}=n}$. This implies that, letting $(\cdot)_+$
denote the positive part,
\begin{align}
  d_{\mathrm{TV}}\big(T_n^k,\, T_*^k\big) 
  &\;=\; \sum_{\mathbf{t}}\big(\Prob*{\normalsize}{T_*^k=\mathbf{t}} -
                               \Prob*{\normalsize}{T_n^k=\mathbf{t}}\big)_+ \nonumber \\[1ex]
  &\leq\; \Prob{|T_*^k| > \tfrac{n}{2}} \;+ \hspace{-0.75em}
          \sum_{\quad\mathbf{t}:\Abs{\mathbf{t}}\leq n/2} \hspace{-1.4em}
             \big(\Prob*{\normalsize}{T_*^k=\mathbf{t}} - \Prob*{\normalsize}{T_n^k=\mathbf{t}}\big)_+ \nonumber \\
  &=\; \Prob{|T_*^k| > \tfrac{n}{2}} \;+\;
       \ExpecBrackets{\Big(1-\frac{f(n-|T_*^k|+1)}{f(n)}\Big)_{\!+}\!\! \cdot\, \Indic{|T_*^k|\leq \tfrac{n}{2}}}. \label{eqDTV1}
\end{align}
Using Lemma~\ref{lemEstimatesForLeaves}~(i), we see that as $n\to\infty$ and
uniformly for all $\ell \leq n/2$ such that $f(n-\ell+1)>0$,
\[
  \frac{f(n-\ell+1)}{f(n)} \;=\; 
  \frac{\frac{c_1}{\sqrt{n - \ell + 1}} + O(n^{-1})}{\frac{c_1}{\sqrt{n}} + O(n^{-1})} \;=\;
  \sqrt{\frac{n}{n - \ell + 1}} \;+\; O(n^{-1/2}) \,.
\]
Because $(\frac{n}{n-\ell+1})^{1/2}\geq 1$, this gives
\[
  \Big(1-\frac{f(n-\ell+1)}{f(n)}\Big)_{\!+} \;=\,\; O(n^{-1/2}) \,, 
\]
and by plugging this into~\eqref{eqDTV1} we obtain
\[
  d_{\mathrm{TV}}\big(T_n^k,\, T_*^k\big) \;\leq\;
  \Prob{|T_*^k| > \tfrac{n}{2}} \;+\; O(n^{-1/2}).
\]
Finally, since $T_n$ has exactly $n$ leaves, the following lower bound is immediate:
\[
  d_{\mathrm{TV}}\big(T_n^k,\, T_*^k\big) \;\geq\; \Prob{|T_*^k| > n}.
\]
Considering a sequence $k_n=o(\sqrt{n})$ and using Lemma~\ref{lemEstimatesForLeaves}~(ii) readily concludes the proof of the proposition.
\end{proof}

\subsection{Blowups of Galton--Watson trees: proofs} \label{appProofsBlowUps}

In this appendix, we prove the results stated in Section~\ref{secBlowUps} of
the main text. Most of the appendix is devoted to the proof of
Theorem~\ref{thmContinuityB2BlowUps}, but more elementary and general results
about the $B_2$ index of blowups of (not necessarily random) trees are also
given.

Throughout this appendix, we use the notation and assumptions of
Appendix~\ref{appTotalVariationBound}: $T\sim \mathrm{GW}(\xi)$ is
a Galton--Watson tree with offspring distribution $\xi$, and we assume
that $\xi$ is critical, with $\Prob{\xi=0}>0$ and $\ExpecBrackets{\xi^3}<\infty$.
We denote by $T_n$ (resp.\ $T_*$) the corresponding tree conditioned on
having $n$ leaves (resp.\ Kesten tree).

We also use the notation of Section~\ref{secBlowUps}: $G_n$ and $G_*$ will
denote blowups of $T_n$ and $T_*$ with respect to a fixed sequence of
distributions $\nu=(\nu_k)_{k\geq 1}$. Finally, for any $G$ blowup of $T$ and for
each vertex $u\in T$, we let $\Head{u} \sim \nu_{d^+(u)}$ denote the random
network associated to $u$ in the blowup construction of $G$.

In the statement of the next lemma, let us write $(u_i)_{i\geq 0}$ for the
vertices along the spine of~$T_*$, and for each $i\geq 0$ let
$v_i$ be the root of $\Gamma_{u_i}$.

\pagebreak

\begin{lemma} \label{lemPkSmall}
Let $p_k=p_{v_k}$ denote the probability (conditional on $G_*$) for the directed
random walk on $G_*$, started from the root, to reach $v_k$. Then,
\begin{mathlist}
\item $\ExpecBrackets{p_k} = (1-\eta_0)^k$.
\item $p_k = O((1-\eta_0)^k)$ almost surely.
\end{mathlist}
where $\eta_0 = \Prob{\xi =0} >0$.
\end{lemma}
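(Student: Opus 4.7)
The plan is to write $p_k$ as a product of independent factors along the spine of $T_*$, then handle (i) by a direct expectation computation and (ii) by a nonnegative martingale argument. For each $i \geq 0$, I would let $S_i$ denote the (random) leaf of $\Head{u_i}$ that is identified with the root $v_{i+1}$ of $\Head{u_{i+1}}$ in the blowup, and let $q_{S_i}$ be the probability, conditional on $\Head{u_i}$ and on its leaf-to-children matching, that the directed random walk on $\Head{u_i}$ started at its root exits through $S_i$. Since $v_i$ is a cut-vertex of $G_*$ (in a blowup, heads are glued together only at root/leaf identifications), any directed path from $v_0$ to $v_{i+1}$ must pass through $v_i$ and then cross $\Head{u_i}$ from its root to $S_i$. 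This yields the factorization
\[
  p_k \;=\; \prod_{i=0}^{k-1} q_{S_i}\,.
\]
Moreover, the factors $q_{S_i}$ are mutually independent: by construction of Kesten's tree, the spine children are chosen uniformly and independently across levels; and by construction of the blowup, the heads $(\Head{u_i})$ and the uniform leaf-matchings are also independent across levels (and independent of the spine-child choices).

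For (i), I would reuse the calculation made in the proof of Theorem~\ref{thmExpecB2BlowUpGW} to obtain $\Expec{q_{S_i} \given \Head{u_i}, d^+(u_i)} = 1/d^+(u_i)$. Since $d^+(u_i) \sim \hat{\eta}$, this gives
\[
  \Expec{q_{S_i}} \;=\; \sum_{k \geq 1}\frac{\hat{\eta}_k}{k} \;=\; \sum_{k\geq 1} \eta_k \;=\; 1 - \eta_0,
\]
and combined with the independence of the factors this yields $\Expec{p_k} = (1-\eta_0)^k$.

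For (ii), I would set $M_k \defas p_k / (1 - \eta_0)^k$ and consider the filtration $(\mathcal{F}_k)$ generated by $(\Head{u_i}, S_i)_{i < k}$. From $p_k = p_{k-1}\, q_{S_{k-1}}$, the $\mathcal{F}_{k-1}$-measurability of $p_{k-1}$, and the independence of $q_{S_{k-1}}$ from $\mathcal{F}_{k-1}$, one checks that $\Expec{M_k \given \mathcal{F}_{k-1}} = M_{k-1}$. Hence $(M_k)$ is a nonnegative martingale, so by Doob's martingale convergence theorem it converges almost surely to a finite limit and is in particular almost surely bounded, which is exactly the statement $p_k = O((1-\eta_0)^k)$ a.s.

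The only point requiring some care is the joint independence of the factorization, which however follows directly from the definitions of Kesten's tree and of the blowup procedure; I do not anticipate any real obstacle.
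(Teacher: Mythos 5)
Your proposal is correct and follows essentially the same route as the paper: the paper likewise computes $\Expec{q_{S_i}\given \Head{u_i}, D}=1/D$ using leaf-exchangeability and the uniform choice of the spine child, integrates over $D\sim\hat\eta$ to get the factor $1-\eta_0$, and deduces (ii) from the fact that $(1-\eta_0)^{-k}p_k$ is a nonnegative martingale. The only cosmetic difference is that you make the product factorization and the independence of the factors explicit, whereas the paper phrases the same content as a one-step conditional expectation $\Expec{p_{k+1}\given G_*^k}=(1-\eta_0)\,p_k$.
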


\begin{proof}
For $k\geq 0$, let $G_*^k$ denote the almost surely finite network
obtained from $G^*$ by removing all (strict) descendants of $v_k$ -- with the
notation of Appendix~\ref{appTotalVariationBound}, this is a blowup
of~$T_*^k$ with respect to~$\nu$.  Let us now fix $k\geq 0$ and define $D$ to
be the out-degree of~$u_k$, i.e.\ the number of leaves of $\Gamma_{u_k}$. For
each $i = 1, \ldots, D$, let then $q_i$ denote the probability, conditional on
$\Gamma_{u_k}$, for a random walk started from $v_k$ to pass through the
$i$-th leaf of $\Gamma_{u_k}$. Because $\Gamma_{u_k}$ is leaf-exchangeable,
we have
\[
  \ExpecBrackets{q_1 \given G_*^k, D} \;=\; \frac{1}{D}\,, 
\]
and, because $u_{k+1}$ is chosen uniformly among the children of $u_k$,
\[
  \ExpecBrackets{\frac{p_{k+1}}{p_k}\given G_*^k, D} \;=\;
  \ExpecBrackets{q_1 \given G_*^k, D} \;=\; \frac{1}{D}.
\]
Integrating with respect to $D$, we get
\[
  \frac{1}{p_k}\,\ExpecBrackets{p_{k+1}\given G_*^k} \;=\;
  \sum_{n\geq 1} n\,\Prob{\xi=n}\cdot \frac{1}{n} \;=\;
  \Prob{\xi \geq 1} \;=\; 1-\eta_0.
\]
Because the out-degrees of the vertices are i.i.d.\ on the spine of $T_*$,
it follows that $(1-\eta_0)^{-k}p_k$ is a positive martingale. As a result,
it converges almost surely to a nonnegative random variable. This concludes
the proof.
\end{proof}

We are now ready to prove Theorem~\ref{thmContinuityB2BlowUps} from the main
text, whose statement we recall here for convenience.

\begin{reptheorem}{thmContinuityB2BlowUps}[repeated from Section~\ref{secBlowUps}]
With the notation above, assuming that $\xi$ satisfies
$\ExpecBrackets{\xi} = 1$, $\Prob{\xi = 0} > 0$ and $\ExpecBrackets{|\xi|^3} < +\infty$,
we have:
\begin{mathlist}
  \item $B_2(G_n) \to B_2(G_*)$ in distribution.
  \item For all $p\geq 1$,
  $\ExpecBrackets{B_2(G_n)^p} \to \ExpecBrackets{B_2(G_*)^p}$,
  and all these moments are finite.
\end{mathlist}
\end{reptheorem}

\begin{proof}
The proof of the convergence in distribution consists in building a coupling of
$(G_n)_{n\geq 1}$ and $G_*$ such that any subsequence of $(B_2(G_n))_{n\geq 0}$
has a subsequence that converges to $B_2(G_*)$ almost surely.
To prove the convergence of moments, we will then
show that, for all $m \geq 1$, the sequence
$(\ExpecBrackets{B_2(G_n)^m})_{n \geq 1}$ is bounded.

Let us describe our coupling: first, pick any sequence of integers
$(k_n)_{n\geq 1}$ such that
\[
  \log\log n \;\ll\; k_n \;\ll\; n^{1/2 - \epsilon}, 
\]
for some $\epsilon > 0$.  Then, build $(G_*, T_*)$ as in
Section~\ref{secBlowUps}: let $\mathcal{U}$ denote the Ulam--Harris tree, and
let $\Gamma = (\Head{v}^{k} : v \in \mathcal{U}, k \geq 1)$ be a family of
independent phylogenetic networks such that $\Head{v}^{k} \sim \nu_{k}$.
Finally, sample $T_*$ independently of $\Gamma$, and let $G_*$ be the blowup
obtained from $(T_*, \Gamma)$.

Next, conditional on $T_*$, build a sequence $(G_n, T_n)_{n\geq 1}$ as follows:
for each $n \geq 1$, sample a conditioned $\mathrm{GW}(\xi)$ tree $T_n$ with
$n$ leaves, independently of $\Gamma$ and in such a way that
\[
  \Prob{T_n^{k_n} \neq T_*^{k_n}} \;=\;
  d_{\mathrm{TV}}\big(T_n^{k_n}, T_*^{k_n}\big)\,, 
\]
where the truncated trees $T_n^{k}$ and $T_*^k$ are defined as in
Appendix~\ref{appTotalVariationBound}.
Finally, let $G_n$ be the blowup obtained from $T_n$ and $\Gamma$.

Let us now show that any subsequence of $(B_2(G_n))_{n \geq 1}$
has a subsequence that converges to $B_2(G_*)$ almost surely -- and, therefore,
in distribution. By a standard result (see e.g.~\cite[Theorem~2.6]{billingsley1999convergence}), this will prove that
  $B_2(G_n) \to B_2(G_*)$ in distribution.
First, note that since
$k_n = o(\sqrt{n})$, by Proposition~\ref{propDTV} we have
\[
  d_{\mathrm{TV}}\big(T_n^{k_n}, T_*^{k_n}\big) \;=\;
  O\Big(\frac{k_n}{\sqrt{n}}\Big) \;\tendsto{n\to\infty}\; 0\,.
\]
As a result, along any increasing sequence of integers there exists
a subsequence, whose range we denote by $A$, such that
\[
  \sum_{n\in A} \Prob{T_n^{k_n} \neq T_*^{k_n}} \;<\; \infty\,.
\]
Thus, by the Borel--Cantelli lemma, there exists a random $n_0$ such that,
for all $n\in A\cap \COInterval{n_0,\infty}$, we have $T_n^{k_n}=T_*^{k_n}$
-- and, therefore, $G_n^{k_n}=G_*^{k_n}$.

As previously, for each $k$ let $v_k$ denote the distinguished leaf of
$G_n^{k}$ and let $p_k = p_{v_k}$ denote the probability that the directed
random walk reaches it. Since on the event $\Set*{G_n^{k_n}=G_*^{k_n}}$, the
network $G_n$ can be obtained from $G_*^k$ by grafting a network with at most
$n$ leaves on~$v_k$, by Proposition~\ref{propGraftingFinite} we have
\[
  B_2(G_*^{k_n}) \;\leq\; B_2(G_n) \;\leq\;
  B_2(G_*^{k_n}) \;+\; p_{k_n} \log_2 n.
\]
Because $k_n \gg \log\log n$, by Lemma~\ref{lemPkSmall}
the term $p_{k_n} \log_2 n$ vanishes almost surely.
Finally, by Proposition~\ref{propStubSeq}, $B_2(G_*^{k_n}) \to B_2(G_*)$
almost surely. As a result, $B_2(G_n) \to B_2(G_*)$ along the
subsequence indexed by $A$ -- concluding the proof of point~(i).

It remains to show the convergence of all moments of $B_2(G_n)$.
Because we have already proved the convergence in distribution, it is sufficient
to show that for all $m\geq 1$, $\ExpecBrackets{B_2(G_n)^m}$ is bounded: indeed, this
implies that $(B_2(G_n)^m)_{n\geq 1}$ is uniformly integrable
-- which, together with the convergence in distribution, implies the
convergence of moments (see e.g.~\cite[Lemma~5.11]{kallenberg2021foundations}).

Now, note that
\[
  B_2(G_n)^m \;\leq\; \Indic{G_n^{k_n}\neq G_*^{k_n}}(\log_2n)^m \;+\;
  \Indic{G_n^{k_n}= G_*^{k_n}} \big(B_2(G_*^{k_n}) + p_{k_n}\log_2 n\big)^m .
\]
Using $(a+b)^m \leq 2^{m-1}(a^m + b^m)$ for $a, b \geq 0$, 
bounding $\Indic*{\{G_n^{k_n} = G_*^{k_n}\}}$ by~1 and taking expectations,
we get
\[
  \ExpecBrackets{B_2(G_n)^m} \;\leq\; d_{\mathrm{TV}}(T_n^{k_n}, T_*^{k_n}) (\log_2 n)^m
    + 2^{m-1} \big(\ExpecBrackets*{\normalsize}{B_2(G_*^{k_n})^m} +
    \ExpecBrackets*{\normalsize}{p_{k_n}^m} (\log_2 n)^m\big).
\]
Recalling that $\log \log n \ll k_n \ll n^{1/2-\epsilon}$ and that
$d_{\mathrm{TV}}(T_n^{k_n}, T_*^{k_n}) = O(k_n / \sqrt{n})$, we see that
the first term on the right-hand side vanishes as $n\to\infty$. So
does $\ExpecBrackets*{\normalsize}{p_{k_n}^m} (\log_2 n)^m$, since
$p_{k_n}^m \leq p_{k_n}$ and, by Lemma~\ref{lemPkSmall},
$\ExpecBrackets{p_{k_n}} = (1-\eta_0)^{k_n}$ with $(1-\eta_0)<1$.
Finally, {$B_2(G_*^{k_n})\leq B_2(G_*)$}. Therefore, to conclude the
proof it suffices to show that $\ExpecBrackets{B_2(G_*)^m}$ is finite.

Let us fix some notation: let $u_k$ denote the {$k$-th} vertex along the
spine of $T_*$, and let $\Head{u_k}$ be the corresponding network in the
blowup construction of $G_*$. Let $\hat{\xi}_k$ be the number of leaves of
$\Head{u_k}$, and for $i = 1, \ldots, \hat{\xi}_k$ let $q_{k, i}$ be
the probability that the directed random walk started from the root of $G_*$
goes through the {$i$-th} leaf of~$\Head{u_k}$. Note that, because of the
leaf-exchangeability of $\Head{u_k}$, we can assume that for all $k\geq 0$
the spine goes through the first leaf of $\Head{u_k}$, so that $q_{k, 1} = p_{k+1}$.
Finally, let us write $G_{k,i}$ for the finite phylogenetic network consisting
of the $i$-th leaf of $\Head{u_k}$ and all of its descendants.  With this
notation, by the grafting property,
\[
  B_2(G_*) \;=\; \sum_{k\geq 0} \Big( p_k\, B_2(\Head{u_k}) +
  \sum_{i=2}^{\hat{\xi}_k} q_{k,i}\, B_2(G_{k,i}) \Big).
\]
Therefore,
\begin{equation} \label{eqEndProofB2Moments}
  \ExpecBrackets{B_2(G_*)^m} \;\leq\; 2^{m-1} \mleft(\!
      \ExpecBrackets*{\bigg}{\!\Big(\sum_{k\geq 0}p_k\,B_2(\Head{u_k})\!\Big)^m}
    + \ExpecBrackets*{\bigg}{\!\Big(\sum_{k\geq 0}\sum_{i=2}^{\hat{\xi}_k} q_{k,i} B_2(G_{k,i})\Big)^m}
  \mright)
\end{equation}
To bound the first term on the right-hand side, note that since
$p_k^m \leq p_{k}$ and since $p_{k}$ is independent of $\Head{u_k}$,
\[
  \ExpecBrackets*{\big}{p_k^m B_2(\Gamma_{u_k})^m} \;\leq\;
  \ExpecBrackets{p_{k}}\, \ExpecBrackets*{\big}{B_2(\Head{u_k})^m}\,.
\]
Using $\ExpecBrackets{p_{k}} = (1-\eta_0)^{k}$ and
$B_2(\Head{u_k}) \leq \log_2 \hat{\xi}_k$, this yields
\[
  \ExpecBrackets*{\big}{p_k^m B_2(\Gamma_{u_k})^m} \;\leq\;
  (1-\eta_0)^{k} \, \ExpecBrackets{(\log_2 \hat{\xi})^m} \,.
\]
Since $\xi$ has a finite variance, it follows that $\hat{\xi}$ has a finite
mean -- and therefore that
$\ExpecBrackets*{\normalsize}{(\log_2 \hat{\xi})^m} < \infty$.
Thus, letting
$\Norm[m]{\cdot} = \ExpecBrackets{\Abs{\,\cdot\,}^m}^{1/m}$ denote the $L^m$-norm,
\[
  \Norm[m]{p_k\,B_2(\Head{u_k})} \;=\; O((1-\eta_0)^{k/m}) \,.
\]
As a result, by the triangle inequality,
\[
  \Big\|\sum_{k\geq 0}p_kB_2(\Head{u_k})\Big\|_m \;\leq\;\;
  \sum_{k\geq 0}\Norm[m]{p_k\, B_2(\Head{u_k})} \;<\; \infty \,.
\]
Finally, to bound the second term in~\eqref{eqEndProofB2Moments}, note that
$\sum_{k\geq 0} \sum_{i = 2}^{\hat{\xi}_k} q_{k, i} = 1$ almost surely.
Thus, Jensen's inequality gives
\[
  \ExpecBrackets*{\bigg}{\Big(\sum_{k\geq 0}\sum_{i=2}^{\hat{\xi}_k}
      q_{k,i}\, B_2(G_{k,i})\Big)^m} \;\leq\;
  \ExpecBrackets*{\bigg}{\,\sum_{k\geq 0}\sum_{i=2}^{\hat{\xi}_k}
    q_{k,i}\, B_2(G_{k,i})^m} \;=\;
  \ExpecBrackets*{\big}{B_2(G)^m},
\]
where $G$ is a blowup of $T$, the Galton--Watson tree with offspring
distribution $\xi$ (the last equality follows from the fact that the $G_{k, i}$'s are
independent of the $q_{k, i}$'s and are all distributed as $G$).
Thus, to finish the proof it suffices to note that, since
$B_2(G) \leq \log_2 \Abs{T}$, where $\Abs{T}$ denotes the number of
leaves of $T$, we have 
\[
  \ExpecBrackets*{\big}{B_2(G)^m} \;\leq\; \ExpecBrackets*{\big}{(\log_2\Abs{T})^m}.
\]
Indeed, it is classic (see also Lemma~\ref{lemEstimatesForLeaves}) that
$\Prob{\Abs{T} \geq n}=\Theta(n^{-1/2})$, which
implies that $(\log_2\Abs{T})^m$ is integrable. This concludes the proof.
\end{proof}

Let us close this appendix by listing further properties of the $B_2$
index of blowups of trees. Although some of these properties are not
related to Galton--Watson trees, the reason why we list them here
is that proving them requires some of the vocabulary and notation of this
Appendix, and that they all follow readily from the following
elementary proposition.

\begin{proposition} \label{propExpectedValueB2Blowup}
Let $\mathbf{t}$ be a fixed tree, and let $G$ be the blowup of $\,\mathbf{t}$
with respect to some family of random networks~$\nu = (\nu_k)_{k\geq 1}$. Let
$X$ denote the directed random walk on $\mathbf{t}$, and for all
$v \in \mathbf{t}$, let $p_v = \Prob{v\in X}$. Then,
\[
  \ExpecBrackets{B_2(G)} \;=\; \sum_{v \in \mathbf{t}} p_v f(d^{+}(v)) \;=\;
  \ExpecBrackets{\sum_{t\geq 0} f(d^{+}(X_t))} \, ,
\]
where $d^+(v)$ denotes the outdegree of $v$ in $\mathbf{t}$, and
$f \colon k \mapsto \ExpecBrackets{B_2(\Head{k})}$, with $\Head{k}\sim \nu_k$
and the convention $f(0) = 0$.
\end{proposition}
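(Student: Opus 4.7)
The plan is to use the grafting property (Proposition \ref{propGraftingInfinite}) to decompose $B_2(G)$ as a sum of contributions indexed by the internal vertices of $\mathbf{t}$, and then to exploit the independence structure of the blowup together with the leaf-exchangeability of $\nu$ to evaluate the expectation term by term.

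First, I would fix a realization of $G$ and, by recursively applying the grafting property at each internal vertex of $\mathbf{t}$, establish
\[
  B_2(G) \;=\; \sum_{v \in \mathbf{t}} p_v^G \, B_2(\Head{v}),
\]
where $p_v^G$ is the probability (conditional on $G$) that the directed random walk on $G$ reaches the root of $\Head{v}$, with the convention that $\Head{v}$ is trivial (so $B_2(\Head{v}) = 0$) whenever $v$ is a leaf of $\mathbf{t}$. For finite $\mathbf{t}$ this follows by induction on the number of internal vertices, applying grafting once at the root and recursing into each subtree. For infinite $\mathbf{t}$, I would truncate to the subtree $\mathbf{t}_n$ of vertices of height at most $n$: the corresponding blowup $G_n$ is then a stub of $G$, and Proposition \ref{propStubSeq} together with monotone convergence lets me pass to the limit.

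Next, I would take expectations. For each $v$, the variable $p_v^G$ depends only on the networks $\Head{w}$ and their leaf matchings for $w$ a strict ancestor of $v$; in particular $p_v^G$ is independent of $\Head{v}$. Hence
\[
  \Expec{B_2(G)} \;=\; \sum_{v \in \mathbf{t}} \Expec{p_v^G} \, f(d^{+}(v)).
\]
The heart of the argument is to show that $\Expec{p_v^G} = p_v$. Writing the path from the root to $v$ as $v_0, v_1, \ldots, v_n = v$, I decompose $p_v^G = Q_0 Q_1 \cdots Q_{n-1}$, where $Q_i$ is the probability, given $\Head{v_i}$ and its leaf matching, that the walk on $\Head{v_i}$ started at its root reaches the specific leaf glued to $v_{i+1}$. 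The $Q_i$ are independent across $i$, and leaf-exchangeability of $\nu_{d^{+}(v_i)}$ combined with the uniform random matching gives $\Expec{Q_i} = 1/d^{+}(v_i)$: the $d^{+}(v_i)$ leaf-reaching probabilities of $\Head{v_i}$ are exchangeable and sum to $1$, and the matched index is uniform and independent of $\Head{v_i}$. Multiplying yields $\Expec{p_v^G} = \prod_{i=0}^{n-1} 1/d^{+}(v_i) = p_v$, as claimed.

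Finally, the second equality is a routine exchange of sum and expectation: because $\mathbf{t}$ is a tree, the walk $X$ visits each vertex at most once, so $\Indic{v \in X} = \sum_{t \geq 0} \Indic{X_t = v}$, and Fubini gives $\sum_v p_v f(d^{+}(v)) = \Expec{\sum_{t \geq 0} f(d^{+}(X_t))}$. I expect the main obstacle to be the justification of the grafting decomposition in the case of an infinite $\mathbf{t}$ (which requires controlling convergence of $B_2(G_n)$ to $B_2(G)$ along a stub sequence and making sense of an a.s.\ possibly infinite sum); the remainder of the proof is an application of independence, exchangeability, and elementary manipulations.
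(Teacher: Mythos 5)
Your argument is correct and is essentially the paper's proof in unrolled form: the paper applies the grafting property once at the root, uses the very same independence and leaf-exchangeability facts to obtain the recursion $\Expec{B_2(G)} = f(d^{+}(\rho)) + \frac{1}{d^{+}(\rho)}\sum_{i}\Expec{B_2(G_i)}$, and concludes by induction, which when iterated yields exactly your identity $\Expec{B_2(G)} = \sum_{v}\Expec*{\big}{p_v^G}\,f(d^{+}(v))$ with $\Expec*{\big}{p_v^G} = p_v$. Your explicit handling of infinite $\mathbf{t}$ via stub sequences and monotone convergence is a welcome addition that the paper's terse ``follows by induction'' leaves implicit.
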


\begin{proof}
Let $\rho$ denote the root of $\mathbf{t}$ and, for conciseness, write
$\delta \defas d^+(\rho)$ for its outdegree. Let $\Head{\rho} \sim \nu_{\delta}$
denote the random network associated to $\rho$ in the blowup construction of $G$.
Finally, let $\mathbf{t}_1, \ldots, \mathbf{t}_\delta$ denote the subtrees of
$\mathbf{t}$ subtended by the children of $\rho$, and let
$G_1, \ldots, G_\delta$ be the corresponding blowups with respect to $\nu$.

Since $G$ is obtained by grafting $G_1, \ldots, G_\delta$ on the leaves of
$\Head{\rho}$, by the grafting property we have
\begin{equation} \label{eqProofExpectedValueB2Blowup01}
  B_2(G) \;=\; B_2(\Head{\rho}) \;+\;
  \sum_{i = 1}^{\delta} q_\rho(i) \, B_2(G_i) \,,
\end{equation}
where $q_\rho(i)$ denotes the probability that the directed random walk on
$\Head{\rho}$ ends it its {$i$-th} leaf. Note that $q_\rho$ is a random
probability distribution on $\Set{1, \ldots, \delta}$, and that it follows from
the definition of blowups that:
\begin{mathlist}
  \item being a deterministic function of $\Head{\rho}$,
    $q_\rho$ is independent of $(G_1, \ldots, G_\delta)$;
  \item by the leaf-exchangeability of $\Head{\rho}$, $q_\rho$ is exchangeable.
\end{mathlist}
As a result, taking expectations in \eqref{eqProofExpectedValueB2Blowup01}
we get
\[
  \ExpecBrackets{B_2(G)} \;=\; f(\delta) \;+\;
  \frac{1}{\delta} \sum_{i = 1}^{\delta} \ExpecBrackets{B_2(G_i)} \,,
\]
and $\ExpecBrackets{B_2(G)} = \sum_{v \in \mathbf{t}} p_v f(d^{+}(v))$ follows by
induction. Finally, to see that this is also
$\ExpecBrackets{\sum_{t\geq 0} f(d^{+}(X_t))}$, it suffices to note that
$p_v = \sum_{t\geq 0} \Prob{X_t = v}$ for any internal vertex $v$, and that
$\sum_{v \in \mathbf{t}} \Indic{X_t = v} = 1$ a.s.\ for all $t\geq 0$.
\end{proof}

By ``not blowing up'' the base tree $\mathbf{t}$ in
Proposition~\ref{propExpectedValueB2Blowup} (i.e.\ by using star trees for the
networks by which internal vertices are replaced in the blowup, so that the
tree is left unchanged), we immediately get the following simple expression for
the $B_2$ index of a tree. This expression does not seem to have
been pointed out previously in the literature.

\begin{corollary} \label{corAlternativeFormulaB2Tree}
For any tree $\mathbf{t}$, we have
\[
  B_2(\mathbf{t}) \;=\;
  \sum_{v \in \mathbf{t}} p_v \log_2 d^+(v) \,, 
\]
where $d^+(v)$ denotes the outdegree of $v$ in $\mathbf{t}$, and
$p_v = \big(\prod_u d^{+}(u)\big)^{-1}$, where the product runs over the
vertices on the path from the root of $\mathbf{t}$ to $v$, excluding $v$.
\end{corollary}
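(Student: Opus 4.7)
The plan is to deduce the corollary from Proposition~\ref{propExpectedValueB2Blowup} by specializing the blowup so that the tree $\mathbf{t}$ remains unchanged. Specifically, I would take $\nu_k$ to be (concentrated on) the ``star'' with $k$ labeled leaves, i.e.\ a single root of out-degree $k$ whose children are all leaves. Each such $\nu_k$ is trivially leaf-exchangeable, and the blowup $G$ of $\mathbf{t}$ with respect to this $\nu$ is (deterministically) equal to $\mathbf{t}$ itself: each internal vertex $v\in\mathbf{t}$ is replaced by a star with $d^+(v)$ leaves, whose leaves are then identified with the children of~$v$, and this operation reconstructs $\mathbf{t}$ up to isomorphism.

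Next, I would compute the function $f$ appearing in Proposition~\ref{propExpectedValueB2Blowup}. On the star with $k$ leaves, the directed random walk picks one of the $k$ leaves uniformly at random in a single step, inducing the uniform distribution on those $k$ leaves, whose Shannon entropy is $\log_2 k$. Hence $f(k)=\log_2 k$ for $k\geq 1$, and the convention $f(0)=0$ built into the proposition means that leaves of $\mathbf{t}$ (for which $d^+(v)=0$) simply contribute nothing to the sum.

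Finally, I would verify the product formula for $p_v$. In a tree the directed random walk follows a unique directed path from the root, so it visits~$v$ if and only if at every strict ancestor $u$ of $v$ it selects the unique child of $u$ lying on the root-to-$v$ path, an event of probability $1/d^+(u)$; multiplying these probabilities along the path yields the claimed formula. Substituting $f(d^+(v))=\log_2 d^+(v)$ and this expression for $p_v$ into Proposition~\ref{propExpectedValueB2Blowup} then gives the identity $B_2(\mathbf{t})=\sum_v p_v\log_2 d^+(v)$.

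There is no substantive obstacle here: the corollary is essentially an immediate specialization of the previous proposition to the trivial (deterministic, star-shaped) blowup. The only delicate points are purely notational, namely confirming that a deterministic star indeed qualifies as a leaf-exchangeable distribution in the sense of Section~\ref{secBlowUps}, and handling leaves of $\mathbf{t}$ through the $f(0)=0$ convention so that the sum over \emph{all} $v\in\mathbf{t}$ (and not just the internal ones) makes sense.
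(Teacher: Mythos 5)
Your proposal is correct and is exactly the paper's argument: the text preceding the corollary derives it by ``not blowing up'' $\mathbf{t}$, i.e.\ taking each $\nu_k$ to be the deterministic star with $k$ labeled leaves so that $G=\mathbf{t}$, $f(k)=\log_2 k$, and Proposition~\ref{propExpectedValueB2Blowup} immediately yields the formula. Your additional checks (leaf-exchangeability of the deterministic star, the $f(0)=0$ convention for leaves, and the product expression for $p_v$) are all sound.
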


Because $f(k) \leq \log_2 k$, Proposition~\ref{propExpectedValueB2Blowup}
also has the following corollary.

\begin{corollary} \label{corIneqB2Blowup}
Let $\mathbf{t}$ be a fixed tree, and let $G$ be a blowup of $\,\mathbf{t}$.
Then,
\[
  \ExpecBrackets{B_2(G)} \; \leq\; B_2(\mathbf{t}) \,.
\]
\end{corollary}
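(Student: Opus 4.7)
The plan is to combine Proposition~\ref{propExpectedValueB2Blowup} with Corollary~\ref{corAlternativeFormulaB2Tree}, using only the elementary maximum-entropy bound that the Shannon entropy of a probability distribution supported on at most $k$ atoms is at most $\log_2 k$.

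First, I would apply Proposition~\ref{propExpectedValueB2Blowup} to write
\[
  \Expec{B_2(G)} \;=\; \sum_{v \in \mathbf{t}} p_v\, f(d^{+}(v))\,,
\]
where $f(k) = \Expec{B_2(\Head{k})}$ with $\Head{k}\sim \nu_k$, and $f(0)=0$. Since each $\Head{k}$ is a finite phylogenetic network with $k$ leaves, its $B_2$ index is the Shannon entropy of a probability distribution supported on (at most) $k$ atoms, and is therefore bounded above by $\log_2 k$. Taking expectations, $f(k)\leq \log_2 k$ for every $k\geq 1$; and for $k=0$ the convention $f(0)=0$ matches the corresponding term $\log_2 1 = 0$ that appears when applying Corollary~\ref{corAlternativeFormulaB2Tree} to a leaf of $\mathbf{t}$ (which, in the non-blown-up case, is replaced by a trivial star). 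Thus, pointwise on $\mathbf{t}$,
\[
  f(d^{+}(v)) \;\leq\; \log_2 d^{+}(v)
\]
with the same convention at leaves in both sides.

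Substituting this inequality and invoking Corollary~\ref{corAlternativeFormulaB2Tree} to identify the resulting upper bound, I would conclude
\[
  \Expec{B_2(G)} \;\leq\; \sum_{v \in \mathbf{t}} p_v\, \log_2 d^{+}(v) \;=\; B_2(\mathbf{t})\,.
\]
There is really no obstacle here: the result is a one-line consequence of the previous proposition and corollary, and the only content is the observation that the blowup operation at an internal vertex of degree $k$ can never contribute more entropy on average than the uniform distribution on $k$ atoms.
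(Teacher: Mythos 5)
Your argument is correct and is exactly the paper's intended proof: the paper derives this corollary from Proposition~\ref{propExpectedValueB2Blowup} by the same observation that $f(k) \leq \log_2 k$ (the entropy of the leaf distribution of a $k$-leaf network is at most $\log_2 k$), combined with the expression $B_2(\mathbf{t}) = \sum_{v} p_v \log_2 d^+(v)$ of Corollary~\ref{corAlternativeFormulaB2Tree}. Nothing to add.
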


Note however that the inequality in Corollary~\ref{corIneqB2Blowup} only holds
after integrating with respect to the blowup procedure. In particular,
if $T$ is a random tree and $G$ is a blowup of $T$, then $B_2(T)$ does not
necessarily stochastically dominate $B_2(G)$. To see this, let $T = \mathbf{t}$
be deterministic, and note that it is then possible to have
$B_2(G) > B_2(\mathbf{t})$. This implies that, in that case, there is not
monotone coupling of $B_2(G)$ and $B_2(T)$.

Nevertheless, as the next corollary shows, $B_2(T)$ is second-order
stochastically dominant over $B_2(G)$. Very loosely speaking, this means that --
in addition to being no smaller in expectation -- $B_2(T)$ is more predictable
than $B_2(G)$.

\begin{corollary}\label{corStochasticDominance}
Let $T$ be a tree, and let $G$ be a blowup of $T$.
Then, $B_2(T)$ is second-order stochastically dominant over $B_2(G)$, that is:
for every nondecreasing concave function $\phi$,
\[
  \ExpecBrackets{\phi(B_2(G))} \; \leq\; \ExpecBrackets{\phi(B_2(T))} \,.
\]
\end{corollary}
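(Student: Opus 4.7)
The plan is to combine the conditional version of Corollary~\ref{corIneqB2Blowup} with the (conditional) Jensen inequality, which is precisely the classical route for obtaining second-order stochastic dominance from an inequality in conditional expectation.

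First I would observe that Proposition~\ref{propExpectedValueB2Blowup} actually yields the pointwise (in $\mathbf{t}$) identity
\[
  \Expec{B_2(G)\given T = \mathbf{t}} \;=\; \sum_{v \in \mathbf{t}} p_v\, f(d^+(v))\,,
\]
and hence, since $f(k)\le \log_2 k$ and by Corollary~\ref{corAlternativeFormulaB2Tree}, the bound $\Expec{B_2(G)\given T=\mathbf{t}}\le B_2(\mathbf{t})$. Integrating this conditional bound against the law of $T$ (or equivalently applying it on the event $\{T=\mathbf{t}\}$ for every $\mathbf{t}$ in the support of $T$), we obtain the almost sure inequality
\[
  \Expec{B_2(G)\given T}\;\leq\; B_2(T).
\]

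Next, let $\phi$ be any nondecreasing concave function. By the conditional Jensen inequality applied on the sigma-field generated by $T$,
\[
  \Expec{\phi(B_2(G))\given T}\;\leq\; \phi\bigl(\Expec{B_2(G)\given T}\bigr).
\]
Using the previous displayed inequality together with the monotonicity of $\phi$ gives
\[
  \phi\bigl(\Expec{B_2(G)\given T}\bigr)\;\leq\; \phi(B_2(T)),
\]
so that $\Expec{\phi(B_2(G))\given T}\le \phi(B_2(T))$ almost surely. Taking expectations on both sides yields the desired inequality $\Expec{\phi(B_2(G))}\le \Expec{\phi(B_2(T))}$.

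There is essentially no hard step here: the only subtlety worth spelling out is that Corollary~\ref{corIneqB2Blowup} is stated for a deterministic base tree, and we need its conditional version; but this is immediate from the fact that, conditional on $T=\mathbf{t}$, the blowup $G$ is distributed exactly as in Proposition~\ref{propExpectedValueB2Blowup} because the family $\Gamma$ used to build the blowup is independent of $T$. Some minor care is needed if $\phi(B_2(T))$ is not integrable, but in that case the inequality is trivial, and otherwise all conditional expectations above are well-defined.
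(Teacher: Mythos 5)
Your proof is correct and follows essentially the same route as the paper: the conditional form of Corollary~\ref{corIneqB2Blowup} (i.e.\ $\Expec{B_2(G)\given T}\leq B_2(T)$ almost surely), combined with the conditional Jensen inequality and the monotonicity of $\phi$. Your explicit justification of the conditional version via Proposition~\ref{propExpectedValueB2Blowup} and the independence of $\Gamma$ from $T$ is a welcome clarification but does not change the argument.
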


\begin{proof}
Let $\phi$ be a nondecreasing concave function. First, by
Corollary~\ref{corIneqB2Blowup}, we have $\ExpecBrackets{B_2(G) \given T} \leq B_2(T)$,
and therefore
\[
  \phi\big(\ExpecBrackets{B_2(G) \given T}\big) \;\leq\; \phi(B_2(T)) \,.
\]
As a result, by Jensen's inequality,
\[
  \ExpecBrackets*{\big}{\phi(B_2(G))} \;=\;
  \ExpecBrackets*{\big}{\ExpecBrackets{\phi(B_2(G)) \given T}} \;\leq\;
  \ExpecBrackets*{\big}{\phi\big(\ExpecBrackets{B_2(G) \given T}\big)} \;\leq\;
  \ExpecBrackets*{\big}{\phi(B_2(T))}\,,
\]
concluding the proof.
\end{proof}

Finally, note that Proposition~\ref{propExpectedValueB2Blowup} yields a simple derivation
of the expression that was already given in Theorem~\ref{thmExpecB2BlowUpGW}
for the expected value of the $B_2$ index of blowups of Galton--Watson trees.
Before detailing this, let us briefly compare the two approaches:
recall that the proof of Theorem~\ref{thmExpecB2BlowUpGW} used the recursive
structure of Galton--Watson trees (i.e.\ the branching property) together with
the independence between the base tree and the networks used in the blowup
procedure to get a distributional equation for $B_2(G)$. In a way,
Proposition~\ref{propExpectedValueB2Blowup} isolates the part of the
proof of Theorem~\ref{thmExpecB2BlowUpGW} where the independence between
the base tree and the networks is used, and in the proof below
we use the branching property to recover the expression of
Theorem~\ref{thmExpecB2BlowUpGW}; what justifies presenting
the two proofs is that the way the branching
property is used is different.

\begin{corollary} \label{corExpectedValueB2GaltonWatson}
Let $T$ be a Galton--Watson tree with offspring distribution $\xi$
and let $G$ be a blowup of $T$ with respect to a
family of networks $\nu = (\nu_k)_{k\geq 1}$. Then, 
\[
  \ExpecBrackets{B_2(G)} \;=\; \frac{\ExpecBrackets{f(\xi)}}{\Prob{\xi = 0}} \,, 
\]
where $f(0) = 0$ and, for $k\geq 1$, $f(k) = \ExpecBrackets{B_2(\Head{k})}$,
where $\Head{k} \sim \nu_k$, and with the convention $0/0 = 0$.
\end{corollary}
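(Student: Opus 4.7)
The plan is to combine Proposition~\ref{propExpectedValueB2Blowup} with the branching property of Galton--Watson trees. Applying Proposition~\ref{propExpectedValueB2Blowup} conditionally on $T$ (noting that the statement clearly intends $\Expec{B_2(G)}$ rather than $\Expec{B_2(T)}$, as in Theorem~\ref{thmExpecB2BlowUpGW}) and then integrating over $T$ yields
\[
  \Expec{B_2(G)} \;=\; \Expec*{\bigg}{\sum_{t \geq 0} f(d^+(X_t))}\,,
\]
where $X = (X_t)_{t \geq 0}$ is the directed random walk on~$T$ started from the root, and where the sum is understood to run over those times $t$ at which $X_t$ is defined.

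The key step is to exploit the branching property to re-describe the sequence $\xi_t \defas d^+(X_t)$. Conditional on a vertex having out-degree $k \geq 1$, the subtrees rooted at its children are i.i.d.\ copies of $T$; and since the walk selects its next vertex uniformly among these children, a straightforward induction on $t$ shows that $(\xi_t)$ is an i.i.d.\ sequence of copies of $\xi$ absorbed at the first time $N \defas \inf\{t \geq 0 : \xi_t = 0\}$ (the time at which the walk reaches a leaf of $T$). In particular, the event $\{X_t \text{ is defined}\} = \{N \geq t\}$ depends only on $\xi_0, \ldots, \xi_{t-1}$, is therefore independent of $\xi_t$, and has probability $(1-\eta_0)^t$, writing $\eta_0 \defas \Prob{\xi = 0}$.

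What remains is a routine Fubini plus geometric-sum calculation. Since $f$ is nonnegative, Tonelli's theorem combined with the independence noted above gives
\[
  \Expec{B_2(G)} \;=\; \sum_{t \geq 0} \Expec*{\big}{f(\xi_t)\,\Indic{N \geq t}} \;=\; \Expec{f(\xi)} \sum_{t \geq 0} (1-\eta_0)^t \;=\; \frac{\Expec{f(\xi)}}{\eta_0}\,,
\]
which is the stated formula. The only mildly delicate point is the degenerate case $\eta_0 = 0$, where $T$ is almost surely infinite and the walk never terminates; there, the identity holds in $[0, \infty]$ with the stated convention $0/0 = 0$, since if $\Expec{f(\xi)} = 0$ then $f(\xi) = 0$ almost surely and both sides vanish, while otherwise both sides equal $+\infty$.
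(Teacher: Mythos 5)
Your proof is correct and follows essentially the same route as the paper's: reduce to $\ExpecBrackets{\sum_{t\geq 0} f(d^{+}(X_t))}$ via Proposition~\ref{propExpectedValueB2Blowup}, use the branching property to see that the out-degrees along the walk form an i.i.d.\ sequence of copies of $\xi$ killed at the first zero, and sum the geometric series (the paper packages this last step as Wald's identity applied to a geometric number of copies of $(\xi \mid \xi>0)$, which is the same computation), with the degenerate case $\Prob{\xi=0}=0$ handled separately in both arguments. You are also right that the left-hand side should read $\ExpecBrackets{B_2(G)}$; the paper's own statement and proof carry the same typo.
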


\begin{proof}
First, let us deal with degenerate cases: when $\Prob{\xi = 1} = 1$,
the tree $T$ is an infinite path, and therefore $G$ has exactly one end --
which implies $B_2(G) = 0$. Since $f(1) = 0$, the proposition holds with the
convention $0/0 = 0$. In the case where $\xi$ is not almost surely equal to~1
and where $\Prob{\xi = 0} = 0$, we have $B_2(G) = +\infty$ and the proposition
also holds.

Now, assume that $\Prob{\xi =0}> 0$. Note that, by the branching property,
\[
  \sum_{t\geq 0} f(d^{+}(X_t)) \;\mathrel{\overset{d}{=}}\;
  \sum_{i = 1}^{N} f(\xi^+_i)
\]
$(\xi_i^+)_{i\geq 1}$ are i.i.d.\ copies of
$(\xi \mid \xi > 0)$, the variable $\xi$ conditioned to be positive, and
where $N$ is a geometric variable on $\Set{0, 1, \ldots}$
with parameter $\eta_0 \defas \Prob{\xi = 0}$ that is independent of
$(\xi_i^+)_{i\geq 1}$. As a result, by Wald's formula,
\[
  \ExpecBrackets{B_2(T)} \;=\; \ExpecBrackets{N}\, \ExpecBrackets*{\normalsize}{f(\xi^+)} \;=\;
  \frac{1 - \eta_0}{\eta_0}\cdot \frac{\ExpecBrackets*{\normalsize}{f(\xi) \Indic{\xi > 0}}}{1 - \eta_0} 
  \;=\; \frac{\ExpecBrackets{f(\xi)}}{\eta_0}\,, 
\]
because $f(0) = 0$. This concludes the proof.
\end{proof}

\end{document}